\documentclass[11pt]{article}
\usepackage{authblk}
\usepackage[toc,page]{appendix}
\usepackage[top=2cm, bottom=2cm, left=2cm, right=2cm]{geometry}

\usepackage{color}
\usepackage{helvet}         
\usepackage{iftex}
\ifPDFTeX
    \usepackage{courier}    
\fi
\usepackage{type1cm}        

\usepackage{framed} 
\usepackage{tikz}
\usepackage{makeidx}         
\usepackage{graphicx}        
\usepackage{multicol}        
\usepackage[bottom]{footmisc}

\usepackage{amsmath}
\usepackage{amssymb}
\usepackage{bbold}
\usepackage{amsthm}
\usepackage{subcaption}
\usepackage{sidecap}
\usepackage{floatrow}
\usepackage{pdflscape}
\usepackage{comment}
\usepackage[font=small]{caption}
\usepackage{enumitem}
\usepackage{esint}

\usepackage[hidelinks]{hyperref}

\usepackage{scalerel}
\usepackage{shuffle}
\usepackage{mathrsfs}
\usepackage{mathtools}

\mathtoolsset{showonlyrefs,showmanualtags}

\newtheorem{theorem}{Theorem}

\newtheorem{lemma}[theorem]{Lemma}

\newtheorem{proposition}[theorem]{Proposition}

\theoremstyle{definition}

\theoremstyle{remark}

\newtheorem{remark}[theorem]{\bf Remark}
\newtheorem{example}[theorem]{\bf Example}

\numberwithin{theorem}{section}
\numberwithin{figure}{section}
\numberwithin{equation}{section}

\begin{document}
\title{Quasi-exponentials and semi-classical limits for the chordal BPZ system}
\bigskip{}
\author[1]{Mo Chen\thanks{\texttt{chenmothuprobab@gmail.com. ORCID:0900-0000-6574-4378}}}
\author[1]{Hao Wu\thanks{\texttt{hao.wu.proba@gmail.com. ORCID:0000-0003-4265-7417}}}
\affil[1]{Tsinghua University, China}

%
%
\date{}

\newcommand{\Ham}{\mathcal{H}}
\newcommand{\Prad}[1]{\mathsf{P}_{#1\mathrm{\textnormal{-}rad}}}
\newcommand{\Erad}[1]{\mathsf{E}_{#1\mathrm{\textnormal{-}rad}}}
\newcommand{\LZrad}[1]{\mathcal{Z}_{#1\mathrm{\textnormal{-}rad}}}
\newcommand{\LSrad}[1]{\mathsf{S}_{#1\mathrm{\textnormal{-}rad}}}
\newcommand{\LFrad}[1]{\mathcal{F}_{#1\mathrm{\textnormal{-}rad}}}

\newcommand{\Pchord}{\mathsf{P}}
\newcommand{\Echord}{\mathsf{E}}
\newcommand{\LSchord}[1]{\mathsf{S}_{#1\mathrm{\textnormal{-}chord}}}
\newcommand{\LFchord}[1]{\mathcal{F}_{#1\mathrm{\textnormal{-}chord}}}

\newcommand{\blm}{\mathfrak{m}}
\newcommand{\LZmix}{\mathcal{Z}_{\bs{s};\alpha}^{(p; \mu)}}
\newcommand{\LZmixp}{\mathcal{Z}_{\bs{s};\alpha}^{(p; \mu_p)}}
\newcommand{\Cmixp}{C_{\bs{s};\alpha}^{(p; \mu_p)}}
\newcommand{\LZmixpminus}{\mathcal{Z}_{\bs{s};\alpha}^{(p; -\mu_p)}}
\newcommand{\Cmixpminus}{C_{\bs{s};\alpha}^{(p; -\mu_p)}}
\newcommand{\LZmixhat}{\mathcal{Z}_{\bs{s};\hat{\alpha}}^{(p; \mu)}}
\newcommand{\LZalphaCR}{\mathcal{Z}_{\alpha{\textnormal{-}}\mathrm{cr}}^{(\ell;\mu)}}
\newcommand{\CalphaCR}{C_{\alpha{\textnormal{-}}\mathrm{cr}}^{(\ell;\mu)}}
\newcommand{\LZalphaCRhat}{\mathcal{Z}_{\hat{\alpha}{\textnormal{-}}\mathrm{cr}}^{(\ell;\mu)}}
\newcommand{\LZalphahcap}{\mathcal{Z}_{\alpha{\textnormal{-}}\hcap}^{(\lambda)}}
\newcommand{\LZalphahcaphat}{\mathcal{Z}_{\hat{\alpha}{\textnormal{-}}\hcap}^{(\lambda)}}

\newcommand{\Rat}{\operatorname{Rat}}
\newcommand{\QExp}{\operatorname{QExp}}
\newcommand{\QExpr}{\operatorname{QExp}_{\mathfrak{r}}}
\global\long\def\U{\mathbb{U}}
\global\long\def\T{\mathbb{T}}
\global\long\def\HH{\mathbb{H}}
\global\long\def\R{\mathbb{R}}
\global\long\def\C{\mathbb{C}}
\global\long\def\N{\mathbb{N}}
\global\long\def\Z{\mathbb{Z}}
\global\long\def\E{\mathbb{E}}
\global\long\def\PP{\mathbb{P}}
\global\long\def\QQ{\mathbb{Q}}
\global\long\def\A{\mathbb{A}}
\global\long\def\one{\mathbb{1}}

\global\long\def\CR{\mathrm{CR}}
\global\long\def\ST{\mathrm{ST}}
\global\long\def\SF{\mathrm{SF}}
\global\long\def\cov{\mathrm{cov}}
\global\long\def\dist{\mathrm{dist}}
\global\long\def\SLE{\mathrm{SLE}}
\global\long\def\hSLE{\mathrm{hSLE}}
\global\long\def\CLE{\mathrm{CLE}}
\global\long\def\GFF{\mathrm{GFF}}
\global\long\def\inte{\mathrm{int}}
\global\long\def\ext{\mathrm{ext}}
\global\long\def\inrad{\mathrm{inrad}}
\global\long\def\outrad{\mathrm{outrad}}
\global\long\def\dimH{\mathrm{dim}}
\global\long\def\capa{\mathrm{cap}}
\global\long\def\diam{\mathrm{diam}}
\global\long\def\sign{\mathrm{sgn}}
\global\long\def\cat{\mathrm{Cat}}
\global\long\def\cst{\mathrm{C}}
\global\long\def\ck{\mathrm{C}_{\kappa}}
\global\long\def\free{\mathrm{free}}
\global\long\def\hF{{}_2\mathrm{F}_1}
\global\long\def\simple{\mathrm{simple}}
\global\long\def\even{\mathrm{even}}
\global\long\def\odd{\mathrm{odd}}
\global\long\def\st{\mathrm{ST}}
\global\long\def\usf{\mathrm{USF}}
\global\long\def\Leb{\mathrm{Leb}}
\global\long\def\LP{\mathrm{LP}}
\global\long\def\I{\mathrm{I}}
\global\long\def\K{\mathrm{K}}
\global\long\def\J{\mathrm{J}}
\global\long\def\Y{\mathrm{Y}}
\global\long\def\Cat{\mathrm{Cat}}

\global\long\def\II{\mathrm{II}}
\global\long\def\hcap{\mathrm{hcap}}
\global\long\def\const{\mathrm{C}}
\global\long\def\Poisson{\mathrm{H}}
\global\long\def\Green{\mathrm{G}}

\global\long\def\LA{\mathcal{A}}
\global\long\def\LB{\mathcal{B}}
\global\long\def\LC{\mathcal{C}}
\global\long\def\LD{\mathcal{D}}
\global\long\def\LF{\mathcal{F}}
\global\long\def\LK{\mathcal{K}}
\global\long\def\LE{\mathcal{E}}
\global\long\def\LG{\mathcal{G}}
\global\long\def\LI{\mathcal{I}}
\global\long\def\LJ{\mathcal{J}}
\global\long\def\LL{\mathcal{L}}
\global\long\def\LM{\mathcal{M}}
\global\long\def\LN{\mathcal{N}}
\global\long\def\OO{\mathcal{O}}
\global\long\def\LQ{\mathcal{Q}}
\global\long\def\LR{\mathcal{R}}
\global\long\def\LT{\mathcal{T}}
\global\long\def\LS{\mathcal{S}}
\global\long\def\LU{\mathcal{U}}
\global\long\def\LV{\mathcal{V}}
\global\long\def\LW{\mathcal{W}}
\global\long\def\LX{\mathcal{X}}
\global\long\def\LY{\mathcal{Y}}
\global\long\def\PartF{\mathcal{Z}}
\global\long\def\LH{\mathcal{H}}
\global\long\def\LJ{\mathcal{J}}

\global\long\def\blm{\mathfrak{m}}

\global\long\def\LZ{\mathcal{Z}}
\global\long\def\LZrp{\mathcal{Z}_{\alpha; \bs{s}}^{(p)}}
\global\long\def\LJrp{\mathcal{J}_{\alpha; \bs{s}}^{(p)}}
\global\long\def\chamberrp{\chamber_{\alpha; \bs{s}}^{(p)}}
\global\long\def\LErp{\mathcal{E}_{\alpha; \bs{s}}^{(p)}}
\global\long\def\Greenrp{G_{\alpha; \bs{s}}^{(p)}}
\global\long\def\Prp{P_{\alpha; \bs{s}}^{(p)}}
\global\long\def\norcst{\mathrm{C}_{\kappa}^{(\mathfrak{r})}}
\global\long\def\LZalphar{\mathcal{Z}_{\alpha}^{(\mathfrak{r})}}

\global\long\def\coulomb{\LH}
\global\long\def\auxcoulomb{\hat{\coulomb}}
\global\long\def\coulombGas{\LF}
\global\long\def\coulombnew{\LK}
\global\long\def\coulombLine{\LG}
\global\long\def\kfunc{p}

\global\long\def\eps{\epsilon}
\global\long\def\ov{\overline}
\global\long\def\QQrp{\QQ_{\alpha; \bs{s}}^{(p)}}

\global\long\def\bn{\mathbf{n}}
\global\long\def\MR{MR}
\global\long\def\cond{\,|\,}
\global\long\def\la{\langle}
\global\long\def\ra{\rangle}
\global\long\def\tree{\Upsilon}
\global\long\def\prob{\mathbb{P}}
\global\long\def\hm{\mathrm{Hm}}
%

\global\long\def\Im{\operatorname{Im}}
\global\long\def\Re{\operatorname{Re}}

\global\long\def\ud{\mathrm{d}}
\global\long\def\pder#1{\frac{\partial}{\partial#1}}
\global\long\def\pdder#1{\frac{\partial^{2}}{\partial#1^{2}}}
\global\long\def\pddder#1{\frac{\partial^{3}}{\partial#1^{3}}}
\global\long\def\der#1{\frac{\ud}{\ud#1}}

\global\long\def\bZnn{\mathbb{Z}_{\geq 0}}
\global\long\def\bZpos{\mathbb{Z}_{> 0}}
\global\long\def\bZneg{\mathbb{Z}_{< 0}}

\global\long\def\Vfunc{\LG}
\global\long\def\gfunc{g^{(\rr)}}
\global\long\def\hfunc{h^{(\rr)}}

\global\long\def\SimplexInt{\rho}
\global\long\def\CubeInt{\widetilde{\rho}}

\global\long\def\ii{\mathfrak{i}}
\global\long\def\ee{\mathrm{e}}
\global\long\def\rr{\mathfrak{r}}
\global\long\def\chamber{\mathfrak{X}}
\global\long\def\Wchamber{\mathfrak{W}}

\global\long\def\SimplexIntKappa8{\SimplexInt}

\global\long\def\nested{\boldsymbol{\underline{\Cap}}}
\global\long\def\unnested{\boldsymbol{\underline{\cap\cap}}}
\global\long\def\unnested{\boldsymbol{\underline{\cap\cap}}}

\global\long\def\acycle{\vartheta}
\global\long\def\bcycle{\tilde{\acycle}}

\global\long\def\metric{\mathrm{dist}}

\global\long\def\adj#1{\mathrm{adj}(#1)}

\global\long\def\bs{\boldsymbol}

\global\long\def\edge#1#2{\langle #1,#2 \rangle}
\global\long\def\graph{G}

\newcommand{\conn}{\varsigma}
\newcommand{\realacycle}{\smash{\mathring{\acycle}}}
\newcommand{\realpt}{\smash{\mathring{x}}}
\newcommand{\corrind}{\LC}
\newcommand{\bssymb}{\pi}
\newcommand{\PRCM}{\mu}
\newcommand{\coeff}{p}
\newcommand{\MainConst}{C}

\global\long\def\removeLink{/}

\global\long\def\domainofdef{\mathfrak{U}}
\global\long\def\Test_space{C_c^\infty}
\global\long\def\Distr_space{(\Test_space)^*}

\global\long\def\bs{\boldsymbol}
\global\long\def\cst{\mathrm{C}}

\newcommand{\red}{\textcolor{red}}
\newcommand{\blue}{\textcolor{blue}}
\newcommand{\green}{\textcolor{green}}
\newcommand{\magenta}{\textcolor{magenta}}

\newcommand{\coulombGasH}{\mathcal{H}}
\newcommand{\secondbeta}{\intloop}

\newcommand{\cev}[1]{\reflectbox{\ensuremath{\vec{\reflectbox{\ensuremath{#1}}}}}}

\global\long\def\anticonf{\zeta}
\global\long\def\intloop{\varrho}
\global\long\def\Gloop{\smash{\mathring{\intloop}}}

\global\long\def\SLEmeasure{\mathrm{P}}
\global\long\def\SLEmeasureEx{\mathrm{E}}

\global\long\def\fugacity{\nu}
\global\long\def\meanderMat{\mathcal{M}}
\global\long\def\LM{\mathcal{M}}
\global\long\def\meanderMatrix{\meanderMat_{\fugacity}}
\global\long\def\meanderMatrixPrime{\meanderMat_{\fugacity(\kappa')}}
\global\long\def\meanderRenorm{\widehat{\mathcal{M}}}

\global\long\def\PartFRenorm{\widehat{\PartF}}
\global\long\def\coulombGasRenorm{\widehat{\coulombGas}}

\global\long\def\hexa{\scalebox{1.3}{\hexagon}}

\global\long\def\np{p}

\global\long\def\FKdual{\mathcal{L}}

\global\long\def\fixedindex{\flat}
\maketitle
\vspace{-1cm}
\begin{center}
\begin{minipage}{0.95\textwidth}
\abstract{
We study the nonlinear Hamilton--Jacobi system arising as the formal semi-classical limit of the chordal Belavin--Polyakov--Zamolodchikov equations with a common eigenvalue parameter $\lambda$. 
We determine the number of global real-valued solutions modulo additive constants.
When $\lambda=0$, the number of solutions can be derived using the connection between the solutions and rational functions, due to Eremenko.
We focus on the case when $\lambda\neq 0$ in this article.
For $\lambda>0$, a Schwarzian correspondence relates the algebraic equations for the gradients to quasi-exponentials with prescribed critical points.
Enumeration and transversality results of Mukhin, Tarasov, and Varchenko yield smooth real branches, and a Poisson bracket identity establishes their exactness.
An energy bound rules out global real-valued solutions for $\lambda<0$.
}

\noindent\textbf{Keywords:} Belavin–Polyakov–Zamolodchikov equation, quasi-exponential, Schwarzian derivative\\ 
\noindent\textbf{MSC 2020:} 35F21, 60J67, 30D30
\end{minipage}
\end{center}

\tableofcontents

\newpage
\section{Introduction}
Commutation relations for Schramm-Loewner evolutions (SLEs) express the requirement that several random curves can be grown in a domain, with their joint law independent of the order of growth.
Dub\'edat~\cite{DubedatCommutationSLE} shows that this compatibility can be encoded by a common partition function $\LZ$, and its logarithmic derivatives determine the driving drifts. Moreover, $\LZ$ satisfies a system of second-order linear partial differential equations.
These are the level-two Belavin--Polyakov--Zamolodchikov (BPZ) equations, which arise in conformal field theory~\cite{BPZInfiniteConformalSymmetry2DQuantum}.
Consider the configuration space
\[\LX_n=\{(x_1, \ldots, x_n)\in\R^n: x_1<\cdots<x_n\}.\]
For $\kappa>0$ and $\lambda\in\R$, we consider the common-eigenvalue version of BPZ equations for $\LZ:\LX_n\to\R$:
\begin{equation}\label{eqn::BPZ_chordal}
\frac{\kappa}{2}\partial_j^2\LZ+\sum_{\ell\ne j}\left(\frac{2}{x_\ell-x_j}\partial_\ell\LZ-\frac{(6-\kappa)/\kappa}{(x_\ell-x_j)^2}\LZ\right)=\frac{\lambda}{\kappa}\LZ ,\qquad \text{for all }j\in\{1, \ldots, n\}.
\end{equation}
When $\lambda=0$, Eq.~\eqref{eqn::BPZ_chordal} is related to partition functions of multiple chordal SLEs; see for example~\cite{BauerBernardKytolaMultipleSLE,GrahamSLE,KytolaPeltolaPurePartitionFunctions,PeltolaWuGlobalMultipleSLEs}.
For every $\lambda\in\R$, the dimension of the solution space of~\eqref{eqn::BPZ_chordal} is always $2^n$, as recently proved in~\cite[Theorem~B.1]{HuangWuCommutationRadial}.

The limit $\kappa\to0$ connects stochastic SLE growth with deterministic Loewner evolutions.
For multi-chordal SLE, associated with the homogeneous case $\lambda=0$, Peltola and Wang~\cite{PeltolaWangSLELDP} establish a large deviation principle with the multi-chordal Loewner energy as its rate function and identify the limiting curves as minimizers of the corresponding Loewner potential.
At the level of the equations, the semi-classical limit is obtained by the formal substitution $\LZ=\ee^{-\LU/\kappa}$, for which
\[
    \frac{\kappa\partial_j\LZ}{\LZ}=-\partial_j\LU,
    \qquad
    \frac{\kappa^2\partial_j^2\LZ}{\LZ}
    =(\partial_j\LU)^2-\kappa\partial_j^2\LU.
\]
Multiplying~\eqref{eqn::BPZ_chordal} by $\kappa/\LZ$ and retaining the leading-order terms as $\kappa\to0$ yields the following nonlinear first-order system of Hamilton--Jacobi type:
\begin{equation}\label{eqn::BPZ_chordal_sclimits}
\frac{1}{2}(\partial_j\LU)^2-\sum_{\ell\ne j}\left(\frac{2}{x_\ell-x_j}\partial_\ell\LU+\frac{6}{(x_\ell-x_j)^2}\right)=\lambda, \qquad \text{for all }j\in\{1, \ldots, n\}.
\end{equation}
In this article, we consider solutions to~\eqref{eqn::BPZ_chordal_sclimits} and define 
\begin{equation}\label{eqn::solutions_chordal_def}
\LSchord{n}^{(\lambda)}:=\{\LU\in C^{1}(\LX_n;\R): \LU \text{ satisfies the chordal BPZ system~\eqref{eqn::BPZ_chordal_sclimits}}\}/\sim,
\end{equation}
where $\LU\sim\hat{\LU}$ are equivalent if $\LU-\hat{\LU}$ is a constant. 
The set $\LSchord{n}^{(\lambda)}$ is not empty when $\lambda\ge 0$: define
\begin{align}\label{eqn::LU_shuffle_def}
\LU_{\shuffle_n}^{(\pm)}(\bs{x}):=-2\sum_{1\le j<k\le n}\log(x_k-x_j)\pm\sqrt{2\lambda}\sum_{k=1}^nx_k, \qquad \text{for }\bs{x}=(x_1, \ldots, x_n)\in\LX_n.
\end{align}
One may check by direct calculation that both $\LU_{\shuffle_n}^{(+)}$ and $\LU_{\shuffle_n}^{(-)}$ satisfy~\eqref{eqn::BPZ_chordal_sclimits} with $\lambda\ge 0$. 

For $\lambda\in\R$, B\'{e}zout's theorem tells 
\begin{align}\label{eqn::Bezout}
0\le \#\LSchord{n}^{(\lambda)}\le 2^n.
\end{align}
Unlike the dimension of the solution space of~\eqref{eqn::BPZ_chordal}, which is always $2^n$, the number of solutions in $\LSchord{n}^{(\lambda)}$ depends crucially on $\lambda$.
When $\lambda=0$, we have\footnote{The rational Schwarzian correspondence underlying this count is established in~\cite{EremenkoSchwarzian}; see Appendix~\ref{sec::rational} for a proof of~\eqref{eqn::LSchord_zero}, including exactness.}
\begin{align}\label{eqn::LSchord_zero}
\#\LSchord{n}^{(0)}=\binom{n}{\lfloor n/2\rfloor}<2^n. 
\end{align}
In this article, we focus on $\lambda\neq 0$ and show that the upper bound in~\eqref{eqn::Bezout} is achieved when $\lambda>0$ and that the lower bound in~\eqref{eqn::Bezout} is achieved when $\lambda<0$.  
\begin{theorem}\label{thm::BPZ_solutions}
Fix $n\ge 2$. We have 
\begin{align}
\#\LSchord{n}^{(\lambda)}=
\begin{cases}
2^n, \quad&\text{if }\lambda>0; \\
0, \quad &\text{if }\lambda<0. 
\end{cases}
\end{align}
Moreover, when $\lambda>0$, the solutions in $\LSchord{n}^{(\lambda)}$ are all smooth. 
\end{theorem}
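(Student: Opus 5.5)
The plan is to work with the gradient $\bs{u}=(u_1,\ldots,u_n)$, $u_j=\partial_j\LU$, and to split the problem into an algebraic part (pointwise count of real solutions) and an analytic part (smooth branches that are exact gradients).

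\textbf{Case $\lambda<0$.} For $\LU\in\LS$, sum the equations~\eqref{eqn::BPZ_chordal_sclimits} over $j$. The cross terms $\sum_{j}\sum_{\ell\neq j}\frac{2}{x_\ell-x_j}u_\ell$ regroup as $\sum_\ell u_\ell\sum_{j\ne\ell}\frac{2}{x_\ell-x_j}$. Completing the square gives
\[
\tfrac12\sum_j (u_j-v_j)^2-\tfrac12\sum_j v_j^2-\sum_{j\ne\ell}\frac{6}{(x_\ell-x_j)^2}=n\lambda,
\qquad v_j:=\sum_{\ell\ne j}\frac{2}{x_j-x_\ell}.
\]
The key identity is
\[
\sum_j\Big(\sum_{\ell\ne j}\frac{1}{x_j-x_\ell}\Big)^2=\sum_{j\ne\ell}\frac{1}{(x_j-x_\ell)^2},
\]
which holds because the triple cross terms cancel by the standard three-term identity. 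Hence
\[
\tfrac12\sum_j v_j^2=2\sum_{j\ne \ell}(x_j-x_\ell)^{-2}\le 6\sum_{j\neq\ell}(x_j-x_\ell)^{-2}.
\]
I may have mis-signed a term; if so, I would rather use a single well-chosen $j$, e.g.\ $j=1$ or the equation at an extremal point. The aim is an energy inequality $n\lambda\ge -C\sum(x_j-x_\ell)^{-2}$. Such an inequality does not immediately contradict $\lambda<0$, so the real argument should use the translation behaviour. Summing $\partial_j$ gives $\sum_j u_j$, and scaling $\bs x\mapsto t\bs x$ with $t\to\infty$ sends the singular terms to zero. If no real solution exists for the limiting algebraic system $\frac12 u_j^2-\sum 2u_\ell/(x_\ell-x_j)=\lambda+o(1)$, we would conclude, but $u$ may blow up. So the cleanest route is to find a positive combination showing the left side is $\ge 0$ pointwise. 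That is, I would look for a sum-of-squares identity $\sum_j\frac12(u_j-v_j)^2+(\text{nonnegative})=n\lambda$. This identity is the claim to verify for $\lambda<0$.

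\textbf{Case $\lambda>0$, algebraic count.} Following the abstract, I associate to each point $\bs x$ and each real solution $\bs u$ of the algebraic system a quasi-exponential $f(z)=\ee^{\sqrt{2\lambda}z}p(z)+q(z)$, or an analogous Wronskian pair, whose critical points are exactly $x_1,\ldots,x_n$. The Schwarzian of the associated ratio then reproduces the system via a residue computation at each $x_j$: the residue matching gives $\frac12u_j^2-\dots=\lambda$. Mukhin--Tarasov--Varchenko (the reality/transversality theorem for quasi-exponentials with real critical points) yields exactly $2^n$ such objects, all real, with the Wronski map unramified. Therefore the $2^n$ Bézout solutions are real and distinct for every $\bs x\in\LX_n$.

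\textbf{Branches and exactness.} By transversality and the implicit function theorem, the $2^n$ solutions form $2^n$ smooth (real-analytic) global sections $\bs u^{(k)}$ over the simply connected set $\LX_n$, with no monodromy since they never collide. For exactness I must show $\partial_i u_j=\partial_j u_i$. Write $H_j(\bs x,\bs u)$ for the left-hand side of the $j$-th equation. Differentiating $H_j(\bs x,\bs u(\bs x))=\lambda$ gives a linear system for the Jacobian of $\bs u$. I would prove the Poisson bracket identity $\{H_i,H_j\}=0$ on $T^*\LX_n$ by direct computation; it is the classical-limit commutation of the BPZ operators. The Lagrangian graph argument then applies: the common level set is an $n$-dimensional submanifold, and the involutive Hamiltonian vector fields are tangent to it and span it. So the graph of $\bs u$ is isotropic, hence the 1-form $\sum u_j\,dx_j$ is closed and thus exact on $\LX_n$. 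Integrating gives $2^n$ smooth $\LU$, distinct modulo constants. Combined with~\eqref{eqn::Bezout}, every $C^1$ solution is one of these, since its gradient is pointwise one of the $2^n$ roots and continuity fixes the branch. This gives $\#=2^n$ and smoothness.

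\textbf{Main obstacle.} The hard step is the Schwarzian correspondence that matches the algebraic system with quasi-exponential critical-point data exactly, including nondegeneracy, so that MTV applies with the right count. The $\lambda<0$ inequality is the second risk.
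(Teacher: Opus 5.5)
Your $\lambda<0$ argument stops one step short and then aims at a false target. The completed square you wrote is right. With the three-term identity it is exactly the paper's energy identity \eqref{eqn::Hamiltonian_energy} (your $u_j-v_j$ is the paper's $\tilde u_j$), namely $\sum_jH_j=\tfrac12\sum_j(u_j-v_j)^2-16\sum_{j<\ell}(x_\ell-x_j)^{-2}$. So $n\lambda\ge-16\sum_{j<\ell}(x_\ell-x_j)^{-2}$ at every $\bs x\in\LX_n$. You then assert that this does not contradict $\lambda<0$, and propose instead to verify a pointwise identity $\tfrac12\sum_j(u_j-v_j)^2+(\text{nonnegative})=n\lambda$. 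No such pointwise statement exists. For $n=2$ and $\bs x=(0,1)$, the real vector $\bs u=(-2,2)$ solves both equations of \eqref{eqn::BPZ_chordal_sclimits} with $\lambda=-8$. By scaling, real pointwise solutions with any given $\lambda<0$ exist whenever $x_2-x_1\le\sqrt{8/|\lambda|}$. The missing observation is that your inequality involves no $\bs u$ and holds on all of $\LX_n$, because $\LU$ is a global solution. Applying it at $R\bs y$ and letting $R\to\infty$ gives $n\lambda\ge0$, which is the paper's proof; there is no blow-up of $\bs u$ to control.

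For $\lambda>0$ your outline follows the paper's route: quasi-exponentials with critical points $x_1,\ldots,x_n$, the Mukhin--Tarasov--Varchenko count and reality, B\'ezout to get $2^n$ simple real roots, smooth sections over the contractible chamber, and a bracket argument for closedness. However, the bracket lemma you plan to prove is false: $\{H_i,H_j\}$ does not vanish on $\LX_n\times\R^n$. A direct computation gives \eqref{eqn::accessory_functional_bracket}, $\{H_i,H_j\}=-4(H_i-H_j)/(x_i-x_j)^2$. For $n=2$ and $d=x_2-x_1$ this is $-2(u_1^2-u_2^2)/d^2+8(u_1+u_2)/d^3$, which is nonzero off $\{H_1=H_2\}$. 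Your Lagrangian argument survives once you use only that the bracket vanishes on the common level set $\{H_1=\cdots=H_n=\lambda\}$. That is exactly what makes each $X_{H_i}$ tangent to that set and makes the set isotropic, while simplicity of the roots gives independence of $dH_1,\ldots,dH_n$ there. This is the geometric form of the paper's check that $D_{\bs x}\bs u=-(D_{\bs u}\bs\Ham)^{-1}D_{\bs x}\bs\Ham$ is symmetric. Two smaller corrections. First, the relevant object is the M\"obius class of $\ee^{2az}P/Q$, the ratio of $\ee^{az}P$ and $\ee^{-az}Q$, with $a=\sqrt{\lambda/8}$; that is rate $\sqrt{\lambda/2}$, not $\sqrt{2\lambda}$, and a ratio, not a sum $\ee^{cz}p+q$. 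Second, the accessory equation at $x_j$ is the no-logarithm Frobenius condition $q_1^2+q_2=0$ for the exponents $-\tfrac12,\tfrac32$ of $w''+\tfrac12S_fw=0$, not a residue identity. Finally, your counting logic needs only this forward direction. Distinct classes have distinct Schwarzians and hence distinct $\bs u$. Since the top-degree parts $\tfrac12u_j^2$ have no common zero at infinity, B\'ezout caps the complex roots at $2^n$ with multiplicity. So the converse correspondence you flagged as the main obstacle is not needed for the count.
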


For $\lambda=0$ and even $n$, Peltola and Wang~\cite[Proposition~1.8]{PeltolaWangSLELDP} construct solutions in $\LSchord{n}^{(0)}$ from minimal multi-chordal Loewner potentials. Their connection with real rational functions~\cite[Theorem~1.2]{PeltolaWangSLELDP} gives a family indexed by noncrossing pairings, see Remark~\ref{rem::Rat0_PW}. In~\cite[Theorem~2.7]{AlbertsKangMakarovPoleDynamicsMultipleSLE0}, the authors give explicit expressions for these solutions in terms of the poles of the corresponding rational functions. Zhang~\cite[Theorem~3.5 and Section~3.6]{zhang2025multiplechordalsle0classical} extends this construction to general link patterns with unpaired points connected to infinity, yielding additional solutions for $\lambda=0$. Neither of the latter two works proves that its constructions exhaust all solutions of the BPZ system~\eqref{eqn::BPZ_chordal_sclimits} with $\lambda=0$. Our Theorem~\ref{thm::BPZ_solutions} concerns the full solution space.
We prove Theorem~\ref{thm::BPZ_solutions} by treating the two signs of $\lambda$ separately. For $\lambda>0$, set $a=\sqrt{\lambda/8}$. The key observation is the connection between the BPZ system~\eqref{eqn::BPZ_chordal_sclimits} and the Schwarzian derivatives of quasi-exponentials that we specify below. 

\subsection{Schwarzian derivative of quasi-exponentials}
\paragraph*{Schwarzian derivative.}
The Schwarzian derivative of a nonconstant meromorphic function $f$ is defined as
\begin{equation}\label{eqn::Schwarzian_derivative}
    S_f(z) = \left(\frac{f''(z)}{f'(z)}\right)' - \frac{1}{2}\left(\frac{f''(z)}{f'(z)}\right)^2.
\end{equation}
It is well known that $S_f=0$ if and only if $f$ is a M\"obius transformation.
The Schwarzian derivative has the following chain rule: if $f$ and $g$ are nonconstant meromorphic functions, then
\begin{equation}\label{eqn::Schwarz_chain_rule}
    S_{f\circ g}(z) = S_f(g(z))\cdot (g'(z))^2 + S_g(z).
\end{equation}
In particular, if $L(z)$ is a M\"obius transformation, then $S_{L\circ f}=S_f$.
Conversely, if $S_f=S_g$, then $g=L\circ f$ for some M\"obius transformation $L$.

\paragraph*{Quasi-exponentials.}
We regard two nonconstant meromorphic functions as equivalent if they differ by post-composition with a M\"obius transformation.
A quasi-exponential is such an equivalence class containing a representative of the form
\begin{equation}\label{eqn::quasi_exp_form}
    f(z)=\ee^{2az}\frac{P(z)}{Q(z)},
\end{equation}
where $a\in\R\setminus\{0\}$ and $P,Q\in\C[z]$ are nonzero coprime polynomials. If we require further that $a>0$ and $P$ and $Q$ are monic, then the representative~\eqref{eqn::quasi_exp_form} is unique in its equivalence class. We denote by $f$ both the class and its unique representative~\eqref{eqn::quasi_exp_form}.

The finite critical points of $f$ are encoded by its Wronskian polynomial
\begin{equation}\label{eqn::quasi_exp_Wronskian}
    \mathcal W_a(f)
    :=\frac{1}{2a}\operatorname{Wr}\bigl(\ee^{-az}Q,\ee^{az}P\bigr)
    =\frac{P'Q-PQ'+2aPQ}{2a},
\end{equation}
where $\operatorname{Wr}(g,h)=gh'-g'h$. Indeed,
\begin{equation}\label{eqn::quasi_exp_critical_points}
    f'(z)=2a \ee^{2az}\frac{\mathcal W_a(f)(z)}{Q(z)^2}.
\end{equation}
The multiplicity of a finite critical point $\xi$ is exactly the multiplicity of $\xi$ as a root of $\mathcal W_a(f)$.

Fix $a>0$, $n\geq 1$, and distinct points $z_1,\ldots,z_n\in\C$. We write $\bs z=(z_1,\ldots,z_n)$ and denote by $\QExp(a;\bs{z})$ the set of quasi-exponential equivalence classes with exponential rate $a$ whose finite critical points are precisely $z_1,\ldots,z_n$, all of which are simple. In terms of the unique representative~\eqref{eqn::quasi_exp_form}, we have $f\in\QExp(a;\bs{z})$ if and only if
\begin{equation}\label{eqn::quasi_exp_prescribed_Wronskian}
    \mathcal W_a(f)(z)=\prod_{k=1}^n(z-z_k).
\end{equation}

\begin{proposition}\label{prop::quasi_exp_Schwarz}
Fix $a>0$ and $n\geq 1$. Fix distinct points $z_1,\ldots,z_n\in\C$ and denote $\bs z=(z_1,\ldots,z_n)$.
Suppose $f\in \QExp(a;\bs{z})$.
Then the Schwarzian derivative of $f$ can be written as
\begin{equation}\label{eqn::quasi_exp_Schwarz_explicit}
    S_f(z) =-\frac32\sum_{k=1}^n\frac{1}{(z-z_k)^2} +\frac12\sum_{k=1}^n\frac{u_k}{z-z_k}-2a^2,
\end{equation}
where $(u_1,\ldots,u_n)\in\C^n$ satisfy the following accessory parameter equations:
\begin{align}
    \frac12u_j^2-\sum_{\ell\ne j}\left(\frac{2u_\ell}{z_\ell-z_j}+\frac{6}{(z_\ell-z_j)^2}\right) & =8a^2, \qquad \text{for }j\in\{1,2,\ldots,n\}. \label{eqn::quasi_exp_Schwarz_accessory1}
\end{align}
Conversely, if $(u_1,\ldots,u_n)\in\C^n$ satisfy~\eqref{eqn::quasi_exp_Schwarz_accessory1}, then there exists $f\in \QExp(a;\bs{z})$ whose Schwarzian derivative is given by~\eqref{eqn::quasi_exp_Schwarz_explicit}.
\end{proposition}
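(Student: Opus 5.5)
Both directions rest on the standard link between the Schwarzian and second-order linear ODEs. If $y_1,y_2$ are independent solutions of $y''+\tfrac12 R\,y=0$, then $f=y_2/y_1$ satisfies $S_f=R$. Conversely, for any nonconstant meromorphic $f$, the functions $y_1=(f')^{-1/2}$ and $y_2=f(f')^{-1/2}$ solve this ODE with $R=S_f$.

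\emph{Forward direction.} I will compute the local structure of $S_f$ and then identify it globally.

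First, $S_f$ is meromorphic on $\C$. Its only possible poles are at critical points of $f$ and at poles of $f$. The latter are removable: near a simple pole, $1/f$ is locally univalent, and $S_{1/f}=S_f$.

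Second, at a simple critical point $z_k$ we have $f''/f'=1/(z-z_k)+c_k+O(z-z_k)$. Hence $S_f=-\tfrac32(z-z_k)^{-2}+\tfrac12u_k(z-z_k)^{-1}+O(1)$ for some $u_k\in\C$; explicitly $u_k=-2c_k$, up to the normalization convention.

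Third, I examine the behaviour at infinity. By~\eqref{eqn::quasi_exp_critical_points}, $f''/f'=2a+W'/W-2Q'/Q$, where $W=\mathcal W_a(f)$, so $f''/f'=2a+O(1/z)$. This gives $S_f=-2a^2+O(1/z)$, and the difference $S_f-(\text{principal parts})+2a^2$ is a bounded entire function tending to $0$, hence zero by Liouville. This proves~\eqref{eqn::quasi_exp_Schwarz_explicit}.

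Fourth, I derive the accessory equations from the absence of logarithms. The ODE $y''+\tfrac12S_f\,y=0$ has local exponents $-\tfrac12,\tfrac32$ at $z_k$, which differ by the integer $2$. Both solutions $(f')^{-1/2}$ and $f(f')^{-1/2}$ are of the form $(z-z_k)^{-1/2}\times$(holomorphic), so there is no logarithmic term. Expanding $y=(z-z_k)^{-1/2}\sum_m b_m(z-z_k)^m$ and imposing solvability at $m=2$ yields exactly one quadratic condition on the Laurent coefficients of $R$ at $z_k$. The constant term of $R$ at $z_j$ is $\tfrac12\sum_{\ell\ne j}u_\ell/(z_j-z_\ell)-\tfrac32\sum_{\ell\neq j}(z_j-z_\ell)^{-2}-2a^2$. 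Writing out the condition should reproduce~\eqref{eqn::quasi_exp_Schwarz_accessory1}, including $\tfrac12 u_j^2$ and the right-hand side $8a^2$. This computation is routine but sign- and factor-sensitive; I will check it against the case $n=1$ with $f=\ee^{2az}(z-z_1+\frac1{2a})$.

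\emph{Converse.} Given $u$ satisfying~\eqref{eqn::quasi_exp_Schwarz_accessory1}, let $R$ be the right-hand side of~\eqref{eqn::quasi_exp_Schwarz_explicit}. The same local computation, run backwards, shows that every $z_k$ is an apparent singularity. So all solutions of $y''+\tfrac12Ry=0$ are $(z-z_k)^{-1/2}$ times holomorphic functions, and the ratio $f=y_2/y_1$ is single-valued and meromorphic on $\C$. It has $S_f=R$, and its critical points are exactly the $z_k$, each simple, since the Wronskian $y_1y_2'-y_1'y_2$ is constant and $f'=\mathrm{const}/y_1^2$.

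\emph{Main obstacle.} The hardest step is showing that $f$ is a quasi-exponential, i.e.\ has the form~\eqref{eqn::quasi_exp_form}. Set $\phi(z)=\prod_k(z-z_k)^{1/2}$ and $y=v/\phi$. The function $v$ is entire and satisfies an ODE with polynomial coefficients after multiplying through by $\prod(z-z_k)$. Since $R\to-2a^2$ with $R=-2a^2+O(1/z)$, infinity is an irregular singular point of rank one with characteristic exponents $\pm a$. Formal solutions there are $\ee^{\pm az}z^{\rho_\pm}(1+O(1/z))$.

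The key claim is that the apparent-singularity conditions at all $z_k$, together with the global count, force each $v$ to be $\ee^{\pm az}\times$polynomial. I would argue this as follows. Entire solutions $v$ have order at most $1$, so $v=\ee^{\pm az}g$ with $g$ entire and growing at most polynomially in sectors where the formal series is asymptotic. A Stokes-phenomenon argument then shows that the Stokes multipliers must vanish, which makes $g$ polynomial. I would determine the exponent $\rho_\pm$ from the $1/z$ coefficient of $R$, namely $\tfrac12\sum u_k$, and expect the $z_k$ equations to force $\rho_\pm$ to be nonnegative integers. This is the point where I am least certain.

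As a fallback, I would instead invoke the Mukhin–Tarasov–Varchenko correspondence for Bethe/Wronskian data directly. Once $v_\pm=\ee^{\pm az}P_\pm$ with $P_\pm$ polynomials, we get $f=\ee^{2az}P/Q$, and the constant Wronskian yields $\mathcal W_a(f)=\prod(z-z_k)$ after normalization.
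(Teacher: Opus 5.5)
Your forward direction is the paper's argument (Lemma~\ref{lem::Schwarz_quasi_exp_expansion}). You use Liouville at infinity, then the Frobenius no-log condition $q_1^2+q_2=0$ at $z_j$, where $q_1=u_j/4$ and $q_2$ is the constant term of $\tfrac12 S_f$. That computation does reproduce~\eqref{eqn::quasi_exp_Schwarz_accessory1}. Two small repairs are needed there:
\begin{itemize}
\item A $z_k$ can be a double pole of $f$, since a double zero of $Q$ forces a zero of $\mathcal W_a(f)$. At such a point $f''/f'$ has residue $-3$, not $1$, so run your expansion on $1/f$ and use M\"obius invariance.
\item Your $n=1$ test function should be $\ee^{2az}(z-z_1-\frac1{2a})$. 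The one you wrote has its critical point at $z_1-1/a$.
\end{itemize}

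Your converse has the same architecture as the paper: gauge by $\sqrt{R}$, entire solutions, a rank-one irregular point at $\infty$, Stokes, Liouville. But it has a genuine gap exactly where you flag uncertainty. You never prove that the exponents at infinity of the gauged solutions, $\rho_\pm=\frac n2\mp\frac1{8a}\sum_k u_k$, are nonnegative integers. You also give no reason why the Stokes multipliers vanish. This is the crux of the converse, not a detail: unless $\rho_\pm\in\Z_{\ge0}$, the functions $\ee^{\mp az}v_\pm$ cannot be polynomials, so no $P,Q$ are produced. Your fallback of citing Mukhin--Tarasov--Varchenko does not fill this. To apply their results you must first produce a space of quasi-exponentials (or Bethe roots) from $\bs u$, and that is exactly the content of the converse.

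The paper supplies the missing step in Lemma~\ref{lem::quasi_exp_Schwarz_accessory_sum}, which proves your expectation that the $z_k$ equations force integrality. Set $A_j=\sum_{\ell\ne j}(z_\ell-z_j)^{-1}$ and $v_j=u_j-2A_j$. Define the Lax matrix by $L_{jk}=4(z_j-z_k)^{-1}$ for $j\ne k$ and $L_{jj}=v_j+4A_j$. Then $L\boldsymbol 1=\bs v$, and the accessory equations give $L\bs v=16a^2\boldsymbol 1$. Combined with $[Z,L]=4(J-I)$, where $Z=\operatorname{diag}(z_1,\dots,z_n)$ and $J=\boldsymbol 1\boldsymbol 1^{\mathsf T}$, this yields $\operatorname{tr}L^{r+2}=16a^2\operatorname{tr}L^r$ for all $r\ge0$. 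Hence every eigenvalue of $L$ is $\pm4a$, and $\frac14\sum_k u_k=\frac14\operatorname{tr}L=(2\mathfrak r-n)a$. Equivalently, $\rho_+=n-\mathfrak r$ and $\rho_-=\mathfrak r$ with $\mathfrak r\in\{0,\dots,n\}$.

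You can also close the gap inside your own Stokes framework, as in Lemma~\ref{lem::rank_one_irregular_single_valued}. Every solution of the gauged equation is entire, so the monodromy around $\infty$ is trivial. That monodromy equals the formal monodromy $\operatorname{diag}(\ee^{2\pi\ii\rho_+},\ee^{2\pi\ii\rho_-})$ times two unipotent Stokes matrices of opposite triangular type. Such a product is the identity only if both Stokes multipliers vanish and $\rho_\pm\in\Z$. It follows that $\ee^{\mp az}v_\pm\sim z^{\rho_\pm}$ in all directions. If $\rho_\pm$ were negative, this entire function would tend to $0$ and hence vanish identically. So $\rho_\pm\ge0$ with $\rho_++\rho_-=n$, and Liouville yields monic polynomials of these degrees. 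Their coprimality follows from $P'Q-PQ'+2aPQ=2aR$, because $R$ has only simple zeros.
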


The case of $a=0$ of Proposition~\ref{prop::quasi_exp_Schwarz} appears in~\cite{EremenkoSchwarzian}, where the accessory parameters correspond to Schwarzian residues of rational functions. Here we allow a nonzero exponential rate.
We will prove Proposition~\ref{prop::quasi_exp_Schwarz} in Section~\ref{subsec::quasi_exp_Schwarz}. 
The forward implication follows by expanding the Schwarzian derivative at the simple critical points and at infinity; see Lemma~\ref{lem::Schwarz_quasi_exp_expansion}. The accessory equations~\eqref{eqn::quasi_exp_Schwarz_accessory1} are precisely the Frobenius compatibility conditions that exclude logarithmic terms in the associated second-order ODE, as shown in Lemma~\ref{lem::Schwarz_ODE_critical}. 
The reverse direction relies on analysis of the spectrum of a Lax matrix constructed from the accessory parameters in Lemma~\ref{lem::quasi_exp_Schwarz_accessory_sum}, which forces the condition at infinity~\eqref{eqn::quasi_exp_Schwarz_accessory2}.
Lemma~\ref{lem::Schwarz_quasi_exp_expansion_refined_reverse} then reconstructs the quasi-exponential from the associated second-order ODE: multiplication by a local branch of $\sqrt{\prod_{k=1}^n(z-z_k)}$ removes the finite branching, and analysis at infinity shows that the resulting entire solutions can be chosen in the forms $\ee^{az}P(z)$ and $\ee^{-az}Q(z)$.
Their ratio is the required quasi-exponential $\ee^{2az}P/Q$, and Lemma~\ref{lem::Schwarz_ODE} identifies its Schwarzian derivative.

\medbreak
Now, let us explain why Proposition~\ref{prop::quasi_exp_Schwarz} is related to the BPZ system~\eqref{eqn::BPZ_chordal_sclimits}. 
On the one hand, Proposition~\ref{prop::quasi_exp_Schwarz} gives a one-to-one correspondence between quasi-exponentials and solutions to~\eqref{eqn::quasi_exp_Schwarz_accessory1}.
On the other hand, when $\bs{z}=\bs{x}\in\LX_n$, the Schubert-calculus results in~\cite[Corollary~7.4(ii)]{MukhinTarasovVarchenkoSpacesQuasiExponentials} give exactly $2^n$ quasi-exponential classes; see Lemma~\ref{lem::QExprax_enumeration} and Proposition~\ref{prop::quasi_exp_enumeration}. The restriction to real critical points is crucial: together with the distinct real exponents $\pm a$, it ensures that the relevant Schubert intersections are transversal. Moreover, the quasi-exponential reality theorem ensures that every class has a representative with real polynomial coefficients, and hence real Schwarzian residues.
Therefore, there are exactly $2^n$ distinct real solutions to~\eqref{eqn::quasi_exp_Schwarz_accessory1} when $\bs{z}=\bs{x}\in\LX_n$. Furthermore, these pointwise solutions form $2^n$ global smooth real branches over the contractible chamber $\LX_n$. Counting the branches alone does not give solutions of the BPZ system: the accessory parameters must also be gradients.
 We show that any such solution $\bs{u}=(u_1, \ldots, u_n)$ is exact in Proposition~\ref{prop::quasi_exp_accessory_exact}. 
Thus, there exists a smooth function $\LU$ such that $\partial_j\LU=u_j$ for $j\in\{1, \ldots, n\}$, and consequently, the system~\eqref{eqn::quasi_exp_Schwarz_accessory1} for $\bs{u}$ is exactly the BPZ system~\eqref{eqn::BPZ_chordal_sclimits} for $\LU$. With $\lambda=8a^2$, this identifies the $2^n$ algebraic branches with the $2^n$ solution classes in Theorem~\ref{thm::BPZ_solutions}.

\subsection{Exactness and Poisson brackets for Hamiltonians}

\begin{proposition}\label{prop::quasi_exp_accessory_exact}
Fix $\lambda>0$ and $n\ge 1$.
Suppose that $\bs{u}=(u_1,\ldots,u_n):\LX_n\to\C^n$ is continuous and, for every $\bs{x}=(x_1,\ldots,x_n)\in\LX_n$, satisfies the accessory parameter equations
\begin{align}\label{eqn::quasi_exp_Schwarz_accessory1_x}
    \frac12u_j^2-\sum_{\ell\ne j}\left(\frac{2u_\ell}{x_\ell-x_j}+\frac{6}{(x_\ell-x_j)^2}\right) & =\lambda, \qquad \text{for }j\in\{1,2,\ldots,n\}.
\end{align}
Then $\bs{u}$ is real-valued and smooth. Moreover, we have
\begin{align}\label{eqn::u_accessory_exact}
\partial_j u_i(\bs{x})=\partial_i u_j(\bs{x}),\qquad\text{for }i\neq j. 
\end{align}
\end{proposition}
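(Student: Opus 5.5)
The plan has three parts. First, show that pointwise the solutions are real and nondegenerate. Second, use the implicit function theorem to get smoothness. Third, prove the symmetry $\partial_j u_i=\partial_i u_j$ by a Poisson bracket computation.

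\textbf{Reality and nondegeneracy.} Fix $\bs{x}\in\LX_n$ and set $a=\sqrt{\lambda/8}$. By Proposition~\ref{prop::quasi_exp_Schwarz}, every solution $\bs{u}(\bs{x})\in\C^n$ of \eqref{eqn::quasi_exp_Schwarz_accessory1_x} corresponds to some $f\in\QExp(a;\bs{x})$. The correspondence is injective, since $S_f$ determines the class of $f$ and conversely. The enumeration result (Proposition~\ref{prop::quasi_exp_enumeration}, from Mukhin--Tarasov--Varchenko) gives exactly $2^n$ classes. The reality theorem says each class has a representative with real polynomial coefficients. Then $S_f$ is real on $\R$, so each $u_k$, being a residue at a real point, is real.

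Consequently the polynomial system $F_j(\bs{u};\bs{x}):=\frac12u_j^2-\sum_{\ell\neq j}\bigl(\frac{2u_\ell}{x_\ell-x_j}+\frac{6}{(x_\ell-x_j)^2}\bigr)-\lambda=0$ has $2^n$ distinct complex solutions. Its leading forms are $\frac12 u_j^2$, whose only common zero is $0$, so there are no solutions at infinity and B\'ezout counts exactly $2^n$ solutions with multiplicity. Since there are $2^n$ distinct ones, each is simple, i.e. the Jacobian $\det(\partial F_j/\partial u_k)\neq 0$ there.

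\textbf{Smoothness.} By the implicit function theorem, near each $\bs{x}_0$ the $2^n$ solutions are given by real-analytic functions of $\bs{x}$ that remain distinct. A continuous $\bs{u}$ must therefore locally coincide with one of these branches, since the solution set near $\bs{x}_0$ is a disjoint union of graphs. Hence $\bs{u}$ is real-valued and smooth, in fact real-analytic.

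\textbf{Exactness.} Introduce Hamiltonians on $T^*\LX_n$ with momenta $p$:
\[
H_j(\bs{x},\bs{p})=\tfrac12p_j^2-\sum_{\ell\neq j}\Bigl(\frac{2p_\ell}{x_\ell-x_j}+\frac{6}{(x_\ell-x_j)^2}\Bigr).
\]
The key identity to verify is that the canonical Poisson brackets satisfy $\{H_i,H_j\}=\sum_k c_{ij}^k H_k$ with explicit rational coefficients $c_{ij}^k$. I expect, by analogy with the classical limit of BPZ commutation, something like $\{H_i,H_j\}=\frac{c}{(x_i-x_j)^2}(H_i-H_j)$ up to constants. This direct computation is the main obstacle. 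Care is needed because the constant $6$ and the coefficient $2$ must balance exactly, and three-index terms $\frac{1}{(x_\ell-x_i)(x_\ell-x_j)}$ must cancel via the partial fraction identity.

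Given the identity, the graph $\Lambda=\{\bs{p}=\bs{u}(\bs{x})\}$ lies in the level set $\{H_j=\lambda\ \forall j\}$. On $\Lambda$ we have $\{H_i,H_j\}=\sum_k c_{ij}^k\lambda$, and this must be compatible with $\sum_k c^k_{ij}=0$, which holds for the guessed form. The Hamiltonian vector fields $X_{H_j}$ are then tangent to $\Lambda$, because $\Lambda$ is a transversal (nondegenerate) intersection of $n$ level sets whose brackets vanish there.

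Since $d\bs{p}\,(\partial_x H)$ is invertible on $\Lambda$, the fields $X_{H_j}$ span $T\Lambda$. Then $\omega(X_{H_i},X_{H_j})=\{H_i,H_j\}=0$ on $\Lambda$, so $\Lambda$ is Lagrangian. Concretely, differentiate $F_j(\bs{u}(\bs{x}),\bs{x})=0$ and write $M=(\partial_{p_k}H_j)$ and $A=(\partial_i u_k)$. The vanishing brackets translate into $M(A-A^{T})M^{T}=0$. With $M$ invertible (the nondegeneracy above), this gives $A=A^{T}$, which is \eqref{eqn::u_accessory_exact}.
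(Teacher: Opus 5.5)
Your proposal is correct and follows essentially the paper's argument. The quasi-exponential correspondence, combined with the Mukhin--Tarasov--Varchenko count and reality theorem, gives $2^n$ distinct real solutions. B\'ezout then makes each solution simple, so $D_{\bs u}\Ham$ is invertible, and differentiating $\Ham_j(\bs x;\bs u(\bs x))=\lambda$ together with the vanishing brackets on the level set yields $M(A-A^{\mathsf T})M^{\mathsf T}=0$, hence $A=A^{\mathsf T}$. The identity you flagged as the main obstacle holds exactly in your guessed form, $\{\Ham_i,\Ham_j\}=-4(\Ham_i-\Ham_j)/(x_i-x_j)^2$; this is the paper's Lemma~\ref{lem::quasi_exp_accessory_Poisson}, proved via the partial-fraction identity you mention.

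The only real variation is in how you get smoothness. You obtain it locally from the implicit function theorem in $\bs u$, using the B\'ezout nondegeneracy. The paper instead reads it off from the global smooth real sheets of Proposition~\ref{prop::quasi_exp_enumeration}, which come from the Schubert-cell coefficients of $P,Q$. Your route is slightly more self-contained: it needs only the pointwise count and reality, and it gives real-analyticity.
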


When $\lambda=0$, the gradient structure is exhibited by the minimal Loewner potentials of~\cite[Proposition~1.7]{PeltolaWangSLELDP} and the explicit logarithmic potentials of~\cite[Theorem~2.7]{AlbertsKangMakarovPoleDynamicsMultipleSLE0}. 
We will prove Proposition~\ref{prop::quasi_exp_accessory_exact} in Section~\ref{subsec::BPZ_solutions} and our proof is very different. Our proof combines the smooth real branches described above with the Poisson bracket identity~\eqref{eqn::accessory_functional_bracket} which we introduce below.

\paragraph*{Poisson brackets for Hamiltonians.}
Fix $n\ge 1$. We write $\bs{x}=(x_1, \ldots, x_n)\in\LX_n$ and $\bs{u}=(u_1, \ldots, u_n)\in\R^n$. 
For two smooth functions $G, H: (\bs{x};\bs{u})\in\LX_n\times\R^n\to \R$, we define their Poisson bracket by 
\begin{align}\label{eqn::Poisson_bracket_def}
\{G,H\}:=\sum_{k=1}^n\left(\frac{\partial G}{\partial x_k}\frac{\partial H}{\partial u_k}-\frac{\partial G}{\partial u_k}\frac{\partial H}{\partial x_k}\right).
\end{align}
For $\bs{x}=(x_1,\ldots,x_n)\in\LX_n$ and $\bs{u}=(u_1,\ldots,u_n)\in\R^n$, define the Hamiltonians
\begin{align}\label{eqn::Hamilton_functional}
\Ham_j(\bs{x};\bs{u}):=\frac12u_j^2-\sum_{\ell\ne j}\left(\frac{2u_\ell}{x_\ell-x_j}+\frac{6}{(x_\ell-x_j)^2}\right),\qquad \text{for }j\in\{1,\ldots,n\}.
\end{align}
With the normalization $u_j=\partial_j\LU$, the semi-classical BPZ system~\eqref{eqn::BPZ_chordal_sclimits} is equivalent to
\begin{equation}\label{eqn::BPZ_Hamiltonian}
    \Ham_j\left(\bs{x};\nabla\LU(\bs{x})\right)=\lambda,\qquad \text{for }j\in\{1,\ldots,n\}.
\end{equation}
We will check in Lemma~\ref{lem::quasi_exp_accessory_Poisson} that these Hamiltonians satisfy
\begin{align}\label{eqn::accessory_functional_bracket}
\{\Ham_i, \Ham_j\}+\frac{4(\Ham_i-\Ham_j)}{(x_i-x_j)^2}=0,\qquad\text{for }i\neq j.
\end{align}
This identity is the semi-classical counterpart of the infinitesimal SLE commutation relation~\cite[Section~3.1]{DubedatCommutationSLE}.

To prove Proposition~\ref{prop::quasi_exp_accessory_exact}, we first use Propositions~\ref{prop::quasi_exp_Schwarz} and~\ref{prop::quasi_exp_enumeration} to see that a continuous accessory-parameter branch must stay on one of the disjoint smooth real branches. It is therefore itself real-valued and smooth. On the common level set $\Ham_1=\cdots=\Ham_n=\lambda$, the Poisson brackets vanish by~\eqref{eqn::accessory_functional_bracket}. For each fixed $\bs x\in\LX_n$, the $n$ quadratic equations $\Ham_j=\lambda$ have exactly $2^n$ distinct complex solutions, attaining the B\'ezout bound. Hence each solution has multiplicity one, and the Jacobian with respect to $\bs u$ is invertible. Differentiating $\Ham_j(\bs x;\bs u(\bs x))=\lambda$ and combining the resulting identities with the vanishing Poisson brackets shows that $D_{\bs x}\bs u$ is symmetric, which is~\eqref{eqn::u_accessory_exact}. Thus the algebraic equations imply the differential compatibility~\eqref{eqn::u_accessory_exact}.
\medbreak
The Hamiltonian formulation also inspires us to prove the case $\lambda<0$ of Theorem~\ref{thm::BPZ_solutions}.
To separate the kinetic and potential energies in the total Hamiltonian, we shift the momenta by setting
\begin{equation}\label{eqn::u_tilde_def}
\tilde{u}_j:=u_j+\sum_{\ell\ne j}\frac{2}{x_\ell-x_j},\qquad \text{for }j\in\{1,\ldots,n\}.
\end{equation}
A direct calculation gives
\begin{equation}\label{eqn::Hamiltonian_energy}
    \sum_{k=1}^n\Ham_k(\bs{x};\bs{u})
    =\frac12\sum_{k=1}^n\tilde{u}_k^2-\sum_{1\le j<\ell\le n}\frac{16}{(x_\ell-x_j)^2}.
\end{equation}
Since the kinetic energy is nonnegative for real momenta, this identity rules out global real-valued solutions with $\lambda<0$, see Lemma~\ref{lem::BPZ_solutions_neg}.

\paragraph*{Outline.}
We review second-order ODEs and the Schwarzian-ODE correspondence in Section~\ref{sec::pre}.
We analyze quasi-exponentials in Section~\ref{sec::quasi_exp}.
In particular, we prove Proposition~\ref{prop::quasi_exp_Schwarz} in Section~\ref{subsec::quasi_exp_Schwarz}, and prove Proposition~\ref{prop::quasi_exp_accessory_exact} and Theorem~\ref{thm::BPZ_solutions} in Section~\ref{subsec::BPZ_solutions}. 
We prove~\eqref{eqn::LSchord_zero} in Appendix~\ref{sec::rational} using the tools from Sections~\ref{sec::pre} and~\ref{sec::quasi_exp}.

\paragraph*{Acknowledgment.}
We thank Yilin Wang and Chongzhi Huang for helpful discussions.
H.W. is supported by New Cornerstone Investigator Program 100001127. H.W. is partly affiliated at Yanqi Lake Beijing Institute of Mathematical Sciences and Applications, Beijing, China.

\section{Preliminaries}
\label{sec::pre}

We recall the local theory of second-order linear ODEs in Section~\ref{subsec::2ndorder_ODE} and the Schwarzian-ODE correspondence in Section~\ref{subsec::Schwarzian_ODE}. In Section~\ref{subsec::Hamiltonian}, we prove the Poisson bracket identity~\eqref{eqn::accessory_functional_bracket} and the $\lambda<0$ case of Theorem~\ref{thm::BPZ_solutions}.

\subsection{Second-order ODE and its singular points}
\label{subsec::2ndorder_ODE}
Consider the second-order linear ODE
\begin{equation}\label{eqn::second_order_linear_ODE}
    w''(z)+\varphi(z)w'(z)+\psi(z)w(z)=0,
\end{equation}
where $\varphi$ and $\psi$ are holomorphic in a punctured neighborhood of a point $z_0\in \C$. We collect some facts about the ODE~\eqref{eqn::second_order_linear_ODE}; see, e.g.,~\cite[Section~2.7]{OlverLozierBoisvertClarkNIST} for details.
\begin{itemize}
    \item The point $z_0$ is an ordinary point if both $\varphi$ and $\psi$ extend holomorphically to $z_0$; otherwise, it is a singular point.
    \item A singular point $z_0$ is a regular singular point if both $(z-z_0)\varphi(z)$ and $(z-z_0)^2\psi(z)$ extend holomorphically to $z_0$. Otherwise, it is an irregular singular point.
    \item The singularity at $\infty$ is defined by setting $\zeta=1/z$ and $u(\zeta)=w(1/\zeta)$. The equation~\eqref{eqn::second_order_linear_ODE} becomes
    \begin{equation}\label{eqn::second_order_linear_ODE_at_infinity}
        u''(\zeta)+\left(\frac{2}{\zeta}-\frac{\varphi(1/\zeta)}{\zeta^2}\right)u'(\zeta)+\frac{\psi(1/\zeta)}{\zeta^4}u(\zeta)=0.
    \end{equation}
    Thus, if $\infty$ is a singular point, it is regular singular if and only if
    \begin{equation}\label{eqn::regular_singular_at_infinity}
        \varphi(z)=O(z^{-1}) \quad\text{and}\quad \psi(z)=O(z^{-2}) \qquad\text{as }z\to\infty.
    \end{equation}
\end{itemize}

\paragraph*{Regular singular point.}
Assume that $z_0\in \C$ is a regular singular point of~\eqref{eqn::second_order_linear_ODE} and, in a neighborhood of $z_0$, we write
\begin{equation*}
    (z-z_0)\varphi(z)=\sum_{s=0}^{\infty}p_s(z-z_0)^s, \qquad (z-z_0)^2\psi(z)=\sum_{s=0}^{\infty}q_s(z-z_0)^s.
\end{equation*}
We can construct two independent solutions of~\eqref{eqn::second_order_linear_ODE} on a slit neighborhood of $z_0$ of the form
\begin{equation}\label{eqn::Frobenius_solutions}
    \begin{aligned}
    w_1(z)&=(z-z_0)^{r_1}\sum_{s=0}^{\infty}a_s(z-z_0)^s, \qquad \text{with }a_0=1;\\
    w_2(z)&=(z-z_0)^{r_2}\sum_{s=0}^{\infty}b_s(z-z_0)^s + c\,w_1(z)\log(z-z_0),\qquad \text{with }b_0=1;
    \end{aligned}
\end{equation}
where $r_1$ and $r_2$ are the roots of the indicial equation
\begin{equation}\label{eqn::indicial_equation}
    I(r):=r(r-1)+p_0 r+q_0=0.
\end{equation}
If $r_1-r_2\notin\Z$, we can take $c=0$ in the second solution. In the resonant case $r_1-r_2\in\Z$, we need extra compatibility conditions to avoid logarithmic terms. Instead of stating the compatibility conditions in general, we calculate the case needed below.
\begin{lemma}\label{lem::Schwarz_ODE_critical}
Let $z_0\in\mathbb{C}$ and suppose that in a neighborhood of $z_0$, we have
\begin{equation}\label{eqn::simple_critical_ODE_expansion}
    \varphi(z)\equiv 0, \qquad \psi(z)=\sum_{s=0}^{\infty}q_s(z-z_0)^{s-2}, \qquad q_0=-\frac34.
\end{equation}
Then $z_0$ is a regular singular point of~\eqref{eqn::second_order_linear_ODE} and the indicial exponents of~\eqref{eqn::second_order_linear_ODE} at $z_0$ are $3/2$ and $-1/2$. Furthermore,
the equation~\eqref{eqn::second_order_linear_ODE} admits a local Frobenius basis~\eqref{eqn::Frobenius_solutions} without logarithmic terms at $z_0$ if and only if
\begin{equation}\label{eqn::simple_critical_no_log_condition}
    q_1^2+q_2=0.
\end{equation}
\end{lemma}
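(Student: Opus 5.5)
The plan is to carry out the Frobenius recursion explicitly at $z_0$ and isolate the single resonance obstruction. Write $t=z-z_0$. Since $\varphi\equiv0$ and $t^2\psi(z)=\sum_{s\ge0}q_st^s$ is holomorphic near $t=0$, the point $z_0$ is regular singular. In the notation above $p_0=0$, so the indicial equation~\eqref{eqn::indicial_equation} reads $r(r-1)-\tfrac34=0$. Its roots are $r_1=3/2$ and $r_2=-1/2$, which gives the first claim. These exponents differ by the integer $2$, so we are in the resonant case and only the solution attached to the smaller exponent can fail to exist.

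For the larger exponent $r_1=3/2$ the standard Frobenius theory (recalled before~\eqref{eqn::Frobenius_solutions}) always yields a solution $w_1$ without logarithm. The reason is that $I(3/2+s)\neq0$ for all $s\ge1$, so the recursion never breaks down. The question is therefore whether there is a solution of the form $w=t^{-1/2}\sum_{s\ge0}b_st^s$ with $b_0=1$.

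Substituting this ansatz into $w''+\psi w=0$ and collecting the coefficient of $t^{s-5/2}$ gives the recursion
\[
I(s-\tfrac12)\,b_s+\sum_{k=1}^{s}q_k\,b_{s-k}=0,\qquad I(s-\tfrac12)=(s-\tfrac12)(s-\tfrac32)-\tfrac34=s(s-2).
\]
The cases $s=1,2$ determine everything:
\begin{itemize}
\item At $s=1$ we get $-b_1+q_1=0$, so $b_1=q_1$.
\item At $s=2$ the coefficient of $b_2$ vanishes, and the equation reduces to the constraint $q_1b_1+q_2=q_1^2+q_2=0$.
\end{itemize}

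If~\eqref{eqn::simple_critical_no_log_condition} holds, we choose $b_2$ arbitrarily, for instance $b_2=0$. For $s\ge3$ we have $s(s-2)\neq0$, so each $b_s$ is uniquely determined by the earlier coefficients. The usual majorant argument then shows that the series converges on the disc where $t^2\psi$ converges. Explicitly, $|b_s|\le \frac{1}{s(s-2)}\sum_{k\le s}|q_k||b_{s-k}|$ is dominated by a geometric series. This produces $w_2$ with $c=0$ in~\eqref{eqn::Frobenius_solutions}, so a log-free Frobenius basis exists.

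Conversely, suppose a log-free basis exists. Then some solution has leading exponent $-1/2$. Subtracting a multiple of $w_1$ does not change the coefficients of $t^{-1/2}$ and $t^{1/2}$, and any solution with exponent $-1/2$ that is log-free must satisfy the recursion above with $b_0\neq0$. Normalizing $b_0=1$, the $s=2$ equation forces $q_1^2+q_2=0$.

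Equivalently, if $q_1^2+q_2\neq0$, then the general theory forces $c\neq0$. To confirm this directly, substitute the ansatz with the $c\,w_1\log t$ term included. The $s=2$ coefficient then becomes $q_1^2+q_2+c\,(2r_1-1)=0$, which determines a nonzero $c$.

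The main obstacle is the converse direction: it must be justified carefully that absence of logarithms really forces the exact recursion coefficients at order $s=2$. The cleanest route is the observation above, that any log-free local solution has the form $t^{-1/2}\times$ (holomorphic), so comparing coefficients is legitimate. The rest is routine.
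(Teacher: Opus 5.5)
Your proposal is correct and follows essentially the same route as the paper: the indicial exponents $3/2,-1/2$, the recursion $I(s-\tfrac12)b_s+\sum_{k=1}^s q_kb_{s-k}=0$ giving $b_1=q_1$, and the resonance constraint $q_1^2+q_2=0$ at $s=2$. Your explicit computation of the logarithmic coefficient, $c=-(q_1^2+q_2)/2$ (using $2r_1-1=2$), is a self-contained replacement for the paper's appeal to general Frobenius theory when the condition fails.
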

\begin{proof}
After translating the coordinate, we may assume that $z_0=0$.
Substituting $p_0=0$ and $q_0=-3/4$ into~\eqref{eqn::indicial_equation} gives the indicial exponents $r_1=3/2$ and $r_2=-1/2$.
Their difference is two, thus this is the resonant case of the Frobenius method.
There is always a solution
\[
    w_1(z)=z^{3/2}\sum_{s=0}^{\infty}a_s z^s, \qquad \text{with }a_0=1.
\]
A second log-free Frobenius solution exists precisely when one can solve
\begin{equation}\label{eqn::smaller_exponent_Frobenius_series}
    w_2(z)=z^{-1/2}\sum_{s=0}^{\infty}b_s z^s,
    \qquad \text{with }b_0=1.
\end{equation}
Substituting~\eqref{eqn::smaller_exponent_Frobenius_series} into \eqref{eqn::second_order_linear_ODE} gives, for $s\ge1$,
\begin{equation}\label{eqn::smaller_exponent_Frobenius_recursion}
    I\left(-\frac12+s\right)b_s +\sum_{j=1}^{s}q_jb_{s-j}=0, \qquad I(r):=r(r-1)-\frac34.
\end{equation}
For $s=1$, since $I(1/2)=-1$, this gives $b_1=q_1$.
At the resonant index $s=2$, one has $I(3/2)=0$, thus the compatibility condition is
\begin{align}\label{eqn::Schwarz_ODE_critical_condition}
q_1b_1+q_2b_0=q_1^2+q_2=0.
\end{align}
Let us analyze~\eqref{eqn::Schwarz_ODE_critical_condition}.
\begin{itemize}
\item If the condition~\eqref{eqn::Schwarz_ODE_critical_condition} holds, $b_2$ may be chosen freely and all subsequent coefficients are determined recursively, which gives a convergent log-free solution~\eqref{eqn::smaller_exponent_Frobenius_series}.
Together with $w_1$, this solution forms a local Frobenius basis without logarithmic terms at $z=0$.
\item If the condition~\eqref{eqn::Schwarz_ODE_critical_condition} fails, the second solution is necessarily of the form
\[
    z^{-1/2}\sum_{s=0}^{\infty}b_s z^s + c\,w_1(z)\log z, \qquad \text{with }c\neq 0,
\]
and hence no local Frobenius basis without logarithmic terms exists at $z=0$.
\end{itemize}
\end{proof}

\paragraph*{Irregular singular point.} 
Assume that $\infty$ is an irregular singular point of~\eqref{eqn::second_order_linear_ODE} and, in a neighborhood of $\infty$, we have convergent expansions
\begin{equation}\label{eqn::rank_one_irregular_coefficient_expansions}
    \varphi(z)=\sum_{s=0}^{\infty}\frac{\varphi_s}{z^s},
    \qquad
    \psi(z)=\sum_{s=0}^{\infty}\frac{\psi_s}{z^s},
\end{equation}
with at least one of $\varphi_0,\psi_0,\psi_1$ nonzero.
Assume that the characteristic equation 
\begin{equation}\label{eqn::rank_one_irregular_characteristic_equation}
    \chi^2+\varphi_0 \chi+\psi_0=0
\end{equation}
has distinct roots $\chi_1$ and $\chi_2$ and set 
\begin{equation}\label{eqn::rank_one_irregular_formal_exponents}
    \mu_1=-\frac{\varphi_1 \chi_1+\psi_1}{\varphi_0+2 \chi_1},\qquad  \mu_2=-\frac{\varphi_1 \chi_2+\psi_1}{\varphi_0+2\chi_2}.
\end{equation}
Then~\eqref{eqn::second_order_linear_ODE} has two linearly independent formal solutions of the forms
\begin{equation}\label{eqn::rank_one_irregular_formal_solutions}
    \ee^{\chi_1 z}z^{\mu_1}
    \sum_{s=0}^{\infty}\frac{a_{s}}{z^s},
    \qquad\text{and}\qquad
    \ee^{\chi_2 z}z^{\mu_2}
    \sum_{s=0}^{\infty}\frac{b_{s}}{z^s},
    \qquad\text{with }
    a_{0}=b_0=1.
\end{equation}
These formal series need not converge. Nevertheless, there are linearly independent sectorial solutions $w_1,w_2$ admitting the respective formal expansions~\eqref{eqn::rank_one_irregular_formal_solutions}. That is, there exist sectors $S_1$ and $S_2$ such that for every $M\ge 1$,
\begin{align*}
w_1(z) - \ee^{\chi_1 z} z^{\mu_1} \sum_{s=0}^{M-1} \frac{a_{s}}{z^s} =\ee^{\chi_1 z} z^{\mu_1} O\left(|z|^{-M}\right)
\quad\text{as } z\to\infty \text{ in } S_1;\\
w_2(z) - \ee^{\chi_2 z} z^{\mu_2} \sum_{s=0}^{M-1} \frac{b_{s}}{z^s} =\ee^{\chi_2 z} z^{\mu_2} O\left(|z|^{-M}\right)
\quad\text{as } z\to\infty \text{ in } S_2. 
\end{align*}
For this choice of solutions, their analytic continuations satisfy the Stokes connection formulas
\begin{equation}\label{eqn::rank_one_irregular_Stokes_formulas}
    w_1(z)=\ee^{2\pi\ii\mu_1}w_1(z\ee^{-2\pi\ii})+C_1w_2(z),
    \qquad
    w_2(z)=\ee^{-2\pi\ii\mu_2}w_2(z\ee^{2\pi\ii})+C_2w_1(z),
\end{equation}
where the rotated arguments $z\ee^{\pm2\pi\ii}$ denote analytic continuation and $C_1,C_2$ are called the Stokes multipliers.
\begin{lemma}\label{lem::rank_one_irregular_single_valued}
Suppose that the above sectorial solutions $w_1,w_2$ extend to single-valued solutions in a punctured neighborhood of $\infty$.
Then $\mu_1,\mu_2\in\Z$, the Stokes multipliers $C_1,C_2$ vanish, and the asymptotic expansions~\eqref{eqn::rank_one_irregular_formal_solutions} hold in all directions.
\end{lemma}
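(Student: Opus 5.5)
The plan is to compare the Stokes connection formulas~\eqref{eqn::rank_one_irregular_Stokes_formulas} with single-valuedness, treating the two directions of continuation one at a time. If $w_1$ extends to a single-valued solution in a punctured neighborhood of $\infty$, then $w_1(z\ee^{-2\pi\ii})=w_1(z)$ as functions on the punctured neighborhood. The first connection formula then becomes
\[
\bigl(1-\ee^{2\pi\ii\mu_1}\bigr)w_1(z)=C_1w_2(z).
\]
Since $w_1$ and $w_2$ are linearly independent, both coefficients must vanish. This gives $\ee^{2\pi\ii\mu_1}=1$, so $\mu_1\in\Z$, and $C_1=0$. The same argument applied to the second formula, with $w_2(z\ee^{2\pi\ii})=w_2(z)$, gives $\mu_2\in\Z$ and $C_2=0$.

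Next I would obtain the asymptotics in all directions. Because $\mu_1\in\Z$, the formal series $\ee^{\chi_1 z}z^{\mu_1}\sum_s a_s z^{-s}$ is itself single-valued, and so is $w_1$. The asymptotic relation in $S_1$ is an identity between single-valued functions, so it also holds on every rotated copy $S_1\ee^{2\pi\ii m}$, but these are the same sector. To reach other directions, I would use the general theory behind~\eqref{eqn::rank_one_irregular_Stokes_formulas}. The asymptotic expansion of $w_1$ persists on the maximal sector, of opening beyond $\pi$, bounded by the Stokes rays where $\Re((\chi_1-\chi_2)z)$ changes sign. Crossing a Stokes ray can change the asymptotics only by adding a multiple of $w_2$, and the total of these additions is encoded in $C_1$. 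Since $C_1=0$, nothing is added, and the expansion continues through the full turn. The argument for $w_2$ is symmetric.

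The main obstacle is this last step: deducing from $C_1=C_2=0$ that the expansion is valid in every direction, not merely on the original sector. The issue is that~\eqref{eqn::rank_one_irregular_Stokes_formulas} as stated packages the monodromy into a single multiplier for each solution, rather than one multiplier per Stokes ray. I would handle it by appeal to the standard rank-one Stokes theory (e.g.\ the reference~\cite{OlverLozierBoisvertClarkNIST} already cited). In that theory each solution has a single Stokes ray where it becomes subdominant-to-dominant, and its multiplier there is exactly $C_1$, respectively $C_2$. With that identification, the vanishing of the multiplier gives the expansion in all directions immediately.
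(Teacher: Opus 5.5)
Your proposal is correct and follows the paper's proof. Single-valuedness reduces the connection formulas to $(1-\ee^{2\pi\ii\mu_1})w_1=C_1w_2$ and $(1-\ee^{-2\pi\ii\mu_2})w_2=C_2w_1$; linear independence then gives $\mu_1,\mu_2\in\Z$ and $C_1=C_2=0$; and the all-directions claim is read off, as in the paper, from the standard rank-one Stokes theory.

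The step you flag as the main obstacle is easier than you suggest. In the standard normalization of \cite[Section~2.7]{OlverLozierBoisvertClarkNIST}, the expansion of $w_1$ holds on $|\arg((\chi_2-\chi_1)z)|\le\frac32\pi-\delta$, a sector of opening greater than $2\pi$ on the universal cover. So once $\mu_1\in\Z$ and $w_1$ is single-valued, that sector covers every direction, and you need not identify $C_1$ with a per-ray multiplier; the same holds for $w_2$.
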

\begin{proof}
By single-valuedness,
\[
    w_1(z\ee^{-2\pi\ii})=w_1(z),
    \qquad
    w_2(z\ee^{2\pi\ii})=w_2(z).
\]
Hence, Eq.~\eqref{eqn::rank_one_irregular_Stokes_formulas} becomes
\[
    \bigl(1-\ee^{2\pi\ii\mu_1}\bigr)w_1=C_1w_2,
    \qquad
    \bigl(1-\ee^{-2\pi\ii\mu_2}\bigr)w_2=C_2w_1.
\]
Since $w_1$ and $w_2$ are linearly independent, we obtain
\[
    \ee^{2\pi\ii\mu_1}=\ee^{-2\pi\ii\mu_2}=1,
    \qquad
    C_1=C_2=0.
\]
Thus $\mu_1,\mu_2\in\Z$. The vanishing of both Stokes multipliers means that the asymptotic expansions do not change when the solutions are continued across the Stokes rays, thus they hold in all directions.
\end{proof}

\subsection{Schwarzian-ODE correspondence}
\label{subsec::Schwarzian_ODE}
The ratio of two independent solutions of a second-order linear ODE changes by a M\"obius transformation when the basis of solutions is changed.
The following correspondence therefore allows us to recover a meromorphic function locally, up to M\"obius equivalence, from its Schwarzian derivative by solving a linear ODE.
\begin{lemma}[Schwarzian-ODE correspondence]\label{lem::Schwarz_ODE}
Let $D\subset\C$ be a domain and let $\psi$ be holomorphic on $D$.
If $w_1,w_2$ are linearly independent solutions on $D$ of
\begin{equation}\label{eqn::ODE_general}
    w''(z)+\psi(z)w(z)=0,
\end{equation}
then $f=w_1/w_2$ satisfies
\begin{equation}\label{eqn::Schwarz_ODE}
    S_f(z)=2\psi(z).
\end{equation}
Conversely, if $f$ is a locally univalent meromorphic function on $D$ satisfying~\eqref{eqn::Schwarz_ODE}, then, on any simply connected subdomain on which a branch of $\sqrt{f'}$ is fixed, the two functions
\[
    w_1=\frac{f}{\sqrt{f'}},
    \qquad
    w_2=\frac{1}{\sqrt{f'}}
\]
are linearly independent solutions of~\eqref{eqn::ODE_general}.
\end{lemma}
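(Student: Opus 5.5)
The statement to prove is Lemma~\ref{lem::Schwarz_ODE}, the classical Schwarzian--ODE correspondence. Both directions reduce to direct computation with the Wronskian and the logarithmic derivative.

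\emph{Forward direction.} Let $w_1,w_2$ be linearly independent solutions of~\eqref{eqn::ODE_general} on $D$. Since there is no first-order term, the Wronskian $W=w_1'w_2-w_1w_2'$ satisfies $W'=w_1''w_2-w_1w_2''=-\psi w_1w_2+\psi w_1w_2=0$. Hence $W$ is a nonzero constant $c$, the nonvanishing coming from linear independence. Then $f=w_1/w_2$ has
\[
f'=\frac{W}{w_2^2}=\frac{c}{w_2^2},
\]
so $f$ is meromorphic and locally univalent away from the zeros of $w_2$. At the zeros of $w_2$ we pass to $1/f=w_2/w_1$, which only changes $f$ by a Möbius map, and $w_1,w_2$ have no common zeros since $W\neq0$.

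Away from the zeros of $w_2$, we have $f''/f'=-2w_2'/w_2$. Differentiating gives
\[
\left(\frac{f''}{f'}\right)'=-2\frac{w_2''}{w_2}+2\left(\frac{w_2'}{w_2}\right)^2=2\psi+2\left(\frac{w_2'}{w_2}\right)^2,
\qquad
\frac12\left(\frac{f''}{f'}\right)^2=2\left(\frac{w_2'}{w_2}\right)^2 .
\]
Subtracting yields $S_f=2\psi$ off a discrete set. Both sides are holomorphic on $D$: at zeros of $w_2$ we use $S_f=S_{1/f}$ and repeat the computation with $w_1$ in the denominator. So the identity holds on all of $D$ by analytic continuation.

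\emph{Converse direction.} Work on a simply connected subdomain where $f'$ is holomorphic and nonvanishing and a branch of $\sqrt{f'}$ is fixed. Set $w_2=(f')^{-1/2}$. Then $w_2'/w_2=-\tfrac12 f''/f'$, and differentiating $w_2'=-\tfrac12(f''/f')w_2$ gives
\[
w_2''=\left(-\frac12\left(\frac{f''}{f'}\right)'+\frac14\left(\frac{f''}{f'}\right)^2\right)w_2=-\frac12 S_f\,w_2=-\psi w_2 .
\]
For $w_1=fw_2$, we get $w_1''=f''w_2+2f'w_2'+fw_2''$. Since $2f'w_2'=-f''w_2$, the first two terms cancel, and $w_1''=fw_2''=-\psi w_1$.

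Linear independence follows from $W=w_1'w_2-w_1w_2'=f'w_2^2=1\neq0$.

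At points where $f$ has a simple pole, $f'$ has a double pole. There $1/\sqrt{f'}$ is holomorphic with a simple zero, and $f/\sqrt{f'}$ is holomorphic and nonzero. Alternatively, apply the argument to $1/f$ and note that the resulting pair spans the same solution space, so the identities persist by continuity.

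The only point needing care is this handling of poles and the meaning of $\sqrt{f'}$ there. I would dispatch it as above, by Möbius invariance of $S_f$ and holomorphic extension across isolated points. Everything else is routine calculus.
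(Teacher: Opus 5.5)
Your proposal is correct and matches the paper's proof essentially step for step. The constant Wronskian gives $f''/f'=-2w_2'/w_2$ and hence $S_f=-2w_2''/w_2=2\psi$. Conversely, $w_2=(f')^{-1/2}$ satisfies $w_2''/w_2=-\tfrac12 S_f$, $w_1=fw_2$ solves the equation because $f''w_2+2f'w_2'=0$, and the Wronskian is $\pm1$, the sign depending on your convention. Your explicit treatment of the zeros of $w_2$ and the poles of $f$, via $S_{1/f}=S_f$ and holomorphic extension, is a harmless refinement of what the paper leaves implicit.
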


\begin{proof}
Denote the Wronskian of $w_1$ and $w_2$ by 
\[
    W:=\operatorname{Wr}(w_1,w_2)=w_1w_2'-w_1'w_2.
\]
Since $w_1$ and $w_2$ solve~\eqref{eqn::ODE_general}, we have $W'=0$, and their linear independence implies $W\ne0$. Wherever $w_2\ne0$, we have
\[
    f'=-\frac{W}{w_2^2}, \qquad \frac{f''}{f'}=-2\frac{w_2'}{w_2}.
\]
It follows that
\[
    S_f=-2\frac{w_2''}{w_2}=2\psi.
\]

Conversely, set $w_2=(f')^{-1/2}$ on a simply connected subdomain on which this branch is fixed. Then
\[
    \frac{w_2''}{w_2} =-\frac12\left[\left(\frac{f''}{f'}\right)' -\frac12\left(\frac{f''}{f'}\right)^2\right] =-\frac12S_f=-\psi.
\]
Thus $w_2$ solves~\eqref{eqn::ODE_general}, and $w_1:= f w_2$ solves~\eqref{eqn::ODE_general} as well:
\[
    (w_1)''+\psi w_1 = (fw_2)''+\psi fw_2=f''w_2+2f'w_2'=0.
\]
Finally, $\operatorname{Wr}(w_1,w_2)=-1$, thus $w_1$ and $w_2$ are linearly independent.
\end{proof}

\subsection{Hamiltonian and Poisson bracket}
\label{subsec::Hamiltonian}

We now prove the Poisson bracket identity used to establish exactness in Section~\ref{subsec::BPZ_solutions}, and use the energy decomposition~\eqref{eqn::Hamiltonian_energy} to rule out global solutions with $\lambda<0$.
\begin{lemma}\label{lem::quasi_exp_accessory_Poisson}
Fix $n\ge 1$. Recall that the Poisson bracket is defined in~\eqref{eqn::Poisson_bracket_def} and the Hamiltonians are defined in~\eqref{eqn::Hamilton_functional}. 
Then the Hamiltonians satisfy~\eqref{eqn::accessory_functional_bracket}. 
\end{lemma}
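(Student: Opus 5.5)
The plan is a direct computation of $\{\Ham_i,\Ham_j\}$ for fixed $i\neq j$. I would organize the terms by their dependence on $\bs u$ and compare with $4(\Ham_j-\Ham_i)/(x_i-x_j)^2$. Write $d_{\ell k}:=x_\ell-x_k$ and
\[
\Ham_j=\tfrac12u_j^2-\sum_{\ell\ne j}\Bigl(\tfrac{2u_\ell}{d_{\ell j}}+\tfrac{6}{d_{\ell j}^2}\Bigr).
\]
The momentum derivatives are simple:
\[
\partial_{u_k}\Ham_j=u_j\delta_{kj}-(1-\delta_{kj})\,\tfrac{2}{d_{kj}}.
\]
The position derivatives are:
\begin{itemize}
\item for $k\neq j$, $\partial_{x_k}\Ham_j=\tfrac{2u_k}{d_{kj}^2}+\tfrac{12}{d_{kj}^3}$;
\item for $k=j$, $\partial_{x_j}\Ham_j=-\sum_{\ell\ne j}\bigl(\tfrac{2u_\ell}{d_{\ell j}^2}+\tfrac{12}{d_{\ell j}^3}\bigr)$.
\end{itemize}
Substituting these into~\eqref{eqn::Poisson_bracket_def} gives $\{\Ham_i,\Ham_j\}$ as a polynomial of degree at most two in $\bs u$, with rational coefficients in $\bs x$. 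The identity~\eqref{eqn::accessory_functional_bracket} is also polynomial in $\bs u$, so I would verify it separately in degree two, degree one and degree zero in $\bs u$.

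\textbf{Quadratic terms.} In the bracket these come only from the indices $k=i$ and $k=j$, where both the $u$-derivative and the $x$-derivative carry a factor $u$. They give $-\tfrac{2u_iu_j}{d_{ij}^2}$ from one product and $\tfrac{2u_iu_j}{d_{ji}^2}$ from the other, so they cancel. Wait—the identity also contains $4(\Ham_i-\Ham_j)/d_{ij}^2$, which includes $2(u_i^2-u_j^2)/d_{ij}^2$. So the bracket must produce $u_i^2$ and $u_j^2$ terms: the $k=i$ product $\partial_{x_i}\Ham_i\,\partial_{u_i}\Ham_j$ contains $u_i\cdot$ a linear term. I would recheck this by careful bookkeeping. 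The expected outcome is that the $u_i^2$ and $u_j^2$ coefficients match $-2/d_{ij}^2$ and $+2/d_{ij}^2$, so that they cancel against the $\Ham_i-\Ham_j$ term.

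\textbf{Linear terms.} These split into two types.
\begin{itemize}
\item Terms involving $u_i$ or $u_j$ come from the diagonal indices.
\item Terms involving $u_\ell$ with $\ell\notin\{i,j\}$ come from the off-diagonal indices $k=\ell$. Their coefficient has the form $\tfrac{4}{d_{\ell i}^2d_{\ell j}}-\tfrac{4}{d_{\ell j}^2d_{\ell i}}$ plus the contributions from the $k=i,j$ sums.
\end{itemize}
I would combine these using partial fractions, via the identity
\[
\frac{1}{d_{\ell i}d_{\ell j}}=\frac{1}{d_{ij}}\Bigl(\frac1{d_{\ell i}}-\frac1{d_{\ell j}}\Bigr)
\]
and its square. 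The goal is to show that the total equals $-\tfrac{4}{d_{ij}^2}\cdot\bigl(-\tfrac{2u_\ell}{d_{\ell i}}+\tfrac{2u_\ell}{d_{\ell j}}\bigr)$.

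\textbf{Constant terms.} This is the main obstacle. Each index $\ell\neq i,j$ produces cross terms of the form $\tfrac{1}{d_{\ell i}^3d_{\ell j}}$, and these must be reduced to $\tfrac{6}{d_{ij}^2}\bigl(\tfrac{1}{d_{\ell j}^2}-\tfrac1{d_{\ell i}^2}\bigr)\cdot 4$. Since every such term involves only the three points $x_i,x_j,x_\ell$, it suffices to check a three-point rational identity in $(x_i,x_j,x_\ell)$.

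To keep the computation tractable, I would handle this by translation and scaling, normalizing to $x_i=0$ and $x_j=1$, and then verifying a rational identity in the single variable $x_\ell$ by comparing partial fraction expansions at $x_\ell=0$, $x_\ell=1$ and $\infty$. The remaining terms involving only the pair $\{i,j\}$ are handled in the same way: for example, the $12/d^3$ terms paired with $2/d$ give $24/d_{ij}^4$-type contributions, which are checked directly.
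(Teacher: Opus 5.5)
Your plan is the paper's proof: expand $\{\Ham_i,\Ham_j\}$ with the first derivatives (yours agree with the paper's), separate the pair terms from the three-point terms indexed by $\ell\notin\{i,j\}$, and reduce the latter with $\frac{1}{d_{\ell i}d_{\ell j}}=\frac{1}{d_{ij}}\bigl(\frac{1}{d_{\ell i}}-\frac{1}{d_{\ell j}}\bigr)$. That identity is the paper's $x_{ij}(x_{jk}-x_{ik})=x_{ik}x_{jk}$ in other notation. However, the lemma consists of nothing but this bookkeeping, and you defer it. Several of the steps you do state are wrong.

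First, there are no $u_iu_j$ terms. Also, $\partial_{x_i}\Ham_i\,\partial_{u_i}\Ham_j$ contains no $u_i$: it is linear in the momenta, since $\partial_{u_i}\Ham_j=-2/d_{ij}$ and $\partial_{x_i}\Ham_i$ involves only $u_m$ with $m\neq i$. The quadratic part comes from $-\partial_{u_i}\Ham_i\,\partial_{x_i}\Ham_j=-u_i\bigl(\frac{2u_i}{d_{ij}^2}+\frac{12}{d_{ij}^3}\bigr)$ and from $\partial_{x_j}\Ham_i\,\partial_{u_j}\Ham_j=u_j\bigl(\frac{2u_j}{d_{ji}^2}+\frac{12}{d_{ji}^3}\bigr)$, which gives the coefficients you expected.

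Second, the $k=\ell$ contribution to the coefficient of $u_\ell$ is $-\frac{4}{d_{\ell i}^2d_{\ell j}}+\frac{4}{d_{\ell i}d_{\ell j}^2}$, the negative of what you wrote. With your sign the $u_\ell$-coefficients do not match for any configuration with $x_i\neq x_j$.

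Third, your constant target has the wrong sign. By your own (correct) normalization $\{\Ham_i,\Ham_j\}=4(\Ham_j-\Ham_i)/d_{ij}^2$, it should be $\frac{24}{d_{ij}^2}\bigl(\frac{1}{d_{\ell i}^2}-\frac{1}{d_{\ell j}^2}\bigr)$.

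With these corrections the computation closes. The pair terms of the bracket are $\frac{2(u_j^2-u_i^2)}{d_{ij}^2}-\frac{8(u_i+u_j)}{d_{ij}^3}$. The pair constants $\pm 24/d_{ij}^4$ cancel, and so does the pair constant in $\Ham_i-\Ham_j$. For each $\ell\notin\{i,j\}$, the coefficient of $u_\ell$ collapses to $\frac{8}{d_{ij}d_{\ell i}d_{\ell j}}$, and the constant collapses to $\frac{24(d_{\ell i}+d_{\ell j})}{d_{ij}d_{\ell i}^2d_{\ell j}^2}$. These are exactly the corresponding parts of $4(\Ham_j-\Ham_i)/d_{ij}^2$, and they reproduce the paper's displayed formula for $\{\Ham_i,\Ham_j\}$. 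Normalizing $x_i=0$, $x_j=1$ is a valid way to check the two three-point identities. After the partial-fraction step, though, each is only a line of algebra, which is how the paper handles them.
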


\begin{proof}
Write
\[
    x_{ij}:=\frac{1}{x_i-x_j}, \qquad i\ne j.
\]
Thus $x_{ji}=-x_{ij}$, and, for distinct $i,j,k$,
\begin{equation}\label{eqn::xij_partial_fraction_identity}
    x_{ij}(x_{jk}-x_{ik})=x_{ik}x_{jk}.
\end{equation}
Let us calculate the two sides of~\eqref{eqn::accessory_functional_bracket}. 
\begin{itemize}
\item To compute $\{\Ham_i,\Ham_j\}$, first note that
\begin{equation*}
\frac{\partial \Ham_j}{\partial u_k}
=
\begin{cases}
    u_j,&k=j,\\
    2x_{jk},&k\ne j,
\end{cases}
\qquad
\frac{\partial \Ham_j}{\partial x_k}
=
\begin{cases}
    \displaystyle\sum_{\ell\ne j}
       \left(-2u_\ell x_{j\ell}^2+12x_{j\ell}^3\right),&k=j,\\[6pt]
    2u_k x_{jk}^2-12x_{jk}^3,&k\ne j.
\end{cases}
\end{equation*}
Substituting these expressions into~\eqref{eqn::Poisson_bracket_def}
and grouping the terms with indices $i$, $j$, and
$\ell\notin\{i,j\}$ gives
\begin{align*}
\{\Ham_i,\Ham_j\}
={}&-2x_{ij}^2(u_i^2-u_j^2)-8x_{ij}^3(u_i+u_j)\\
&\quad+x_{ij}\sum_{\ell\notin\{i,j\}}x_{i\ell}x_{j\ell}
  \left(8u_\ell-24(x_{i\ell}+x_{j\ell})\right).
\end{align*}
\item For the Hamiltonian difference, we have
\begin{align*}
\Ham_i-\Ham_j
={}&\frac12(u_i^2-u_j^2)+2(u_i+u_j)x_{ij}+\sum_{\ell\notin\{i,j\}}
  \left(2u_\ell(x_{i\ell}-x_{j\ell})
       -6(x_{i\ell}^2-x_{j\ell}^2)\right)\\
={}&\frac12(u_i^2-u_j^2)+2(u_i+u_j)x_{ij}+\sum_{\ell\notin\{i,j\}}
  \frac{x_{i\ell}x_{j\ell}}{x_{ij}}\left(-2u_\ell
       +6(x_{i\ell}+x_{j\ell})\right).\tag{due to~\eqref{eqn::xij_partial_fraction_identity}}
\end{align*}
\end{itemize}
Combining the two sides, we obtain~\eqref{eqn::accessory_functional_bracket} as desired.
\end{proof}

\begin{lemma}\label{lem::BPZ_solutions_neg}
Fix $n\ge 2$ and $\lambda<0$. Then $\LSchord{n}^{(\lambda)}$ is empty.
\end{lemma}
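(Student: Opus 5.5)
The plan is to argue by contradiction using the energy decomposition~\eqref{eqn::Hamiltonian_energy}, evaluated at a single well-chosen configuration. Suppose $\LU\in C^1(\LX_n;\R)$ solves~\eqref{eqn::BPZ_chordal_sclimits} with $\lambda<0$. Set $u_j:=\partial_j\LU(\bs x)$, which is real because $\LU$ is real-valued. By~\eqref{eqn::BPZ_Hamiltonian}, we have $\Ham_j(\bs x;\bs u)=\lambda$ for every $j$. Summing over $j$ gives $\sum_k\Ham_k(\bs x;\bs u)=n\lambda$ at every $\bs x\in\LX_n$.

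First I verify~\eqref{eqn::Hamiltonian_energy} by direct expansion, writing $x_{ij}=1/(x_i-x_j)$ as in Lemma~\ref{lem::quasi_exp_accessory_Poisson}. The shift~\eqref{eqn::u_tilde_def} reads $\tilde u_k=u_k+2\sum_{\ell\ne k}x_{\ell k}$. Expanding $\frac12\tilde u_k^2$ gives three pieces.
\begin{itemize}
\item The term $\frac12u_k^2$.
\item The cross term $2u_k\sum_{\ell\neq k}x_{\ell k}$. Summed over $k$ and relabelled, it equals $-\sum_j\sum_{\ell\ne j}2u_\ell x_{\ell j}$, which is exactly the linear-in-$\bs u$ part of $\sum_j\Ham_j$.
\item The term $2\bigl(\sum_{\ell\ne k}x_{\ell k}\bigr)^2$. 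Its diagonal part contributes $2\sum_k\sum_{\ell\neq k}x_{\ell k}^2=4\sum_{j<\ell}x_{\ell j}^2$. Its off-diagonal part is a sum over triples of terms of the form $x_{\ell k}x_{mk}$. Summing cyclically over each triple $\{k,\ell,m\}$ gives zero, by the partial fraction identity~\eqref{eqn::xij_partial_fraction_identity}.
\end{itemize}
The potential part of $\sum_j\Ham_j$ is $-6\sum_j\sum_{\ell\neq j}x_{\ell j}^2=-12\sum_{j<\ell}x_{\ell j}^2$. Subtracting the diagonal contribution $4\sum_{j<\ell}x_{\ell j}^2$ gives the coefficient $-16$, which yields~\eqref{eqn::Hamiltonian_energy}.

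Combining the two facts, every $\bs x\in\LX_n$ satisfies
\[
\frac12\sum_{k=1}^n\tilde u_k^2=n\lambda+\sum_{1\le j<\ell\le n}\frac{16}{(x_\ell-x_j)^2}.
\]
The left side is nonnegative, since $\tilde u_k$ is real. Now take $\bs x=(0,R,2R,\ldots,(n-1)R)$ with $R$ large. Every gap is then at least $R$, so the sum on the right is at most $16\binom n2R^{-2}$, which tends to $0$ as $R\to\infty$. Since $n\lambda<0$ is fixed, the right side becomes strictly negative for $R$ large enough, a contradiction. Hence no such $\LU$ exists, and $\LSchord{n}^{(\lambda)}=\emptyset$.

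The only step requiring care is the algebraic identity~\eqref{eqn::Hamiltonian_energy}, specifically the cancellation of the triple products. Once that identity is established, the contradiction is immediate. Note that the argument uses only the value of $\nabla\LU$ at one point, so the $C^1$ hypothesis is used only to make $\nabla\LU$ defined and real.
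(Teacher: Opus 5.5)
Your proof is correct and follows the paper's argument: you combine $\Ham_k(\bs x;\nabla\LU(\bs x))=\lambda$ with the energy identity~\eqref{eqn::Hamiltonian_energy}, use that the kinetic term is nonnegative, and dilate a fixed configuration by $R\to\infty$ to force $0\le n\lambda$. You also verify~\eqref{eqn::Hamiltonian_energy} explicitly, which the paper leaves as a direct calculation, and that verification is correct, including the cancellation of the triple products via~\eqref{eqn::xij_partial_fraction_identity}.
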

\begin{proof}
Suppose that $[\LU]\in\LSchord{n}^{(\lambda)}$ and set $\bs{u}(\bs{x})=\nabla\LU(\bs{x})\in\R^n$.
By~\eqref{eqn::BPZ_Hamiltonian} and~\eqref{eqn::Hamiltonian_energy}, the potential energy is bounded from above by the total energy:
\[
    -\sum_{1\le j<\ell\le n}\frac{16}{(x_\ell-x_j)^2}
    \le \sum_{k=1}^n\Ham_k(\bs{x};\bs{u}(\bs{x}))
    =n\lambda,\qquad \bs{x}=(x_1, \ldots, x_n)\in\LX_n.
\]
Fix $\bs{y}\in\LX_n$ and take $\bs{x}=R\bs{y}$ with $R>0$. Then
\[
    -\frac{16}{R^2}\sum_{1\le j<\ell\le n}\frac{1}{(y_\ell-y_j)^2}
    \le n\lambda.
\]
Letting $R\to\infty$ gives $0\le n\lambda$, contradicting $\lambda<0$.
\end{proof}

\section{Quasi-exponentials}
\label{sec::quasi_exp}
The goal of this section is to prove Theorem~\ref{thm::BPZ_solutions} for $\lambda>0$.
We first establish the Schwarzian correspondence of Proposition~\ref{prop::quasi_exp_Schwarz} in Section~\ref{subsec::quasi_exp_Schwarz}.
Section~\ref{subsec::quasi_exp_enumeration} then collects the classification results for quasi-exponentials from~\cite{MukhinTarasovVarchenkoSpacesQuasiExponentials}.
Combining these results, we prove Proposition~\ref{prop::quasi_exp_accessory_exact} and complete the proof of Theorem~\ref{thm::BPZ_solutions} in Section~\ref{subsec::BPZ_solutions}.
Finally, we associate each quasi-exponential $f$ with a relatively explicit solution $\LU_f$ in Proposition~\ref{prop::quasi_exp_LU_poles} in Section~\ref{subsec::pole}.
This formula is not needed for the preceding proofs, but its rational analogue will be used in Appendix~\ref{sec::rational}.

\subsection{Proof of Proposition~\ref{prop::quasi_exp_Schwarz}}
\label{subsec::quasi_exp_Schwarz}

\begin{lemma}\label{lem::Schwarz_quasi_exp_expansion}
Fix $a>0$ and $n\geq 1$. Fix distinct points $z_1,\ldots,z_n\in\C$ and denote $\bs z=(z_1,\ldots,z_n)$.
Suppose $f\in \QExp(a;\bs{z})$.
Then the Schwarzian derivative of $f$ can be written as in~\eqref{eqn::quasi_exp_Schwarz_explicit}
where $(u_1, \ldots, u_n)\in \C^n$ satisfy~\eqref{eqn::quasi_exp_Schwarz_accessory1}.
\end{lemma}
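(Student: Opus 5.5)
We will prove the statement by first determining $S_f$ as a meromorphic function from its singularities, and then extracting the accessory equations from the log-free condition of Lemma~\ref{lem::Schwarz_ODE_critical}.

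\textbf{Step 1: $S_f$ is rational.} Write $f=\ee^{2az}P/Q$. By~\eqref{eqn::quasi_exp_critical_points}, $f''/f'=2a+W'/W-2Q'/Q$ with $W:=\mathcal W_a(f)=\prod_k(z-z_k)$. Hence $S_f$ is a rational function. Its only possible poles are at the $z_k$ and at the zeros of $Q$. At a zero of $Q$, $f$ has a pole; at a finite noncritical point, $f$ is locally univalent. Since $S_{1/f}=S_f$ by Möbius invariance, $S_f$ is holomorphic at zeros of $Q$. At a simple critical point $z_k$, we have $f'(z)=c(z-z_k)(1+O(z-z_k))$ with $c\ne0$. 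Thus $f''/f'=1/(z-z_k)+h(z)$ with $h$ holomorphic, and
\[
S_f=-\frac{3}{2}\frac{1}{(z-z_k)^2}+\frac{u_k/2}{z-z_k}+O(1), \qquad u_k:=-2h(z_k).
\]

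\textbf{Step 2: behavior at infinity.} As $z\to\infty$, $f''/f'=2a+O(1/z)$ and $(f''/f')'=O(1/z^2)$, so $S_f\to-2a^2$. Subtracting all principal parts leaves an entire function tending to $-2a^2$. By Liouville's theorem, this gives exactly~\eqref{eqn::quasi_exp_Schwarz_explicit}.

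\textbf{Step 3: the accessory equations.} Apply Lemma~\ref{lem::Schwarz_ODE} on simply connected neighborhoods that avoid the $z_k$. The functions $f/\sqrt{f'}$ and $1/\sqrt{f'}$ solve $w''+\psi w=0$ with $\psi=S_f/2$. Near $z_k$, with $t=z-z_k$, we have $\sqrt{f'}=t^{1/2}\cdot(\text{holomorphic, nonvanishing})$. Both solutions are therefore $t^{-1/2}$ times meromorphic functions, and $f/\sqrt{f'}$ can be taken holomorphic times $t^{-1/2}$ after a Möbius normalization sending $f(z_k)$ to $0$. So a log-free Frobenius basis exists at $z_k$.

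Expand $\psi=\sum_s q_s t^{s-2}$. Then $q_0=-3/4$ and $q_1=u_k/4$. The coefficient $q_2$ equals half the constant term of $S_f$ at $z_k$:
\[
q_2=\frac12\left(-\frac32\sum_{\ell\ne k}\frac{1}{(z_k-z_\ell)^2}+\frac12\sum_{\ell\ne k}\frac{u_\ell}{z_k-z_\ell}-2a^2\right).
\]
Lemma~\ref{lem::Schwarz_ODE_critical} gives $q_1^2+q_2=0$. Multiplying by $16$ yields
\[
u_k^2-12\sum_{\ell\ne k}\frac{1}{(z_\ell-z_k)^2}-4\sum_{\ell\ne k}\frac{u_\ell}{z_\ell-z_k}-16a^2=0.
\]
Dividing by $2$, this is exactly~\eqref{eqn::quasi_exp_Schwarz_accessory1}.

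\textbf{Main obstacle.} The delicate point is justifying log-freeness rigorously via the branch of $\sqrt{f'}$ near $z_k$. One must check that the two local solutions are genuinely of the form $t^{-1/2}\cdot(\text{meromorphic})$ with no logarithm, and that $1/\sqrt{f'}$ realizes the exponent $-1/2$ with nonzero leading coefficient. Alternatively, one can bypass the ODE entirely: expand $S_f$ to order $t^0$ directly from $f'=ct(1+\alpha t+\beta t^2+\cdots)$, which yields the identity $q_1^2+q_2=0$ by explicit algebra. I would do this computation as a cross-check.
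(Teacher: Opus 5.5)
Your proposal is correct and follows the paper's route: rationality of $f''/f'$, the local expansion at the simple critical points, $S_f\to-2a^2$ at infinity, then Lemma~\ref{lem::Schwarz_ODE} with the no-log condition $q_1^2+q_2=0$ of Lemma~\ref{lem::Schwarz_ODE_critical}; your use of $1/\sqrt{f'}$ to exhibit the log-free solution with exponent $-1/2$ makes explicit a step the paper only asserts. One small repair: a critical point $z_k$ can be a double zero of $Q$ (a double pole of $f$), where $f'(z)=c(z-z_k)(1+O(z-z_k))$ fails, so in Step~1 you should first post-compose with $1/f$, as the paper does via its normalization $f=(z-\xi)^{\tau}\varphi$; likewise, $S_f$ is holomorphic only at those zeros of $Q$ that are not among the $z_k$.
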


\begin{proof}
Suppose $f(z)=\ee^{2az}P(z)/Q(z)$ where $P, Q$ are nonzero monic coprime polynomials.
Using~\eqref{eqn::quasi_exp_critical_points}, we have
\begin{equation}\label{eqn::quasi_exp_Schwarz_derivative}
    \frac{f''(z)}{f'(z)}
    =2a+\frac{\mathcal W_a(f)'(z)}{\mathcal W_a(f)(z)}-2\frac{Q'(z)}{Q(z)}.
\end{equation}
Thus, $f''/f'$ and $S_f$ are rational.
The expression~\eqref{eqn::quasi_exp_Schwarz_explicit} for $S_f$ follows from the following two observations.
\begin{itemize}
\item The points $z_1, \ldots, z_n$ are double poles of $S_f$ and the coefficient of $(z-z_j)^{-2}$ in $S_f$ is $-3/2$.
After post-composing $f$ with a M\"obius transformation, which does not change its Schwarzian derivative, we may write locally
\[
    f(z)=(z-\xi)^{\tau_{\xi}}\varphi(z), \qquad \varphi(\xi)\ne 0.
\]
It follows that
\[
    \frac{f''(z)}{f'(z)}=\frac{\tau_{\xi}-1}{z-\xi}+H_\xi(z),
\]
where $H_\xi$ is holomorphic near $\xi$.
Substitution into the definition of the Schwarzian derivative gives
\begin{equation}\label{eqn::Schwarz_local_degree_expansion}
    S_f(z)=\frac{1-\tau_{\xi}^2}{2(z-\xi)^2}-\frac{(\tau_{\xi}-1)H_\xi(\xi)}{z-\xi}+O(1).
\end{equation}
Thus $S_f$ is holomorphic at $\xi$ when $\tau_{\xi}=1$, whereas it has a double pole at $\xi$ when $\tau_{\xi}\ge2$.
For $f\in\QExp(a;\bs{z})$, the finite critical points are exactly $z_1,\ldots,z_n$ and each has local degree two.
Hence the coefficient of $(z-z_j)^{-2}$ in $S_f$ is $-3/2$.
\item The rational function $S_f+2a^2$ has no polynomial part.
Using~\eqref{eqn::quasi_exp_prescribed_Wronskian}, we have
\[
    \frac{\mathcal W_a(f)'(z)}{\mathcal W_a(f)(z)}
    =\frac{n}{z}+O(z^{-2}),
    \qquad
    \frac{Q'(z)}{Q(z)}=\frac{\deg(Q)}{z}+O(z^{-2}).
\]
Plugging into~\eqref{eqn::quasi_exp_Schwarz_derivative}, we have
\begin{equation}\label{eqn::Schwarz_expansion_at_infinity}
    S_f(z)=-2a^2-\frac{(n-2\deg(Q))2a}{z}+O(z^{-2})
    \qquad\text{as }z\to\infty.
\end{equation}
In particular, $S_f+2a^2$ has no polynomial part.
\end{itemize}
\medbreak

  It remains to check that $(u_1, \ldots, u_n)\in \C^n$ satisfy~\eqref{eqn::quasi_exp_Schwarz_accessory1}.
 Set
\begin{equation}\label{eqn::accessory_ODE_coefficient}
    \psi_{\bs{u}}(z):=-\frac34\sum_{k=1}^n\frac{1}{(z-z_k)^2}+\frac{1}{4}\sum_{k=1}^n\frac{u_k}{z-z_k}-a^2.
\end{equation}
Thus, Eq.~\eqref{eqn::quasi_exp_Schwarz_explicit} is equivalent to $S_f=2\psi_{\bs{u}}$.
By Lemma~\ref{lem::Schwarz_ODE}, the quasi-exponential $f$ is locally the ratio of two linearly independent solutions of
\[
    w''(z)+\psi_{\bs{u}}(z)w(z)=0.
\]
At $z_j$, the coefficient $\psi_{\bs{u}}$ has the expansion
\[
    \psi_{\bs{u}}(z)
    =-\frac{3}{4(z-z_j)^2}+\frac{u_j}{4(z-z_j)}+q_2^{(j)}
      +O(z-z_j),
\]
where
\begin{equation}\label{eqn::accessory_ODE_coefficient2}
    q_2^{(j)}
    =-a^2-\frac34\sum_{\ell\ne j}\frac{1}{(z_\ell-z_j)^2}
      -\frac{1}{4}\sum_{\ell\ne j}\frac{u_\ell}{z_\ell-z_j}.
\end{equation}
Since $z_j$ is a simple critical point of $f$,
Lemma~\ref{lem::Schwarz_ODE_critical} gives $u_j^2/16+q_2^{(j)}=0$, which is equivalent to~\eqref{eqn::quasi_exp_Schwarz_accessory1} as desired.
\end{proof}

Fix $a>0$ and $n\geq 1$. Fix distinct points $z_1,\ldots,z_n\in\C$ and denote $\bs z=(z_1,\ldots,z_n)$.
Let us classify the quasi-exponentials in $\QExp(a;\bs{z})$ according to different behavior at infinity. 
Such classification will be useful in the proof of Proposition~\ref{prop::quasi_exp_enumeration}. 
Suppose $f(z)=\ee^{2az}P(z)/Q(z)$ where $P, Q$ are nonzero monic coprime polynomials.
Eq.~\eqref{eqn::quasi_exp_Wronskian} and~\eqref{eqn::quasi_exp_prescribed_Wronskian} give
$\deg(P)+\deg(Q)=n$.
Thus, there exists $\mathfrak{r}\in \{0,1,\ldots, n\}$ such that $\deg(P)=n-\mathfrak{r}$ and $\deg(Q)=\mathfrak{r}$. 
For $\mathfrak{r}\in\{0,1,\ldots,n\}$, let $\QExpr(a;\bs{z})$ be the subset of $\QExp(a;\bs{z})$ given by
\begin{equation}\label{eqn::QExpr_def}
\QExpr(a;\bs{z})=\{f(z)=\ee^{2az}P(z)/Q(z)\in\QExp(a;\bs{z}): \deg(P)=n-\mathfrak{r}, \deg(Q)=\mathfrak{r}\}.
\end{equation}
In particular, we have
\begin{equation}\label{eqn::quasi_exp_prescribed_Wronskian_degree}
    \QExp(a;\bs{z})=\bigsqcup_{\mathfrak{r}=0}^n \QExpr(a;\bs{z}).
\end{equation}
For $f\in\QExpr(a;\bs{z})$, we have a refined version of Lemma~\ref{lem::Schwarz_quasi_exp_expansion}. 
\begin{lemma}\label{lem::Schwarz_quasi_exp_expansion_refined}
Fix $a>0$ and $n\geq 1$ and $\mathfrak{r}\in\{0,1,\ldots, n\}$. Fix distinct points $z_1,\ldots,z_n\in\C$ and denote $\bs z=(z_1,\ldots,z_n)$.
Suppose $f\in \QExpr(a;\bs{z})$. 
Then the Schwarzian derivative of $f$ can be written as in~\eqref{eqn::quasi_exp_Schwarz_explicit}
where $(u_1, \ldots, u_n)\in \C^n$ satisfy~\eqref{eqn::quasi_exp_Schwarz_accessory1} and 
\begin{align}\label{eqn::quasi_exp_Schwarz_accessory2}
\frac{1}{4}\sum_{k=1}^n u_k=(2\mathfrak{r}-n)a.
\end{align}
\end{lemma}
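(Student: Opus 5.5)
Lemma~\ref{lem::Schwarz_quasi_exp_expansion} already gives the representation~\eqref{eqn::quasi_exp_Schwarz_explicit} with $(u_1,\ldots,u_n)$ satisfying~\eqref{eqn::quasi_exp_Schwarz_accessory1}. So the only new content is the linear identity~\eqref{eqn::quasi_exp_Schwarz_accessory2}. We obtain it by comparing the $1/z$ coefficient of $S_f$ at infinity, computed in two ways.

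\textbf{From the explicit form.} Expand~\eqref{eqn::quasi_exp_Schwarz_explicit} as $z\to\infty$. The double-pole terms $-\frac32\sum_k (z-z_k)^{-2}$ contribute only $O(z^{-2})$. Each simple-pole term satisfies $\frac{u_k}{2(z-z_k)}=\frac{u_k}{2z}+O(z^{-2})$. Hence
\[
S_f(z)=-2a^2+\frac{1}{2z}\sum_{k=1}^n u_k+O(z^{-2}).
\]

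\textbf{From the quasi-exponential.} The expansion~\eqref{eqn::Schwarz_expansion_at_infinity}, derived in the proof of Lemma~\ref{lem::Schwarz_quasi_exp_expansion} from~\eqref{eqn::quasi_exp_Schwarz_derivative}, gives
\[
S_f(z)=-2a^2-\frac{2a(n-2\deg Q)}{z}+O(z^{-2}).
\]
I would re-derive this coefficient carefully, since it is the only real computation. Write $f''/f'=2a+\frac{n-2\mathfrak r}{z}+O(z^{-2})$, using $\deg\mathcal W_a(f)=n$ and $\deg Q=\mathfrak r$. Then $(f''/f')'=O(z^{-2})$. Squaring gives $-\frac12(f''/f')^2=-2a^2-\frac{2a(n-2\mathfrak r)}{z}+O(z^{-2})$.

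\textbf{Comparing.} For $f\in\QExpr(a;\bs z)$ we have $\deg Q=\mathfrak r$. Equating the $1/z$ coefficients gives $\frac12\sum_k u_k=2a(2\mathfrak r-n)$, that is, $\frac14\sum_k u_k=(2\mathfrak r-n)a$, which is~\eqref{eqn::quasi_exp_Schwarz_accessory2}.

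The only potential pitfall is the sign bookkeeping in the $1/z$ coefficient and using that $P,Q$ monic coprime with~\eqref{eqn::quasi_exp_prescribed_Wronskian} force $\deg\mathcal W_a(f)=n$. Both are already settled by the earlier discussion, so no serious obstacle is expected.
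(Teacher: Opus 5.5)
Your proposal is correct and follows essentially the paper's proof. You compare the $1/z$ coefficient of the explicit form~\eqref{eqn::quasi_exp_Schwarz_explicit} at infinity with the expansion~\eqref{eqn::Schwarz_expansion_at_infinity} obtained from $f''/f'$, using $\deg Q=\mathfrak{r}$, and your re-derivation of the coefficient $-2a(n-2\mathfrak{r})$ and the resulting sign are right.
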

\begin{proof}
Expanding the right-hand side of~\eqref{eqn::quasi_exp_Schwarz_explicit} at infinity yields
\begin{equation}\label{eqn::Schwarz_expansion_at_infinity_explicit}
    S_f(z)=-2a^2+\frac{\sum_{k=1}^n u_k}{2z} + O(z^{-2}).
\end{equation}
Comparing this with~\eqref{eqn::Schwarz_expansion_at_infinity} gives~\eqref{eqn::quasi_exp_Schwarz_accessory2} as desired.
\end{proof}

\begin{lemma}\label{lem::quasi_exp_Schwarz_accessory_sum}
Fix $a>0$ and $n\ge 1$. Fix distinct points $z_1, \ldots, z_n\in \C$.
Suppose that $(u_1,\ldots,u_n)\in\C^n$ satisfy~\eqref{eqn::quasi_exp_Schwarz_accessory1}.
Then there exists $\mathfrak{r}\in\{0,1,\ldots, n\}$ such that~\eqref{eqn::quasi_exp_Schwarz_accessory2} holds.
\end{lemma}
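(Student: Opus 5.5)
The idea is to realize $\sum_k u_k/4$ as the trace of an $n\times n$ matrix whose only eigenvalues are $\pm a$. If $L$ is diagonalizable or triangularizable with spectrum in $\{a,-a\}$, and $\mathfrak r$ denotes the multiplicity of $+a$, then $\operatorname{tr}L=\mathfrak r a-(n-\mathfrak r)a=(2\mathfrak r-n)a$, which is exactly~\eqref{eqn::quasi_exp_Schwarz_accessory2}.

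\textbf{Construction of the Lax matrix.} Put $v_j:=u_j/4$. Dividing~\eqref{eqn::quasi_exp_Schwarz_accessory1} by $8$ rewrites it as
\[
v_j^2-\sum_{\ell\ne j}\frac{v_\ell}{z_\ell-z_j}-\frac34\sum_{\ell\ne j}\frac{1}{(z_\ell-z_j)^2}=a^2 .
\]
I take a Calogero--Moser type Lax matrix
\[
L_{jj}=v_j+\beta\sum_{\ell\ne j}\frac{1}{z_j-z_\ell},\qquad L_{jk}=\frac{c}{z_j-z_k}\quad (j\ne k),
\]
with constants $\beta,c$ still to be fixed. The $\beta$-terms cancel in the trace, so $\operatorname{tr}L=\sum_k v_k$ for every choice of $\beta$. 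Let $Z=\operatorname{diag}(z_1,\dots,z_n)$ and let $\mathbf e=(1,\dots,1)^T$. Then the commutator is
\[
[Z,L]=c\,(\mathbf e\mathbf e^T-I),
\]
the identity minus a rank-one matrix. This structure is what should control the spectrum.

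\textbf{Computing $L^2-a^2I$.} Next I compute $(L^2-a^2 I)_{jj}$ and $(L^2)_{jk}$. In the diagonal entries the terms $c^2/(z_j-z_\ell)^2$ and the cross terms combine with the accessory equation. I choose $\beta$ and $c$ so that the $v$-linear and the $(z_j-z_\ell)^{-2}$ terms match the equation. The constants $-3/4$ suggest $c^2$ and $\beta$ of size $1/2$ or $3/2$, to be determined by matching.

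The off-diagonal entries will not vanish identically, because they contain $(v_j+v_k)\,c/(z_j-z_k)$. Using~\eqref{eqn::xij_partial_fraction_identity}, I will aim to show that $M:=L^2-a^2I$ is expressible through commutators with $Z$. Concretely, I aim for an identity of the form
\[
M=[Z,N]+(\text{terms annihilated by }\mathbf e^T\text{ or }\mathbf e),
\]
for an explicit matrix $N$ built from $L$.

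\textbf{Spectral conclusion.} From such an identity, and using $[Z,L]=c(\mathbf e\mathbf e^T-I)$, I plan to prove $\operatorname{tr}(L^mM)=0$ for all $m\ge 0$, by induction on $m$ with cyclicity of trace. Then $\operatorname{tr}\bigl(p(L)(L^2-a^2)\bigr)=0$ for every polynomial $p$.

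Decompose into generalized eigenspaces of $L$ and take $p$ to be the polynomial projecting onto the generalized eigenspace of an eigenvalue $\mu$. This gives $m_\mu(\mu^2-a^2)=0$, where $m_\mu$ is the algebraic multiplicity, so every eigenvalue is $\pm a$. Since $a\neq0$ the two eigenvalues are distinct, and the trace formula above follows with $\mathfrak r\in\{0,\dots,n\}$.

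\textbf{Main obstacle.} The main difficulty is finding the right constants $\beta,c$ and the right commutator identity for $L^2-a^2I$. I would first try to prove $L^2=a^2I$ outright, modulo something with vanishing traces against powers of $L$. If that fails, I would test small cases $n=1,2$ by hand to guess the correct $\beta,c$. The case $n=1$ is trivial: $v_1^2=a^2$. Finally, I would verify the three-index cancellations using~\eqref{eqn::xij_partial_fraction_identity}, much as in the proof of Lemma~\ref{lem::quasi_exp_accessory_Poisson}.
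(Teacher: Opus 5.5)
You have the same frame as the paper: a Calogero--Moser Lax matrix with $\operatorname{tr}L=\sum_k u_k/4$, the identity $[Z,L]=c(\mathbf{e}\mathbf{e}^{T}-I)$, and a spectral argument. Your generalized-eigenspace projector step is a valid substitute for the paper's Vandermonde step. But the step that carries the whole content of the lemma, $\operatorname{tr}\bigl(L^m(L^2-a^2I)\bigr)=0$ for all $m\ge 0$, is not proved, and the route you sketch would not deliver it.

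Your fallback $L^2=a^2I$ is false. For $n=2$, $(L^2)_{12}=c(v_1+v_2)/(z_1-z_2)$ for every $\beta$, and the accessory equations have solutions with $v_1+v_2=\pm 2a$ (the $\mathfrak r\in\{0,2\}$ branches). The intended splitting $M=[Z,N]+K$, with $K\mathbf{e}=0$ or $\mathbf{e}^{T}K=0$, also does not by itself make $\operatorname{tr}(L^mM)$ vanish: for general $N,K$, neither $\operatorname{tr}(L^m[Z,N])$ nor $\operatorname{tr}(L^mK)$ is controlled. Your guess for the constants is also off.

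The missing idea has two parts. First, in the right gauge all of $M=L^2-a^2I$ is annihilated by $\mathbf{e}$. Second, the commutator identity should be used to turn traces of powers of $L$ into quadratic forms in $\mathbf{e}$, not for an induction.

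Take $c=1$ and $\beta=-\tfrac12$. Then $L\mathbf{e}=w$ with $w_j=v_j+\tfrac12\sum_{k\ne j}(z_j-z_k)^{-1}$. This $w_j$ is $\tfrac14$ of the paper's shifted parameter $u_j-2A_j$, and your $L$ is $\tfrac14$ of the paper's. The accessory equations \eqref{eqn::quasi_exp_Schwarz_accessory1} are equivalent, row by row, to $Lw=a^2\mathbf{e}$. Hence $(L^2-a^2I)\mathbf{e}=0$ holds exactly.

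By cyclicity, $0=\operatorname{tr}([Z,L]L^m)=\mathbf{e}^{T}L^m\mathbf{e}-\operatorname{tr}(L^m)$, so
\[
\operatorname{tr}(L^{m+2})-a^2\operatorname{tr}(L^m)=\mathbf{e}^{T}L^m(L^2-a^2I)\,\mathbf{e}=0,\qquad m\ge 0.
\]
This conversion $\operatorname{tr}(L^mM)=\mathbf{e}^{T}L^mM\mathbf{e}$ works only because $M$ is a polynomial in $L$. That is why the annihilation by $\mathbf{e}$ must hold for $M$ itself, not for pieces of it. With this identity in hand, your spectral step finishes the proof as in the paper.
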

\begin{proof}
Define
\begin{equation}\label{eqn::quasi_exp_accessory_parameter_v}
    A_j:=\sum_{\ell\ne j}\frac{1}{z_\ell-z_j}, \qquad v_j:=u_j-2A_j, \qquad \text{for }j\in\{1,\ldots,n\}.
\end{equation}
A direct calculation gives
\[
    v_j^2-4\sum_{\ell\ne j}\frac{v_\ell-v_j}{z_\ell-z_j}
    =u_j^2-4\sum_{\ell\ne j}\frac{u_\ell}{z_\ell-z_j}
      -12\sum_{\ell\ne j}\frac{1}{(z_\ell-z_j)^2}.
\]
Then Eq.~\eqref{eqn::quasi_exp_Schwarz_accessory1} is equivalent to
\begin{equation}\label{eqn::quasi_exp_accessory_parameter_equations_v}
    v_j^2-16a^2=4\sum_{\ell\ne j}\frac{v_\ell-v_j}{z_\ell-z_j}, \qquad \text{for }j\in\{1,\ldots,n\}.
\end{equation}
\medbreak
Let $\bs{v}:=(v_1,\ldots,v_n)^{\mathsf T}$ and introduce the Lax matrix
\begin{equation}
    L_{jk}:=
    \begin{cases}
        4(z_j-z_k)^{-1},&j\ne k,\\
        v_j+4A_j,&j=k.
    \end{cases}
\end{equation}
Writing $\boldsymbol{1}:=(1,\ldots,1)^{\mathsf T}$, we have $L\boldsymbol{1}=\bs{v}$, while \eqref{eqn::quasi_exp_accessory_parameter_equations_v} gives $L\bs{v}=16a^2\boldsymbol{1}$.
Thus
\begin{equation}\label{eqn::quasi_exp_Lax_quadratic_on_one}
    L^2\boldsymbol{1}=16a^2\boldsymbol{1}.
\end{equation}
Set
\[
    Z:=\operatorname{diag}(z_1,\ldots,z_n), \qquad J:=\boldsymbol{1}\boldsymbol{1}^{\mathsf T}.
\]
Then $[Z,L]=4(J-I)$. Hence, for every $r\ge0$,
\[
    0=\frac14\operatorname{tr}([Z,L]L^r)
    =\boldsymbol{1}^{\mathsf T}L^r\boldsymbol{1}-\operatorname{tr}(L^r).
\]
Together with~\eqref{eqn::quasi_exp_Lax_quadratic_on_one}, this gives
\[
    \operatorname{tr}(L^{r+2})=16a^2\operatorname{tr}(L^r),
    \qquad r\ge0.
\]
Let $\chi_1,\ldots,\chi_s$ be the distinct eigenvalues of $L$, with algebraic multiplicities $m_1,\ldots,m_s$.
It follows that
\[
    \sum_{k=1}^s m_k(\chi_k^2-16a^2)\chi_k^r=0,
    \qquad \text{for }r\in\{0,\ldots,s-1\}.
\]
The corresponding Vandermonde matrix is invertible, thus $\chi_k^2=16a^2$ for every $k$.
Therefore, all eigenvalues of $L$ belong to $\{4a,-4a\}$.
Let $\mathfrak r\in\{0,1,\ldots,n\}$ be the algebraic multiplicity of the eigenvalue $4a$.
Since $\sum_{k=1}^n A_k=0$, we obtain
\[
    \frac{1}{4}\sum_{k=1}^n u_k
    =\frac14\operatorname{tr}(L)
    =(2\mathfrak r-n)a,
\]
as desired.
\end{proof}

\begin{lemma}\label{lem::Schwarz_quasi_exp_expansion_refined_reverse}
Fix $a>0$ and $n\geq 1$ and $\mathfrak{r}\in\{0,1,\ldots, n\}$. Fix distinct points $z_1,\ldots,z_n\in\C$ and denote $\bs z=(z_1,\ldots,z_n)$.
Suppose that $(u_1, \ldots, u_n)\in \C^n$ satisfy~\eqref{eqn::quasi_exp_Schwarz_accessory1} and~\eqref{eqn::quasi_exp_Schwarz_accessory2}. Then there exists $f\in \QExpr(a;\bs{z})$ whose Schwarzian derivative is given by~\eqref{eqn::quasi_exp_Schwarz_explicit}.
\end{lemma}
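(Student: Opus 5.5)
Set $\psi_{\bs u}$ as in~\eqref{eqn::accessory_ODE_coefficient} and consider the ODE $w''+\psi_{\bs u}w=0$. The plan is to produce two entire-type solutions of the form $\ee^{az}P(z)/\sqrt{W(z)}$ and $\ee^{-az}Q(z)/\sqrt{W(z)}$, where $W(z)=\prod_{k=1}^n(z-z_k)$, and to take their ratio. Lemma~\ref{lem::Schwarz_ODE} then gives $S_f=2\psi_{\bs u}$, which is~\eqref{eqn::quasi_exp_Schwarz_explicit}.

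\textbf{Step 1: local behaviour at the points $z_j$.} The expansion of $\psi_{\bs u}$ at $z_j$, computed in the proof of Lemma~\ref{lem::Schwarz_quasi_exp_expansion}, has $q_0=-3/4$, $q_1=u_j/4$, and $q_2=q_2^{(j)}$ given by~\eqref{eqn::accessory_ODE_coefficient2}. Equation~\eqref{eqn::quasi_exp_Schwarz_accessory1} is exactly $q_1^2+q_2=0$. By Lemma~\ref{lem::Schwarz_ODE_critical}, every local solution is therefore $(z-z_j)^{-1/2}$ times a holomorphic function, since the exponents $3/2$ and $-1/2$ differ by an integer. Consequently, for any solution $w$, the function $g:=\sqrt{W}\,w$ continues analytically along every path in $\C\setminus\{z_k\}$ and is locally holomorphic at each $z_k$. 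Because $\C$ is simply connected, $g$ is single-valued and entire. Equivalently, $g$ solves the transformed ODE $g''-(W'/W)g'+\tilde\psi g=0$, whose finite singular points all have exponents $0$ and $2$ with no logarithms. This gives a two-dimensional space of entire solutions.

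\textbf{Step 2: behaviour at $\infty$.} Expanding at infinity gives $\varphi=-W'/W=-n/z+O(z^{-2})$ and $\tilde\psi=-a^2+c_1/z+O(z^{-2})$. The constant $c_1$ is computed from $\frac14\sum u_k$ together with the $W''/W$ and $(W'/W)^2$ terms. The characteristic roots are $\chi=\pm a$, which are distinct, and~\eqref{eqn::rank_one_irregular_formal_exponents} gives $\mu_\pm$ in terms of $\sum u_k$. This step is the computational crux: I expect that~\eqref{eqn::quasi_exp_Schwarz_accessory2} makes $\mu_+=n-\mathfrak r$ and $\mu_-=\mathfrak r$, with both in $\Z_{\ge0}$. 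The entire solutions are single-valued near $\infty$. Taking the sectorial solutions $g_1,g_2$ as combinations of entire ones, Lemma~\ref{lem::rank_one_irregular_single_valued} shows that $g_1\sim\ee^{az}z^{n-\mathfrak r}(1+O(1/z))$ holds in all directions. Then $\ee^{-az}g_1$ is entire of polynomial growth, hence a polynomial $P$ of degree $n-\mathfrak r$. Similarly $g_2=\ee^{-az}Q$ with $\deg Q=\mathfrak r$, and both $P$ and $Q$ are monic.

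The delicate point is that Lemma~\ref{lem::rank_one_irregular_single_valued} assumes single-valued sectorial solutions. This holds because every solution of the $g$-equation is entire, so its continuation is trivially single-valued. The Stokes formulas then force $C_1=C_2=0$ and the integrality of the exponents. In fact, integrality of $\mu_\pm$ comes out automatically, which gives a consistency check on the computation of $c_1$.

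\textbf{Step 3: conclusion.} Set $f=g_1/g_2=\ee^{2az}P/Q$. By Lemma~\ref{lem::Schwarz_ODE}, applied to $w_i=g_i/\sqrt W$, we get $S_f=2\psi_{\bs u}$. Independence of $g_1$ and $g_2$ makes $f$ nonconstant, and $P,Q$ are coprime because a common zero would make the Wronskian vanish there. For the Wronskian, $\operatorname{Wr}(w_2,w_1)$ is a nonzero constant, so $\operatorname{Wr}(\ee^{-az}Q,\ee^{az}P)=\operatorname{Wr}(g_2,g_1)=cW$. Comparing leading coefficients gives $c=2a$, hence $\mathcal W_a(f)=W$ and $f\in\QExpr(a;\bs z)$.
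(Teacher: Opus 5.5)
Your proposal is correct and follows the paper's proof essentially step by step. The steps match: the gauge $g=\sqrt{W}\,w$ turns the log-free exponents $0,2$ into entire solutions; the rank-one irregular point at $\infty$ has $\chi=\pm a$ and $(\mu_1,\mu_2)=(n-\mathfrak r,\mathfrak r)$, which does come out because the $W''/W$ and $(W'/W)^2$ contributions are only $O(z^{-2})$, so $c_1=\frac14\sum_k u_k=(2\mathfrak r-n)a$; Lemma~\ref{lem::rank_one_irregular_single_valued} plus Liouville gives monic $P,Q$; and the leading coefficients give $\operatorname{Wr}(\ee^{-az}Q,\ee^{az}P)=2aW$.

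The one step to tighten is coprimality at the points $z_k$, where the Wronskian $2aW$ does vanish. There, a common zero would make $P'Q-PQ'+2aPQ$ vanish to second order, contradicting that $W$ has simple zeros; this is exactly how the paper concludes.
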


\begin{proof}
Again let $\psi_{\bs{u}}$ be defined by~\eqref{eqn::accessory_ODE_coefficient}.
Consider the equation
\[
    w''(z)+\psi_{\bs{u}}(z)w(z)=0.
\]
At each $z_j$, Eq.~\eqref{eqn::quasi_exp_Schwarz_accessory1} gives $u_j^2/16+q_2^{(j)}=0$, where $q_2^{(j)}$ is defined by~\eqref{eqn::accessory_ODE_coefficient2}.
Lemma~\ref{lem::Schwarz_ODE_critical} therefore shows that, at each $z_j$, the equation admits a local log-free Frobenius basis with exponents $-1/2$ and $3/2$.
Set
\begin{equation}\label{eqn::quasi_exp_R_def}
    R(z):=\prod_{k=1}^n(z-z_k).
\end{equation}
For a local solution $w$, choose a local branch of $\sqrt{R}$ and set the gauged solution $y=\sqrt{R}\,w$.
A direct calculation gives
\begin{equation}\label{eqn::quasi_exp_gauged_ODE}
    y''(z)-\frac{R'(z)}{R(z)}y'(z)
    +\left(-a^2+\frac14\sum_{k=1}^n\frac{v_k}{z-z_k}\right)y(z)=0,
\end{equation}
where $v_j$ is defined by~\eqref{eqn::quasi_exp_accessory_parameter_v}.
The local exponents of~\eqref{eqn::quasi_exp_gauged_ODE} at each $z_j$ are $0$ and $2$, and no logarithmic term occurs.
Consequently, every solution of~\eqref{eqn::quasi_exp_gauged_ODE} extends holomorphically across each $z_j$ and hence to a single-valued entire function.
\medbreak
Next we analyze the singularity of $\infty$ for~\eqref{eqn::quasi_exp_gauged_ODE}.  Eq.~\eqref{eqn::quasi_exp_Schwarz_accessory2} gives $\sum_{k=1}^n v_k=4(2\mathfrak r-n)a$, and hence
\[
    -\frac{R'(z)}{R(z)}=-\frac{n}{z}+O(z^{-2}),
    \qquad
    -a^2+\frac14\sum_{k=1}^n\frac{v_k}{z-z_k}
    =-a^2+\frac{(2\mathfrak{r}-n)a}{z}+O(z^{-2}).
\]
Thus $\infty$ is an irregular singular point for~\eqref{eqn::quasi_exp_gauged_ODE}.
In the notation of~\eqref{eqn::rank_one_irregular_coefficient_expansions},
\[
    \varphi_0=0,
    \qquad
    \varphi_1=-n,
    \qquad
    \psi_0=-a^2,
    \qquad
    \psi_1=(2\mathfrak{r}-n)a.
\]
The roots of the characteristic equation~\eqref{eqn::rank_one_irregular_characteristic_equation} are $\chi_1=a$ and $\chi_2=-a$, and~\eqref{eqn::rank_one_irregular_formal_exponents} gives $\mu_1=n-\mathfrak{r}$ and $\mu_2=\mathfrak{r}$.
Lemma~\ref{lem::rank_one_irregular_single_valued} thus gives linearly independent solutions $y_1,y_2$ satisfying the asymptotic expansions
\[
    y_1(z)\sim\ee^{az}z^{n-\mathfrak{r}}\left(1+\sum_{s\ge1}\frac{a_{s,1}}{z^s}\right),
    \qquad
    y_2(z)\sim\ee^{-az}z^{\mathfrak{r}}\left(1+\sum_{s\ge1}\frac{a_{s,2}}{z^s}\right),\qquad\text{as }z\to\infty\text{ in all directions}.
\]
Consequently,
\[
    P(z):=\ee^{-az}y_1(z),
    \qquad
    Q(z):=\ee^{az}y_2(z)
\]
are entire functions satisfying $P(z)=O(|z|^{n-\mathfrak{r}})$ and $Q(z)=O(|z|^{\mathfrak{r}})$ as $z\to\infty$ in all directions.
The polynomial-growth form of Liouville's theorem shows that $P$ and $Q$ are polynomials of degrees no more than $n-\mathfrak{r}$ and $\mathfrak{r}$, respectively.
Also, the leading terms in the preceding asymptotic expansions give
\[
    \frac{P(z)}{z^{n-\mathfrak{r}}}\longrightarrow 1,
    \qquad
    \frac{Q(z)}{z^{\mathfrak{r}}}\longrightarrow 1, \qquad \text{ as } z\to\infty.
\]
Therefore, $P$ and $Q$ have degrees exactly $n-\mathfrak{r}$ and $\mathfrak{r}$, respectively, and are monic.
\medbreak
Now set
\[
    w_+(z):=\frac{\ee^{az}P(z)}{\sqrt{R(z)}},
    \qquad
    w_-(z):=\frac{\ee^{-az}Q(z)}{\sqrt{R(z)}}.
\]
They are two linearly independent solutions of~\eqref{eqn::ODE_general} with $\psi=\psi_{\bs{u}}$. Since~\eqref{eqn::ODE_general} has no first-derivative term, the Wronskian of $w_-$ and $w_+$ is constant.
By comparing the leading coefficients, we have
\[
    \operatorname{Wr}(w_-,w_+)
    =\frac{P'Q-PQ'+2aPQ}{R}
    =2a.
\]
Since $R$ has only simple zeros, this identity also shows that $P$ and $Q$ are coprime. Therefore,
\[
    \mathcal W_a(f)(z)=R(z),
    \qquad
    f(z):=\frac{w_+(z)}{w_-(z)}
          =\ee^{2az}\frac{P(z)}{Q(z)}\in\QExpr(a;\bs{z}).
\]
Finally, Lemma~\ref{lem::Schwarz_ODE} gives $S_f=2\psi_{\bs{u}}$, which is~\eqref{eqn::quasi_exp_Schwarz_explicit}.
This completes the proof.
\end{proof}
\begin{proof}[Proof of Proposition~\ref{prop::quasi_exp_Schwarz}]
The forward implication is Lemma~\ref{lem::Schwarz_quasi_exp_expansion}.
For the converse, Lemma~\ref{lem::quasi_exp_Schwarz_accessory_sum} supplies the condition at infinity~\eqref{eqn::quasi_exp_Schwarz_accessory2}, and Lemma~\ref{lem::Schwarz_quasi_exp_expansion_refined_reverse} constructs the required quasi-exponential.
\end{proof}

\subsection{Classification of quasi-exponentials}
\label{subsec::quasi_exp_enumeration}
In the following lemma, we collect results on the classification of quasi-exponentials from~\cite{MukhinTarasovVarchenkoSpacesQuasiExponentials}.
We need both an exact count for each prescribed real configuration and smooth real continuation as the critical points vary.
\begin{lemma}\label{lem::QExprax_enumeration}
Fix $a>0$ and $n\geq 1$ and $\mathfrak{r}\in\{0,1,\ldots,n\}$. Fix $\bs{x}\in\LX_n$. The number of quasi-exponentials in $\QExpr(a;\bs{x})$ is $\binom{n}{\mathfrak{r}}$:
\[
    \#\QExpr(a;\bs{x})=\binom{n}{\mathfrak{r}}.
\]
Moreover, for each $f(z)=\ee^{2az}P(z)/Q(z)\in \QExpr(a;\bs{x})$ where $P, Q$ are nonzero monic coprime polynomials, the $n$ non-leading coefficients of $P$ and $Q$ are real and locally smooth in $\bs{x}$.
\end{lemma}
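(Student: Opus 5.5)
The plan is to translate membership in $\QExpr(a;\bs{x})$ into the language of spaces of quasi-exponentials in~\cite{MukhinTarasovVarchenkoSpacesQuasiExponentials} and then quote their enumeration, reality and transversality theorems.

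\textbf{Step 1: the two-dimensional space.} To $f=\ee^{2az}P/Q\in\QExpr(a;\bs{x})$ we associate $V_f=\operatorname{span}\{\ee^{az}P,\ee^{-az}Q\}$. This is a two-dimensional space of quasi-exponentials with exponents $(a,-a)$, and the polynomial parts have degrees $(n-\mathfrak r,\mathfrak r)$. By~\eqref{eqn::quasi_exp_Wronskian} and~\eqref{eqn::quasi_exp_prescribed_Wronskian}, its Wronskian is $2a\,R(z)$ with $R(z)=\prod_k(z-x_k)$.

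Conversely, take any such space with Wronskian a constant multiple of $R$. Since $a\neq -a$, the space has a unique basis of the form $\ee^{az}P,\ \ee^{-az}Q$ with $P,Q$ monic. Coprimality of $P$ and $Q$ is forced: a common root would be a multiple root of the Wronskian, but $R$ has simple roots. The Möbius normalization makes $f\mapsto V_f$ a bijection, because post-composition by a Möbius map only changes the basis. So $\#\QExpr(a;\bs{x})$ equals the number of such spaces.

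\textbf{Step 2: enumeration and reality.} In the MTV setting, the $(P,Q)$ with prescribed Wronskian are the points of an intersection of Schubert varieties. The exponent data $(a,-a)$ with degree data $(n-\mathfrak r,\mathfrak r)$ correspond to the condition at infinity. Their Corollary~7.4(ii), combined with the quasi-exponential version of the Shapiro conjecture, states that for real distinct $x_k$ and real distinct exponents:
\begin{itemize}
\item every such space is real;
\item the intersection is transversal;
\item the count equals the multiplicity of the relevant $\mathfrak{gl}_2$ weight in $(\C^2)^{\otimes n}$, which is $\binom{n}{\mathfrak r}$.
\end{itemize}
A real space has a real monic basis by uniqueness of the normalized basis: complex conjugation preserves the space and the normalization. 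Hence the coefficients of $P$ and $Q$ are real.

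\textbf{Step 3: smoothness.} Write the unknowns as the $n-\mathfrak r$ non-leading coefficients of $P$ and the $\mathfrak r$ non-leading coefficients of $Q$, so $n$ unknowns in total. Identity~\eqref{eqn::quasi_exp_Wronskian}, $P'Q-PQ'+2aPQ=2aR$, gives a polynomial identity of degree $n$ whose leading coefficients already agree. This yields exactly $n$ polynomial equations in the $n$ unknowns, with coefficients polynomial in $\bs{x}$ through the elementary symmetric functions.

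Transversality in Step 2 means the Jacobian of this square system is invertible at each solution. I would double-check this by observing that the count $\binom{n}{\mathfrak r}$ equals the generic Bézout-type degree of the system, so every root is simple. The holomorphic implicit function theorem then gives locally analytic dependence on $\bs{x}$, and reality (Step 2) makes these real-analytic, hence smooth, real branches.

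\textbf{Main obstacle.} The main difficulty is Step 2, specifically matching our normalization exactly to the hypotheses of MTV: exponents, degree data, and Wronskian normalization. Transversality is the part most likely to need care. If the citation only gives reality plus a count with multiplicity, I would first try to deduce simplicity from the fact that the total count $\sum_{\mathfrak r}\binom{n}{\mathfrak r}=2^n$ attains the Bézout bound for~\eqref{eqn::quasi_exp_Schwarz_accessory1}. By Proposition~\ref{prop::quasi_exp_Schwarz}, that system is in bijection with $\QExp(a;\bs{x})$.
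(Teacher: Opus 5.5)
Your proposal is correct and follows essentially the same route as the paper. You identify $\QExpr(a;\bs{x})$ with the Schubert intersection via $\operatorname{span}\{\ee^{az}P,\ee^{-az}Q\}$, quote~\cite[Corollary~7.4(ii)]{MukhinTarasovVarchenkoSpacesQuasiExponentials} for the transversal count $\binom{n}{\mathfrak r}$, get reality from conjugation invariance of the space together with the monic normalization, and get local smoothness from invertibility of the differential of the Wronski map via the inverse or implicit function theorem.

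The fallback in your last paragraph is not needed, since the cited corollary already gives distinct points and transversality. It would also not work: a count with multiplicity attaining the B\'ezout bound says nothing about simplicity, and in the paper the $2^n$ distinct accessory solutions are themselves deduced from this lemma.
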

\begin{proof}
This lemma is essentially a restatement of~\cite[Corollary~7.4(ii)]{MukhinTarasovVarchenkoSpacesQuasiExponentials} in our notation. Let $\chi=(\chi_1,\chi_2)$ be the decreasing rearrangement of $(n-\mathfrak{r},\mathfrak{r})$.
We associate $f(z)=\ee^{2az}P(z)/Q(z)\in \QExpr(a;\bs{x})$ with the two-dimensional space of quasi-exponentials
\[
    X_f:=\operatorname{span}_{\C}\bigl\{\ee^{az}P(z),\ee^{-az}Q(z)\bigr\}.
\]
Changing a basis of $X_f$ corresponds to post-composing $f$ with a M\"obius transformation.
The degree condition means that $X_f$ belongs to the affine Schubert cell $\Omega_\chi$ of~\cite[Section~4.1]{MukhinTarasovVarchenkoSpacesQuasiExponentials}.
Moreover, the condition that $x_j$ is a simple critical point of $f$ is expressed by the condition that $X_f$ belongs to the Schubert variety $\Omega_{(1,0)}(x_j)$ of
\cite[Section~5.1]{MukhinTarasovVarchenkoSpacesQuasiExponentials}. Hence
\[
    \QExpr(a;\bs{x})\simeq
    \Omega_{\Lambda,\chi,\bs{x}}
    :=\Omega_\chi\cap\bigcap_{k=1}^n\Omega_{(1,0)}(x_k),
    \qquad \Lambda=((1,0),\ldots,(1,0)).
\]
Since the exponents $-a$ and $a$ are distinct real numbers and the critical points $x_1,\ldots,x_n$ are distinct real numbers, the cited corollary states that this Schubert intersection is transversal and consists of
\[
    \dim\bigl((\C^2)^{\otimes n}\bigr)_\chi
    =\frac{n!}{\chi_1!\chi_2!}
    =\binom{n}{\mathfrak{r}}
\]
distinct points. This proves the counting statement.
\medbreak
We explain how transversality gives local smooth dependence. Let
\[
    \mathscr P_{n}:=z^n+\C[z]_{\leq n-1},
\]
and define the Wronski map
\[
    \pi_{\mathfrak r}:\Omega_\chi\rightarrow \mathscr P_{n},\qquad
    X_f\mapsto\mathcal W_a(f).
\]
Here $\pi_{\mathfrak r}^{-1}(R_{\bs{x}})=\Omega_{\Lambda,\chi,\bs{x}}$, and $R_{\bs{x}}(z)=\prod_{k=1}^n(z-x_k)$ is defined as in~\eqref{eqn::quasi_exp_R_def}.
For fixed $\bs{x}\in\LX_n$, define
\[
    G_{\bs{x}}:\Omega_\chi\rightarrow\C^n,\qquad
    X\mapsto
    \bigl(\pi_{\mathfrak r}(X)(x_1),\ldots,
          \pi_{\mathfrak r}(X)(x_n)\bigr).
\]
The evaluation map
\[
    \operatorname{ev}_{\bs{x}}: T_{R_{\bs{x}}}\mathscr P_{n}
    \rightarrow\C^n,\qquad
    \dot R\mapsto
    \bigl(\dot R(x_1),\ldots,\dot R(x_n)\bigr),
\]
is an isomorphism because the $x_k$ are distinct. Thus transversality means
that
\[
    \ud_X G_{\bs{x}}
    =\operatorname{ev}_{\bs{x}}\circ\,\ud_X\pi_{\mathfrak r}
\]
is invertible, and hence $\ud_X\pi_{\mathfrak r}$ is invertible, at every
$X\in\pi_{\mathfrak r}^{-1}(R_{\bs{x}})$.
The inverse function theorem, together with the smooth dependence of $R_{\bs x}$ on $\bs x$, therefore
gives a local smooth branch
\[
    \bs x\mapsto
    \bigl(\pi_{\mathfrak r}|_V\bigr)^{-1}(R_{\bs x})
\]
through $X$, where $V$ is a neighborhood of $X$ in $\Omega_\chi$.
Since $\Omega_\chi\simeq\C^n$ with the non-leading coefficients of $P$
and $Q$ as coordinates, this proves the asserted local smooth continuation.
\medbreak
Finally we explain the reality statement. It follows from~\cite[Section~7.1]{MukhinTarasovVarchenkoSpacesQuasiExponentials} that every $X_f\in\Omega_{\Lambda,\chi,\bs{x}}$ is real\footnote{The cited corollary~\cite[Corollary~7.4(ii)]{MukhinTarasovVarchenkoSpacesQuasiExponentials} itself does not mention the reality of the spaces of quasi-exponentials. However, this result can indeed be derived from self-adjointness of the Bethe operators as in~\cite[Theorem~7.1]{MukhinTarasovVarchenkoSpacesQuasiExponentials}. See~\cite[Theorem~1.2]{KarpMukhinTarasovPositivityQuasiExponentials} for a clear statement.}. More precisely,
\[
    \sigma(X_f) = X_f,
\]
where $\sigma f(z):=\overline{f(\overline z)}$. In particular, the two subspaces
\[
V_a:=X_f \cap \ee^{az}\C[z]=\C\ee^{az}P(z),\qquad V_{-a}:= X_f \cap \ee^{-az}\C[z]=\C\ee^{-az}Q(z)
\]
are preserved by $\sigma$. Hence
\[
    \overline{P(\overline z)}=c_+P(z),
    \qquad
    \overline{Q(\overline z)}=c_-Q(z)
\]
for some $c_+,c_-\in\C\setminus\{0\}$. Since $P$ and $Q$ are monic,
$c_+=c_-=1$. Therefore $P,Q\in\R[z]$, meaning that the canonical representative $\ee^{2az}P/Q$ of the M\"obius equivalence class has real polynomial coefficients.
\end{proof}

\begin{proposition}\label{prop::quasi_exp_enumeration}
Fix $a>0$ and $n\ge 1$. 
Fix $\bs{x}=(x_1, \ldots, x_n)\in\LX_n$. The number of quasi-exponentials in $\QExp(a;\bs{x})$ is $2^n$:
\begin{align}\label{eqn::QExpax_enumeration}
\#\QExp(a;\bs{x})= 2^n.
\end{align}
Moreover, as $\bs x$ varies over $\LX_n$, these quasi-exponentials form $2^n$ global branches. Along each branch, the $n$ non-leading coefficients of the nonzero monic coprime polynomials $P$ and $Q$ in $f(z)=\ee^{2az}P(z)/Q(z)$ are real-valued smooth functions of $\bs x$.
\end{proposition}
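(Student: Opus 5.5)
The count follows directly from the disjoint decomposition \eqref{eqn::quasi_exp_prescribed_Wronskian_degree} together with Lemma~\ref{lem::QExprax_enumeration}. For each fixed $\bs x\in\LX_n$ we have
\[
\#\QExp(a;\bs x)=\sum_{\mathfrak r=0}^n\#\QExpr(a;\bs x)=\sum_{\mathfrak r=0}^n\binom{n}{\mathfrak r}=2^n .
\]
So the substance of the proposition is the global branch structure. I would handle each $\mathfrak r$ separately, since $\mathfrak r=\deg Q$ is a discrete invariant. A continuous family cannot change $\mathfrak r$ because the coordinates live in different spaces $\Omega_\chi$. Alternatively, $\mathfrak r$ is determined by $\sum_k u_k$ through \eqref{eqn::quasi_exp_Schwarz_accessory2}.

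Fix $\mathfrak r$ and consider the incidence set
\[
E_{\mathfrak r}:=\{(\bs x,X)\in\LX_n\times\Omega_\chi:\ \pi_{\mathfrak r}(X)=R_{\bs x}\}.
\]
By Lemma~\ref{lem::QExprax_enumeration}, each fiber over $\bs x$ has exactly $\binom{n}{\mathfrak r}$ points, and $\ud_X\pi_{\mathfrak r}$ is invertible at each of them. By the inverse function theorem, the projection $p:E_{\mathfrak r}\to\LX_n$ is then a local diffeomorphism near every point, with smooth local inverses $\bs x\mapsto(\pi_{\mathfrak r}|_V)^{-1}(R_{\bs x})$.

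I want to upgrade this to a finite covering map. The main step is properness, or equivalently that roots cannot escape to infinity as $\bs x$ moves within compact subsets of $\LX_n$. I would argue this as follows. The Wronski map $\pi_{\mathfrak r}:\Omega_\chi\simeq\C^n\to\mathscr P_n$ is a polynomial map between affine spaces of equal dimension. A priori, a sequence $X_m$ with $\pi_{\mathfrak r}(X_m)\to R_{\bs x}$ and $|X_m|\to\infty$ could exist. To rule this out I would use the fact, implicit in the Schubert-calculus framework of \cite{MukhinTarasovVarchenkoSpacesQuasiExponentials}, that the Wronski map is finite, i.e.\ proper, on the affine cell. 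If I prefer not to rely on that, I would use a counting argument instead. The fiber cardinality is constant, equal to $\binom{n}{\mathfrak r}$, on the connected set $\LX_n$, and all points are nondegenerate. If a branch escaped to infinity as $\bs x_m\to\bs x$, then the local branches through the $\binom{n}{\mathfrak r}$ points over $\bs x$ would already account for $\binom{n}{\mathfrak r}$ distinct points over $\bs x_m$ for $m$ large. The escaping point would then be an extra one, contradicting the exact count. Running the same argument in reverse shows that no branch can disappear. This yields local triviality of $p$ over small balls, so $p$ is a $\binom{n}{\mathfrak r}$-sheeted covering. I expect this properness step to be the main obstacle.

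Since $\LX_n$ is convex, hence simply connected, the covering $p$ is trivial. So $E_{\mathfrak r}$ is a disjoint union of $\binom{n}{\mathfrak r}$ smooth global sections $\bs x\mapsto X(\bs x)$. Summing over $\mathfrak r$ gives $2^n$ global branches. Along each branch, the non-leading coefficients of $P$ and $Q$ are smooth functions of $\bs x$, since they are the coordinates of $\Omega_\chi$. They are real-valued by the reality statement in Lemma~\ref{lem::QExprax_enumeration}, which applies at every $\bs x\in\LX_n$.
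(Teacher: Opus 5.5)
Your proposal is correct and follows the paper's proof. You count via \eqref{eqn::quasi_exp_prescribed_Wronskian_degree} and Lemma~\ref{lem::QExprax_enumeration}. You then show that, for each $\mathfrak r$, the projection from the incidence set over $\LX_n$ is a local diffeomorphism whose fibers all have exactly $\binom{n}{\mathfrak r}$ points, hence a trivial covering over the simply connected chamber, with reality supplied by the lemma. Your counting argument (the disjoint local inverse branches already exhaust every nearby fiber) is exactly the justification behind the paper's one-line assertion that $p_{\mathfrak r}$ is a covering, so the properness or finiteness of the Wronski map that you flag as the main obstacle is not needed.
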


\begin{proof}
Recall that $\QExp(a;\bs{x})=\bigsqcup_{\mathfrak{r}=0}^n \QExpr(a;\bs{x})$.
Combining with Lemma~\ref{lem::QExprax_enumeration}, we have
\[\#\QExp(a;\bs{x})=\sum_{\mathfrak{r}=0}^n \binom{n}{\mathfrak{r}}=2^n,\]
which proves~\eqref{eqn::QExpax_enumeration} as desired.
\medbreak
It remains to globalize the local smooth branches. Fix
$\mathfrak r\in\{0,1,\ldots,n\}$, let $\chi=(\chi_1,\chi_2)$ be the
decreasing rearrangement of $(\mathfrak r,n-\mathfrak r)$, and retain the
notation $\Omega_\chi$ and $\pi_{\mathfrak r}$ from the proof of
Lemma~\ref{lem::QExprax_enumeration}.
Let
\[
    \mathcal E_{\mathfrak r}
    :=\{(\bs x,X)\in\LX_n\times\Omega_\chi:
          \pi_{\mathfrak r}(X)=R_{\bs x}\},
\]
and let
\[
    p_{\mathfrak r}:\mathcal E_{\mathfrak r}\longrightarrow\LX_n,
    \qquad
    (\bs x,X)\longmapsto\bs x.
\]
The map $p_{\mathfrak r}$ is a local smooth diffeomorphism, and every fiber contains exactly $\binom{n}{\mathfrak r}$ points.
Therefore $p_{\mathfrak r}$ is a $\binom{n}{\mathfrak r}$-sheeted covering.
Since $\LX_n$ is contractible, this covering is trivial.
Thus we obtain $\binom{n}{\mathfrak r}$ global sections of $p_{\mathfrak r}$.
These sections are smooth because they are locally given by the smooth inverse branches of $p_{\mathfrak r}$.
Finally, since $\Omega_\chi\simeq\C^n$ with the non-leading coefficients
of $P$ and $Q$ as coordinates, these coefficients are smooth functions of
$\bs{x}\in\LX_n$ along each global section.
\end{proof}

\subsection{Proof of Proposition~\ref{prop::quasi_exp_accessory_exact} and Theorem~\ref{thm::BPZ_solutions}}
\label{subsec::BPZ_solutions}

\begin{proof}[Proof of Proposition~\ref{prop::quasi_exp_accessory_exact}]
Fix $\lambda>0$ and $n\ge 1$.
Let $\bs{u}=(u_1,\ldots,u_n):\LX_n\to\C^n$ be continuous and satisfy the accessory parameter equations~\eqref{eqn::quasi_exp_Schwarz_accessory1_x} at every $\bs{x}\in\LX_n$:
\begin{align*}
    \frac12u_j^2-\sum_{\ell\ne j}\left(\frac{2u_\ell}{x_\ell-x_j}+\frac{6}{(x_\ell-x_j)^2}\right) & =\lambda, \qquad \text{for }j\in\{1,\ldots,n\}.
\end{align*}
By Propositions~\ref{prop::quasi_exp_Schwarz} and~\ref{prop::quasi_exp_enumeration}, as $\bs x$ varies over $\LX_n$, the solution set of~\eqref{eqn::quasi_exp_Schwarz_accessory1_x} is the disjoint union of $2^n$ global smooth real-valued sheets. Indeed, the accessory parameters $u_j$ in~\eqref{eqn::quasi_exp_Schwarz_explicit} depend smoothly on the coefficients of the corresponding quasi-exponential, and they are real because that quasi-exponential has a representative with real polynomial coefficients. Since $\LX_n$ is connected, a continuous section cannot switch between the disjoint sheets. Thus $\bs u$ lies on one fixed sheet and is real-valued and smooth.
\medbreak
For
$(\bs{x},\bs{u})\in\LX_n\times\R^n$ and
$j\in\{1,\ldots,n\}$, define $\Ham_j(\bs{x};\bs{u})$ as in~\eqref{eqn::Hamilton_functional} and we write $\bs{\Ham}=(\Ham_1,\ldots,\Ham_n)$.
Differentiating
$\bs{\Ham}(\bs x;\bs{u}(\bs x))=\lambda\boldsymbol{1}$ gives the Jacobian identity
\begin{equation}\label{eqn::accessory_differentiated_matrix}
    D_{\bs x}\bs \Ham
    +D_{\bs u}\bs \Ham\,D_{\bs x}\bs u=0,
\end{equation}
where all the Jacobians of $\bs \Ham$ are evaluated at
$(\bs x;\bs u(\bs x))$.
We first note that $D_{\bs u}\bs \Ham$ is invertible along the chosen branch.
Indeed, for fixed $\bs x\in\LX_n$, regard $\bs \Ham(\bs x; \cdot)-\lambda\mathbf{1}$ as a polynomial map on $\C^n$.
Proposition~\ref{prop::quasi_exp_enumeration} shows that its zero set consists of exactly $2^n$ distinct points.
B\'ezout's theorem therefore shows that every zero has multiplicity one, or equivalently,
$D_{\bs u}\bs \Ham$ is invertible at every zero.
\medbreak
Next, let us show~\eqref{eqn::u_accessory_exact}: $\partial_j u_i(\bs{x})=\partial_i u_j(\bs{x})$.
From the definition of the Poisson bracket in~\eqref{eqn::Poisson_bracket_def}, we have
\[
    \left(
       D_{\bs x}\bs \Ham(D_{\bs u}\bs \Ham)^{\mathsf T}
       -D_{\bs u}\bs \Ham(D_{\bs x}\bs \Ham)^{\mathsf T}
    \right)_{ij}
    =\{\Ham_i,\Ham_j\}.
\]
Eq.~\eqref{eqn::accessory_functional_bracket} and the identities $\Ham_j=\lambda$
along the branch therefore give
\[
    D_{\bs x}\bs \Ham(D_{\bs u}\bs \Ham)^{\mathsf T}
    =D_{\bs u}\bs \Ham(D_{\bs x}\bs \Ham)^{\mathsf T}.
\]
Together with~\eqref{eqn::accessory_differentiated_matrix}, we have
\[
    D_{\bs x}\bs u
    =-(D_{\bs u}\bs \Ham)^{-1}D_{\bs x}\bs \Ham
    =-\left((D_{\bs u}\bs \Ham)^{-1}D_{\bs x}\bs \Ham\right)^{\mathsf T}
    =(D_{\bs x}\bs u)^{\mathsf T},
\]
which is~\eqref{eqn::u_accessory_exact} as desired. 
\end{proof}

\begin{proof}[Proof of Theorem~\ref{thm::BPZ_solutions}]
The conclusion for $\lambda<0$ is proved in Lemma~\ref{lem::BPZ_solutions_neg}. 
In the following, we assume $\lambda>0$ and set $a=\sqrt{\lambda/8}>0$.
For $\bs{x}\in\LX_n$,
Proposition~\ref{prop::quasi_exp_Schwarz}
and Proposition~\ref{prop::quasi_exp_enumeration}
give $2^n$ distinct global smooth real-valued
branches $\bs u=(u_1,\ldots,u_n)$ satisfying~\eqref{eqn::quasi_exp_Schwarz_accessory1_x}.
For each such $\bs u=(u_1,\ldots,u_n)$,
Proposition~\ref{prop::quasi_exp_accessory_exact} implies that $\partial_j u_i=\partial_i u_j$.
Since $\LX_n$ is contractible, there exists a real-valued smooth function $\LU$, unique up to an additive
constant, such that $\partial_j\LU=u_j$ for all $j\in\{1, \ldots, n\}$. Moreover, Eq.~\eqref{eqn::quasi_exp_Schwarz_accessory1_x} gives
\begin{equation}\label{eqn::BPZ_sclimits_asquare}
\frac12(\partial_j\LU)^2
-\sum_{\ell\ne j}\left(
    \frac{2\partial_\ell\LU}{x_\ell-x_j}
    +\frac{6}{(x_\ell-x_j)^2}
\right)
=8a^2,
\qquad j\in\{1,\ldots,n\},
\end{equation}
which is the same as the BPZ system~\eqref{eqn::BPZ_chordal_sclimits} with $\lambda=8a^2>0$.
Therefore, each branch $\bs{u}=(u_1, \ldots, u_n)$ gives a solution
in $\LSchord{n}^{(\lambda)}$.
Conversely, let $[\LU]\in\LSchord{n}^{(\lambda)}$ and set
\[
\bs u:=(\partial_1\LU,\ldots,\partial_n\LU).
\]
Because $\LU$ is $C^1$, the map $\bs u$ is continuous, and~\eqref{eqn::BPZ_sclimits_asquare} shows that it satisfies~\eqref{eqn::quasi_exp_Schwarz_accessory1_x}. Proposition~\ref{prop::quasi_exp_accessory_exact} therefore places $\bs u$ on one of the $2^n$ smooth real-valued global branches. Thus, there is one-to-one correspondence between solutions to~\eqref{eqn::quasi_exp_Schwarz_accessory1_x} and solutions in $\LSchord{n}^{(\lambda)}$, and we have $\#\LSchord{n}^{(\lambda)}=2^n$. Every solution $\LU$ is smooth, because all its first-order partial derivatives are smooth.
\end{proof}

\subsection{Expansion at poles}
\label{subsec::pole}

\begin{proposition}\label{prop::quasi_exp_LU_poles}
Fix $\lambda>0$ and $n\geq 1$. 
Set $a=\sqrt{\lambda/8}>0$. 
For $\bs{x} = (x_1,\ldots, x_n)\in \LX_n$, let
$f(z)=\ee^{2az}P(z)/Q(z)\in\QExp(a;\bs{x})$ where $P,Q$ are
nonzero monic coprime polynomials. 
Assume $\mathfrak{r}=\deg(Q)\in\{0,1,\ldots,n\}$ and denote the zeros\footnote{When $\mathfrak{r}=0$, we have $Q=1$ and the list of zeros is empty; all sums over empty set are understood to be 0 and all products over empty index sets are understood to be $1$.} of $Q$ by $\zeta_1,\ldots,\zeta_{\mathfrak{r}}$, counted with multiplicity. Define 
\begin{align}\label{eqn::LU_f_def}
\LU_f(\bs{x})=-2\sum_{1\le j<k\le n}\log(x_k-x_j)-4a\sum_{k=1}^n x_k
         +4\sum_{m=1}^{\mathfrak{r}}\left(\log|P(\zeta_m)|+2a\zeta_m\right).
\end{align}
Then $\LU_f$ is the solution in $\LSchord{n}^{(\lambda)}$ corresponding to $f$. 
\end{proposition}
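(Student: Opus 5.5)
Since each global branch $\bs u$ determines $\LU$ up to a constant (proof of Theorem~\ref{thm::BPZ_solutions}), it suffices to show $\partial_j\LU_f=u_j$ for every $j$. Here $u_j$ are the accessory parameters of $f$, i.e. twice the residues of $S_f$ at $x_j$, so that $\frac12 u_j$ is the coefficient of $(z-x_j)^{-1}$ in~\eqref{eqn::quasi_exp_Schwarz_explicit}. Along a branch, $P$ and $Q$ are real and smooth in $\bs x$ by Proposition~\ref{prop::quasi_exp_enumeration}. To differentiate freely I will first assume that the zeros $\zeta_m$ of $Q$ are simple and are not zeros of $P$; coprimality already excludes the latter. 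The term $\sum_m(\log|P(\zeta_m)|+2a\zeta_m)$ equals $\log|\mathrm{Res}(Q,P)|+2a\sum_m\zeta_m$. Both pieces are symmetric in the $\zeta_m$, hence smooth functions of the coefficients of $P$ and $Q$, so the multiplicity issue disappears in the final formula.

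\textbf{Step 1: compute $u_j$ from the local data of $f$.} From~\eqref{eqn::quasi_exp_Schwarz_derivative}, $f''/f'=2a+R'/R-2Q'/Q$ with $R=\prod_k(z-x_k)$. Write $f''/f'=\frac{1}{z-x_j}+H_j(z)$. Then~\eqref{eqn::Schwarz_local_degree_expansion} with $\tau=2$ gives $\frac12u_j=-H_j(x_j)$, that is,
\[
u_j=-2\Bigl(2a+\sum_{k\ne j}\frac{1}{x_j-x_k}-2\frac{Q'(x_j)}{Q(x_j)}\Bigr)=-4a+2\sum_{k\ne j}\frac{1}{x_k-x_j}+4\sum_m\frac{1}{x_j-\zeta_m}.
\]
Differentiating the first two terms of~\eqref{eqn::LU_f_def} in $x_j$ gives exactly $2\sum_{k\ne j}(x_k-x_j)^{-1}-4a$. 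It therefore remains to prove
\[
\partial_j\sum_m\bigl(\log|P(\zeta_m)|+2a\zeta_m\bigr)=\sum_m\frac{1}{x_j-\zeta_m}.
\]

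\textbf{Step 2: the variational identity.} Let $\dot{}$ denote $\partial_{x_j}$. Differentiating $P'Q-PQ'+2aPQ=2aR$ gives
\[
\dot P'Q+P'\dot Q-\dot PQ'-P\dot Q'+2a(\dot PQ+P\dot Q)=-2aR/(z-x_j).
\]
Evaluating at $\zeta_m$ with $Q(\zeta_m)=0$ yields $P'\dot Q-\dot PQ'-P\dot Q'+2aP\dot Q=-2aR/(z-x_j)$ at $z=\zeta_m$. Since $Q(\zeta_m)=0$, we have $\dot\zeta_m=-\dot Q(\zeta_m)/Q'(\zeta_m)$. The quantity to compute is
\[
\frac{\dot P(\zeta_m)+P'(\zeta_m)\dot\zeta_m}{P(\zeta_m)}+2a\dot\zeta_m.
\]
Multiplying it by $P(\zeta_m)Q'(\zeta_m)$ gives $\dot PQ'-P'\dot Q-2aP\dot Q$ at $\zeta_m$. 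By the evaluated identity this equals $-P\dot Q'+2aR/(\zeta_m-x_j)$. Using $R(\zeta_m)=P(\zeta_m)Q'(\zeta_m)/(2a)$, which comes from the Wronskian equation at $\zeta_m$, the term $2aR/(\zeta_m-x_j)$ becomes $PQ'/(\zeta_m-x_j)$. Hence the $m$-th summand equals
\[
-\frac{\dot Q'(\zeta_m)}{Q'(\zeta_m)}-\frac{1}{x_j-\zeta_m}.
\]
Because $P$ and $Q$ are real, the real part taken in the $\log|\cdot|$ does not change anything along $\bs x\in\LX_n$.

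\textbf{Step 3: the leftover sum, which is the main obstacle.} Summing over $m$, I need
\[
-\sum_m\frac{\dot Q'(\zeta_m)}{Q'(\zeta_m)}=2\sum_m\frac{1}{x_j-\zeta_m}.
\]
Since $Q$ is monic, $\dot Q=-Q\sum_m\dot\zeta_m/(z-\zeta_m)$. Therefore $\dot Q'(\zeta_m)=-Q''(\zeta_m)\dot\zeta_m-\sum_{l\ne m}\dot\zeta_mQ'(\zeta_m)/(\zeta_m-\zeta_l)-\dot\zeta_m\,(\ldots)$, and I would expand this carefully by partial fractions.

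I expect the needed identity to follow from a Bethe-ansatz type relation for the $\zeta_m$. Evaluating the derivative of the Wronskian equation at $\zeta_m$ gives $P'Q'-PQ''+2aPQ'=2aR'$, i.e. $\frac{P'}{P}-\frac{Q''}{Q'}+2a=\frac{R'}{R}$ at $\zeta_m$. Combined with an explicit formula for $\dot\zeta_m$ (obtained as $\partial_j$ of the system of these relations, or from Step 2 with $\dot Q(\zeta_m)$), this should turn the sum into telescoping pairwise terms $\frac{1}{\zeta_m-\zeta_l}$ that cancel over $m\ne l$, leaving $2\sum_m(x_j-\zeta_m)^{-1}$.

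If this bookkeeping becomes unwieldy, the fallback is to verify $\sum_j\partial_j$ and $\sum_jx_j\partial_j$ via translation and scaling covariance. Simplest of all is to check the claim at a single point and rely on the closedness from Proposition~\ref{prop::quasi_exp_accessory_exact}, but closedness alone does not pin the difference down to a constant. So I will do the direct computation. Finally, simple zeros of $Q$ hold on a dense open set, and the identity extends by continuity.
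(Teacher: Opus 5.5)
Your plan is sound: reduce to $\partial_j\LU_f=u_j$, read off $u_j$ from $f''/f'$, then differentiate the Wronskian equation along the branch. Step 1 is correct. Step 2, however, contains a sign error, and it sends Step 3 after a false identity.

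\textbf{The sign error.} Evaluating $P'Q-PQ'+2aPQ=2aR$ at a zero of $Q$ gives $2aR(\zeta_m)=-P(\zeta_m)Q'(\zeta_m)$, not $+P(\zeta_m)Q'(\zeta_m)$. Hence $2aR(\zeta_m)/(\zeta_m-x_j)=P(\zeta_m)Q'(\zeta_m)/(x_j-\zeta_m)$, and the $m$-th summand is
\[
-\frac{\dot Q'(\zeta_m)}{Q'(\zeta_m)}+\frac{1}{x_j-\zeta_m}.
\]

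\textbf{Step 3 as stated is false.} The identity you set out to prove, $-\sum_m\dot Q'(\zeta_m)/Q'(\zeta_m)=2\sum_m(x_j-\zeta_m)^{-1}$, cannot hold, because its left side vanishes identically. Already for $\mathfrak r=1$ one has $Q'\equiv 1$, so the left side is $0$ while the right side is $2/(x_j-\zeta_1)\neq 0$. Concretely, take $n=1$, $P=1$, $\zeta_1=x_1-1/(2a)$; then $\LU_f=4ax_1+\mathrm{const}$ and $u_1=4a$. No Bethe-ansatz bookkeeping can rescue this.

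The relation you wrote down is also miscomputed. Since $(P'Q-PQ')'=P''Q-PQ''$, differentiating the Wronskian equation at $\zeta_m$ gives $-PQ''+2aPQ'=2aR'$, i.e.\ $R'/R=Q''/Q'-2a$ at $\zeta_m$, with no $P'/P$ term. This is exactly the paper's stationary relation~\eqref{eqn::quasi_exp_pole_equations}.

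\textbf{The fix.} With the sign corrected, Step 3 only needs $\sum_m\dot Q'(\zeta_m)/Q'(\zeta_m)=0$, which is immediate. $Q$ is monic of constant degree $\mathfrak r$ along the branch, so $\deg\dot Q'\le\mathfrak r-2$. The sum is then the total finite residue of $\dot Q'/Q=O(z^{-2})$, hence zero. Equivalently, $\dot Q'(\zeta_m)/Q'(\zeta_m)=-\sum_{l\ne m}(\dot\zeta_m+\dot\zeta_l)/(\zeta_m-\zeta_l)$, whose summands are antisymmetric in $(m,l)$.

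\textbf{Comparison with the paper.} With this correction your argument is complete, and it is genuinely different from the paper's. The paper first derives the stationary relations from $\operatorname{Res}_{\zeta_m}f'=0$. It then rewrites $\LU_f$ as the Coulomb-gas potential $\widehat{\LU}_f$ using $2aR(\zeta_m)=-P(\zeta_m)Q'(\zeta_m)$, and lets the stationary relations cancel all $\partial_j\zeta_m$ terms. Your corrected route differentiates the Wronskian equation directly and never uses the stationary relations; the price is that it does not exhibit the pole dynamics.

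Both routes still need the simple-zero locus to be dense on each branch. You only assert this; the paper justifies it through the discriminant of $Q$ and the local inverse-function argument.
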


The simplest case in Proposition~\ref{prop::quasi_exp_LU_poles} is $Q=1$, see Example~\ref{example::LUminus}. 
To prove Proposition~\ref{prop::quasi_exp_LU_poles}, we first show the conclusion when $Q$ has only simple zeros in Lemma~\ref{lem::quasi_exp_Schwarz_accessory_determination}. The general case can be obtained by taking continuous extension.
Our formulas extend those of~\cite[Theorem~2.7]{AlbertsKangMakarovPoleDynamicsMultipleSLE0}, where the corresponding result is established for a special family of solutions with $a=0$. The general case for $a=0$ is treated in Appendix~\ref{subsec::rational_poles}, see also~\cite{zhang2025multiplechordalsle0classical}.

\begin{example}\label{example::LUminus}
Fix $\lambda>0$ and $n\geq 1$. 
Set $a=\sqrt{\lambda/8}>0$. The solution $\LU_{\shuffle_n}^{(-)}$ defined in~\eqref{eqn::LU_shuffle_def}: 
\[\LU_{\shuffle_n}^{(-)}(\bs{x})=-2\sum_{1\le j<k\le n}\log(x_k-x_j)-4a\sum_{k=1}^n x_k\]
corresponds to
the quasi-exponential $f(z)=\ee^{2az}P(z)$, where $P$ is the unique monic polynomial of degree $n$ satisfying
\[
    P'(z)+2aP(z)=2a\prod_{k=1}^n(z-x_k).
\]
\end{example}

\begin{lemma}\label{lem::quasi_exp_Schwarz_accessory_determination}
Assume the same setup as in Proposition~\ref{prop::quasi_exp_LU_poles}. 
Assume in addition that $Q$ has only
simple zeros $\zeta_1,\ldots,\zeta_{\mathfrak{r}}$, none of which
belongs to $\{x_1,\ldots,x_n\}$.
\begin{itemize}
\item The poles $\zeta_1,\ldots,\zeta_{\mathfrak{r}}$ satisfy the stationary relations
\begin{equation}\label{eqn::quasi_exp_pole_equations}
    2a+\sum_{k=1}^n\frac{1}{\zeta_j-x_k}
       -2\sum_{m\ne j}
          \frac{1}{\zeta_j-\zeta_m}=0,
    \qquad \text{for }j\in\{1,\ldots,\mathfrak{r}\}.
\end{equation}
\item The accessory parameters $(u_1,\ldots,u_n)$ of $f$ in~\eqref{eqn::quasi_exp_Schwarz_explicit} can also be written as
\begin{equation}\label{eqn::quasi_exp_accessory_pole_expansion}
    u_j=-4a+2\sum_{\ell\neq j}\frac{1}{x_\ell-x_j}
          -4\sum_{m=1}^{\mathfrak{r}}
             \frac{1}{\zeta_m-x_j},
    \qquad \text{for }j\in\{1,\ldots,n\}.
\end{equation}
\item Define  
\begin{equation}\label{eqn::quasi_exp_pole_potential}
\begin{aligned}
    \widehat{\LU}_f(\bs{x})
    =&-2\sum_{1\le j<k\le n}\log(x_k-x_j)-4a\sum_{k=1}^n x_k
         \\
       &+4\sum_{k=1}^n\sum_{m=1}^{\mathfrak{r}}\log|x_k-\zeta_m|
         -8\sum_{1\leq\ell<m\leq \mathfrak{r}}\log|\zeta_\ell-\zeta_m|
         +8a\sum_{m=1}^{\mathfrak{r}}\zeta_m.
\end{aligned}
\end{equation}
Then $\partial_j\widehat{\LU}_f(\bs{x})=u_j$ for all $j\in\{1, \ldots, n\}$.
\end{itemize}
\end{lemma}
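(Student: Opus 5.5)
The plan is to read everything off the partial-fraction structure of $f''/f'$. Equation~\eqref{eqn::quasi_exp_Schwarz_derivative} gives
\[
\frac{f''}{f'}=2a+\sum_{k=1}^n\frac{1}{z-x_k}-2\sum_{m=1}^{\mathfrak r}\frac{1}{z-\zeta_m},
\]
using $\mathcal W_a(f)=R$ and the fact that $Q$ has simple zeros distinct from the $x_k$. We then compute $S_f=(f''/f')'-\tfrac12(f''/f')^2$ directly as a rational function.

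\textbf{Stationary relations.} The point $\zeta_j$ is a simple pole of $f$. Since $f$ is locally univalent there, $S_f$ is holomorphic at $\zeta_j$. We expand $S_f$ at $\zeta_j$. Write $f''/f'=-2/(z-\zeta_j)+H_j(z)$, where
\[
H_j(\zeta_j)=2a+\sum_k\frac{1}{\zeta_j-x_k}-2\sum_{m\ne j}\frac{1}{\zeta_j-\zeta_m}.
\]
Then
\[
S_f=\frac{2}{(z-\zeta_j)^2}-\frac12\cdot\frac{4}{(z-\zeta_j)^2}+\frac{2H_j(\zeta_j)}{z-\zeta_j}+O(1).
\]
The double pole cancels automatically, as it must; this is \eqref{eqn::Schwarz_local_degree_expansion} with $\tau=-1$ after inversion. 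The vanishing of the residue forces $H_j(\zeta_j)=0$, which is exactly \eqref{eqn::quasi_exp_pole_equations}.

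\textbf{Accessory parameters.} By \eqref{eqn::Schwarz_local_degree_expansion} with $\tau=2$, the residue of $S_f$ at $x_j$ is $-H_{x_j}(x_j)$, where $H_{x_j}$ is the regular part of $f''/f'$ at $x_j$:
\[
H_{x_j}(x_j)=2a+\sum_{\ell\ne j}\frac{1}{x_j-x_\ell}-2\sum_m\frac{1}{x_j-\zeta_m}.
\]
Comparing with \eqref{eqn::quasi_exp_Schwarz_explicit}, the residue equals $u_j/2$, so $u_j=-2H_{x_j}(x_j)$. This is precisely \eqref{eqn::quasi_exp_accessory_pole_expansion} after rewriting signs: $\frac{1}{x_j-x_\ell}=-\frac{1}{x_\ell-x_j}$ and $\frac{1}{x_j-\zeta_m}=-\frac{1}{\zeta_m-x_j}$.

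\textbf{Gradient identity.} We view $\widehat{\LU}_f$ as a function $F(\bs x,\bs\zeta)$, with the poles $\zeta_m=\zeta_m(\bs x)$ smooth and real along the branch. Smoothness follows from Proposition~\ref{prop::quasi_exp_enumeration} together with simplicity of the roots of $Q$, by the implicit function theorem. Reality follows because $Q\in\R[z]$; the conjugate-pair case is handled by noting that $\log|\cdot|$ is the real part of a holomorphic log. The chain rule gives
\[
\partial_j\widehat{\LU}_f=\partial_{x_j}F+\sum_m\partial_{\zeta_m}F\cdot\partial_j\zeta_m.
\]
Computing the partial derivative in $\zeta_m$,
\[
\partial_{\zeta_m}F=4\sum_k\frac{1}{\zeta_m-x_k}-8\sum_{\ell\ne m}\frac{1}{\zeta_m-\zeta_\ell}+8a=4H_m(\zeta_m)=0
\]
by the stationary relations. (For complex $\zeta_m$ we interpret this via $\Re$ of the holomorphic version; the derivative then vanishes identically.) The explicit partial derivative in $x_j$ is
\[
\partial_{x_j}F=2\sum_{\ell\ne j}\frac{1}{x_\ell-x_j}-4a-4\sum_m\frac{1}{\zeta_m-x_j},
\]
which matches \eqref{eqn::quasi_exp_accessory_pole_expansion}.

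\textbf{Main obstacle.} The computations themselves are routine. The main obstacle is the justification in the last step: that the $\zeta_m$ depend smoothly on $\bs x$ on the open set where $Q$ has simple zeros disjoint from $\{x_k\}$, and the correct handling of $\log|\cdot|$ when poles are non-real conjugate pairs. Here I would write $\log|x_k-\zeta_m|$ and $\log|\zeta_\ell-\zeta_m|$ as $\Re$ of holomorphic logarithms, and observe that the sum over conjugate-closed sets of poles makes the holomorphic expression real up to constants. The stationarity argument then applies verbatim to the holomorphic function before taking real parts.
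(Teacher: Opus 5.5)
Your proposal is correct, and for the second and third bullets it follows the paper. The formula for $u_j$ is read off as $u_j=-2H_{x_j}(x_j)$ from \eqref{eqn::Schwarz_local_degree_expansion} with $\tau=2$. The gradient identity comes from the chain rule: the stationary relations kill the $\partial_j\zeta_m$ terms, $\log|w|$ is treated as $\Re\log w$ (the paper's $\ud\log|w|=\Re(\ud w/w)$), and $\sum_m(\zeta_m-x_j)^{-1}$ is real by conjugation symmetry.

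The one real difference is how you get the stationary relations. The paper computes $\operatorname{Res}_{z=\zeta_j}f'$ from the product formula for $f'$ and uses that a derivative has no residues. This requires the prefactor $2a\ee^{2a\zeta_j}R(\zeta_j)/\prod_{m\ne j}(\zeta_j-\zeta_m)^2$ to be nonzero. You instead note that $f$ is locally univalent at its simple pole $\zeta_j$, so $S_f$ is holomorphic there. You then read off $\operatorname{Res}_{\zeta_j}S_f=2H_j(\zeta_j)$ from the same expansion \eqref{eqn::Schwarz_local_degree_expansion}, now with $\tau=-1$.

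The two conditions are equivalent. Writing $f'=g/(z-\zeta_j)^2$ with $g(\zeta_j)\neq0$, one has $H_j=g'/g$, so $\operatorname{Res}f'=g(\zeta_j)H_j(\zeta_j)$. Your route has the advantage that all three bullets come from the single partial-fraction formula for $f''/f'$. The paper's route is slightly more elementary, since it does not need the M\"obius invariance of $S_f$ at a pole. Note also that holomorphy of $S_f$ at $\zeta_j$ follows directly from \eqref{eqn::quasi_exp_Schwarz_explicit}, whose only poles are the $x_k$.

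Two small wording slips. First, the poles are not ``real along the branch'': $Q\in\R[z]$ only makes the pole set invariant under conjugation, and that is all you actually use. Second, you do not need the holomorphic potential to be ``real up to constants''. It suffices that differentiation in the real variable $x_j$ commutes with taking real parts, and that $\partial_{\zeta_m}$ of the holomorphic version equals $4H_m(\zeta_m)=0$.
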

\begin{proof}
We first show~\eqref{eqn::quasi_exp_pole_equations}.
From~\eqref{eqn::quasi_exp_critical_points}, we have
\[
    f'(z)=2a\ee^{2az}
       \frac{\prod_{k=1}^n(z-x_k)}
            {\prod_{m=1}^{\mathfrak{r}}(z-\zeta_m)^2}.
\]
A direct calculation gives
\begin{align*}
    \operatorname{Res}_{z=\zeta_j}f'(z)
    &=\lim_{z\to \zeta_j}\left(2a\ee^{2az}
       \frac{\prod_{k=1}^n(z-x_k)}
            {\prod_{m\neq j}(z-\zeta_m)^2}\right)'\\
    &=
      2a\ee^{2a\zeta_j}
       \underbrace{\frac{\prod_{k=1}^n(\zeta_j-x_k)}
            {\prod_{m\neq j}(\zeta_j-\zeta_m)^2}}_{\neq 0}\left(
        2a+\sum_{k=1}^n\frac{1}{\zeta_j-x_k}
           -2\sum_{m\neq j}
              \frac{1}{\zeta_j-\zeta_m}
      \right).
\end{align*}
Since the derivative of a meromorphic function has zero residue at every pole, we have $\operatorname{Res}_{z=\zeta_j}f'(z)=0$. This proves~\eqref{eqn::quasi_exp_pole_equations}.
\medbreak
Next, we derive~\eqref{eqn::quasi_exp_accessory_pole_expansion}. Eq.~\eqref{eqn::quasi_exp_Schwarz_derivative} gives
\[
    \frac{f''(z)}{f'(z)}
    =2a+\sum_{k=1}^n\frac{1}{z-x_k}
       -2\sum_{m=1}^{\mathfrak{r}}
          \frac{1}{z-\zeta_m}.
\]
Near $x_j$, write $f''/f'=(z-x_j)^{-1}+H_j(z)$, where $H_j$ is holomorphic near $x_j$ and
\[
    H_j(x_j)=2a-\sum_{\ell\neq j}\frac{1}{x_\ell-x_j}
                 +2\sum_{m=1}^{\mathfrak{r}}
                    \frac{1}{\zeta_m-x_j}.
\]
Substituting this expansion into the definition of the Schwarzian derivative shows that the coefficient $u_j/2$ of $(z-x_j)^{-1}$ in $S_f$ is $-H_j(x_j)$, see~\eqref{eqn::Schwarz_local_degree_expansion}. Thus $u_j=-2H_j(x_j)$, which gives~\eqref{eqn::quasi_exp_accessory_pole_expansion}.
\medbreak
Finally, let us show $\partial_j \widehat{\LU}_f = u_j$.
By differentiating~\eqref{eqn::quasi_exp_pole_potential} and using $\ud\log|w|=\operatorname{Re}(\ud w/w)$, we have
\begin{align*}
\partial_j\widehat{\LU}_f=&2\sum_{\ell\ne j}\frac{1}{x_\ell-x_j}-4a
    -4\operatorname{Re}\sum_{m=1}^{\mathfrak{r}}\frac{1}{\zeta_m-x_j}\\
&\quad+4\operatorname{Re}\sum_{m=1}^{\mathfrak{r}}
    \left(2a+\sum_{k=1}^n\frac{1}{\zeta_m-x_k}
    -2\sum_{\ell\ne m}\frac{1}{\zeta_m-\zeta_\ell}\right)\partial_j\zeta_m\\
=&2\sum_{\ell\ne j}\frac{1}{x_\ell-x_j}-4a
    -4\operatorname{Re}\sum_{m=1}^{\mathfrak{r}}\frac{1}{\zeta_m-x_j}\tag{due to~\eqref{eqn::quasi_exp_pole_equations}}\\
=&u_j,
\end{align*}
where the last equation is due to~\eqref{eqn::quasi_exp_accessory_pole_expansion} and the fact that the sum over the poles is real by conjugation symmetry.
\end{proof}

\begin{proof}[Proof of Proposition~\ref{prop::quasi_exp_LU_poles}]
Let $P,Q$ vary along the global smooth real branch through $f$ given by Proposition~\ref{prop::quasi_exp_enumeration}, and let $\LU_f$ be defined by~\eqref{eqn::LU_f_def}. The degree $\mathfrak{r}=\deg Q$ is constant along this branch.
\begin{itemize}
\item The function $\LU_f$ is smooth and real-valued on all of $\LX_n$.
Although the individual zeros of $Q$ need not vary smoothly near a multiple zero, the product $\prod_{m=1}^{\mathfrak{r}}P(\zeta_m)$ is a symmetric polynomial in these zeros. It is therefore a polynomial in the coefficients of $P$ and $Q$, and hence varies smoothly along the branch\footnote{This product is the resultant of $P$ and $Q$, up to a multiplicative constant, and can be expressed as a determinant in terms of the coefficients of $P$ and $Q$.}.
Since $P,Q$ are coprime, this product never vanishes. Consequently,
\[
    \sum_{m=1}^{\mathfrak{r}}\log|P(\zeta_m)|
    =\log\left|\prod_{m=1}^{\mathfrak{r}}P(\zeta_m)\right|
\]
is smooth throughout the chamber. Likewise, the sum $\sum_{m=1}^{\mathfrak{r}}\zeta_m$ is a polynomial in the coefficients of $Q$ and is real by conjugation symmetry. Thus every term in~\eqref{eqn::LU_f_def} is smooth and real-valued.

\item Under the assumptions of Lemma~\ref{lem::quasi_exp_Schwarz_accessory_determination}, the two functions $\LU_f$ and $\widehat{\LU}_f$ differ by an additive constant.
Write $\zeta_1,\ldots,\zeta_{\mathfrak{r}}$ for the simple zeros of $Q$, all distinct from the critical points $\{x_1,\ldots,x_n\}$.
The Wronskian identity~\eqref{eqn::quasi_exp_prescribed_Wronskian} gives
\begin{equation*}
    R_{\bs{x}}(\zeta_j)
    :=\prod_{k=1}^n(\zeta_j-x_k)
    =-\frac{P(\zeta_j)Q'(\zeta_j)}{2a}\ne0,
    \qquad j\in\{1,\ldots,\mathfrak{r}\}.
\end{equation*}
Taking absolute values and multiplying over $j$, we obtain
\begin{equation}\label{eqn::quasi_exp_Wronskian_at_poles}
    \prod_{k=1}^n\prod_{m=1}^{\mathfrak{r}}|x_k-\zeta_m|
    =(2a)^{-\mathfrak{r}}\prod_{m=1}^{\mathfrak{r}}|P(\zeta_m)|
      \prod_{1\leq\ell<m\leq \mathfrak{r}}|\zeta_\ell-\zeta_m|^2.
\end{equation}
Here we used $Q'(\zeta_j)=\prod_{m\ne j}(\zeta_j-\zeta_m)$.
Taking logarithms and substituting into~\eqref{eqn::LU_f_def} gives
\begin{equation}\label{eqn::BPZ_solutions_pole_expansion}
    \LU_f(\bs{x})=\widehat{\LU}_f(\bs{x})+4\mathfrak{r}\log(2a).
\end{equation}
Thus Lemma~\ref{lem::quasi_exp_Schwarz_accessory_determination} yields
\[
    \partial_j\LU_f=\partial_j\widehat{\LU}_f=u_j,
    \qquad j\in\{1,\ldots,n\}.
\]

\item The locus where Lemma~\ref{lem::quasi_exp_Schwarz_accessory_determination} applies is dense on every branch; cf.~\cite[Lemma~8.1]{MukhinTarasovVarchenkoSpacesQuasiExponentials}.
Indeed, the identity~\eqref{eqn::quasi_exp_Wronskian_at_poles} shows that the pole assumptions fail precisely when $Q$ has a repeated zero.
This determines a proper algebraic subset of the coefficient space of $Q$.
By the local inverse-function argument in Lemma~\ref{lem::QExprax_enumeration}, the locus of $\bs{x}\in \LX_n$ where $Q$ has repeated zero has empty interior on every branch.
\end{itemize}
Since $\LU_f$ and the accessory parameters are smooth on $\LX_n$, the identity $\partial_j\LU_f=u_j$ extends from this dense locus to the entire chamber $\LX_n$ by continuity.
The accessory equations~\eqref{eqn::quasi_exp_Schwarz_accessory1_x} then imply that $\LU_f$ satisfies~\eqref{eqn::BPZ_chordal_sclimits} with $\lambda=8a^2$.
\end{proof}

\appendix
\section{Rational functions}
\label{sec::rational}
The goal of this appendix is to prove~\eqref{eqn::LSchord_zero}. 
\begin{theorem}\label{thm::BPZ_solutions_zero}
Fix $n\ge 2$ and $\lambda=0$. The number of solutions in $\LSchord{n}^{(0)}$ is given by
\begin{align*}
\#\LSchord{n}^{(0)}=\binom{n}{\lfloor n/2\rfloor}. 
\end{align*}
Moreover, the solutions in $\LSchord{n}^{(0)}$ are all smooth.
\end{theorem}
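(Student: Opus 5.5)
We will follow the $\lambda>0$ argument with quasi-exponentials replaced by rational functions, that is, the case $a=0$. The first step is a rational analogue of Proposition~\ref{prop::quasi_exp_Schwarz}, which is Eremenko's correspondence. Solutions $\bs u\in\C^n$ of the accessory equations~\eqref{eqn::quasi_exp_Schwarz_accessory1} with right-hand side $0$ should correspond bijectively to Möbius classes of rational functions $f=P/Q$ whose critical points are exactly $z_1,\ldots,z_n$, all simple, where the Wronskian is $P'Q-PQ'=c\prod_k(z-z_k)$. The forward direction is the expansion argument of Lemma~\ref{lem::Schwarz_quasi_exp_expansion} together with the Frobenius condition of Lemma~\ref{lem::Schwarz_ODE_critical}. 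At infinity, $S_f=O(z^{-4})$ once $f$ is normalized so that $\infty$ is not critical.

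The converse changes more. In the Lax-matrix argument of Lemma~\ref{lem::quasi_exp_Schwarz_accessory_sum} with $a=0$, the identity $L^2\bs 1=0$ forces $L$ to be nilpotent, so $\sum_k u_k=\operatorname{tr}L=0$. We will also need the higher moments $\sum_k(u_kz_k-1)$-type conditions, which make $\psi_{\bs u}=O(z^{-4})$. I expect to obtain them from $\bs 1^{\mathsf T}L^r\bs 1=\operatorname{tr}L^r$ combined with $[Z,L]$ identities applied to $Z\bs 1$. Once these hold, $\infty$ is an ordinary point of $w''+\psi_{\bs u}w=0$ up to the gauge $w\sim z$, with exponents $0,1$ in the variable $1/z$. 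Gauging by $\sqrt R$ as in Lemma~\ref{lem::Schwarz_quasi_exp_expansion_refined_reverse} gives entire solutions of polynomial growth, hence polynomials $P,Q$ with $\deg P+\deg Q$ compatible with $\deg\operatorname{Wr}=n$. Then $f=P/Q$ has degree $d=\lceil (n+2)/2\rceil$ type, with $2d-2=n$ when $n$ is even. For odd $n$ we must allow $\infty$ to be a critical point, or equivalently use a degree count with $\deg \operatorname{Wr}=n$ and $\deg P\ne\deg Q$.

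The second step is enumeration. By the Goldberg/Eremenko--Gabrielov count and Mukhin--Tarasov--Varchenko's proof of the Shapiro conjecture, for real distinct critical points the number of rational-function classes with these critical points is the Catalan-type number $\dim\bigl((\C^2)^{\otimes n}\bigr)^{\mathrm{sing}}_{\chi}$. Summing over the admissible degree splits yields $\binom{n}{\lfloor n/2\rfloor}$: the singular vectors of weight $n-2r\ge0$ number $\binom nr-\binom n{r-1}$, and these telescope to $\binom n{\lfloor n/2\rfloor}$. By the MTV theorems, all such functions are real and the Schubert intersections are transversal. The covering argument of Proposition~\ref{prop::quasi_exp_enumeration} then gives that many global smooth real branches over the contractible $\LX_n$.

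The third step is exactness and the conclusion. This follows the proof of Proposition~\ref{prop::quasi_exp_accessory_exact} verbatim, using the Poisson identity~\eqref{eqn::accessory_functional_bracket} and invertibility of $D_{\bs u}\bs\Ham$. This is the main obstacle, because the number of solutions is now below the Bézout bound $2^n$, so multiplicity one cannot be read off from a count. Instead I would derive it from transversality. The map from the Schubert cell to $\bs u$ is a local diffeomorphism onto the solution variety, and the $\bs x$-derivative of the covering is invertible. A degenerate $D_{\bs u}\bs\Ham$ would give a nonzero tangent vector $\dot{\bs u}$ at fixed $\bs x$. Lifting it through the Schwarzian correspondence, which is algebraic, would give a tangent vector to the Schubert intersection at fixed $\bs x$, contradicting transversality. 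Alternatively, one can avoid this entirely by writing the potential explicitly via poles, as in Proposition~\ref{prop::quasi_exp_LU_poles} with $a=0$, and verifying $\partial_j\LU_f=u_j$ on a dense set. Finally, any $C^1$ solution has continuous gradient lying on one branch, so the count and smoothness follow as in the proof of Theorem~\ref{thm::BPZ_solutions}.
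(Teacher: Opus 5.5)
Your Steps 1 and 2 follow the paper (Lemmas~\ref{lem::Schwarz_rational_accessory1}--\ref{lem::rational_accessory_counting}), but the converse in Step 1 rests on a false claim. The ``higher moment'' conditions that would make $\psi_{\bs u}=O(z^{-4})$ do not follow from the accessory equations. They hold exactly for the classes unramified at $\infty$ ($p=0$). Also, post-composition cannot change whether $\infty$ is critical, so you cannot normalize it away. For example, for $n=2$ the class of a cubic polynomial has $\bs u=(2,-2)/(x_2-x_1)$ and $\sum_k u_kx_k-3n=-8$.

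The only condition at infinity that the accessory equations give is $\sum_k u_k=0$, from nilpotency of $L$, and it is enough. It makes $\infty$ a regular singular point of the gauged equation, so the single-valued entire solutions have polynomial growth and are polynomials; the ramification $p$ is then read off rather than imposed (Lemma~\ref{lem::Schwarz_rational_accessory_refined_reverse}). If you insisted on $O(z^{-4})$, you would recover only $\Rat_0(\bs x)$, which contradicts the sum over all $p$ in your Step 2.

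The genuine gap is Step 3. At $\lambda=0$, $D_{\bs u}\bs\Ham$ is in general \emph{not} invertible, so no transversality argument can prove that it is. One has $(D_{\bs u}\bs\Ham\,\bs 1)_j=u_j-2\sum_{\ell\ne j}(x_\ell-x_j)^{-1}$. On the branch of the polynomial class ($p=n$, $\LU=-2\sum_{j<k}\log(x_k-x_j)$, so $u_j=2\sum_{\ell\ne j}(x_\ell-x_j)^{-1}$), this gives $\bs 1\in\ker D_{\bs u}\bs\Ham$ at every $\bs x$. For $n=2$, eliminating $u_2$ shows this solution has multiplicity three in the B\'ezout count.

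Your lifting argument fails because the Schwarzian correspondence is a bijection of points, not an isomorphism of schemes. The accessory scheme is non-reduced there, while the Schubert intersection is reduced. Concretely, the kernel direction $\dot{\bs u}=\bs 1$ violates $\sum_k u_k=0$, so it is not the derivative of any family of rational functions. On such a branch the Poisson identity yields only $D_{\bs u}\bs\Ham\,\bigl(D_{\bs x}\bs u-(D_{\bs x}\bs u)^{\mathsf T}\bigr)(D_{\bs u}\bs\Ham)^{\mathsf T}=0$, which is vacuous. For $n=2$ it is the antisymmetric part multiplied by $\det D_{\bs u}\bs\Ham=0$.

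So your ``alternative'' is not optional: it is the paper's actual route (Lemma~\ref{lem::rational_Schwarz_accessory_determination} and Proposition~\ref{prop::rational_LU_poles}), and it needs the details you only gesture at:
\begin{itemize}
\item vanishing residues of $f'$ at simple poles give the stationary relations;
\item the local Schwarzian expansion gives $u_j=2\sum_{\ell\ne j}(x_\ell-x_j)^{-1}-4\sum_m(\zeta_m-x_j)^{-1}$;
\item $\LU_f=-2\sum_{j<k}\log(x_k-x_j)+4\sum_m\log|P(\zeta_m)|$ is smooth because $\prod_m P(\zeta_m)$ is a nonvanishing resultant;
\item $\LU_f$ equals the pole potential up to a constant, by evaluating the Wronskian at the poles.
\end{itemize}
In the rational case you do not need a density argument. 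Replacing $Q$ by $Q+tP$ with $t\in\R$ makes the poles simple without changing the class or $\prod_m P(\zeta_m)$, which also shows that $\LU_f$ is independent of the normalized representative.
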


We first use the Schwarzian derivative to identify solutions of the accessory equations with equivalence classes of rational functions having the prescribed critical points. Lemma~\ref{lem::rational_accessory_counting} then counts these classes and shows that they form global smooth real branches. Finally, Proposition~\ref{prop::rational_LU_poles} constructs an explicit potential along each branch, proving exactness and establishing a bijection with the solutions of the BPZ system at $\lambda=0$. This gives both the stated count and the smoothness of all solutions.

\subsection{Schwarzian derivative of rational functions}
\paragraph*{Rational functions.}
A rational function is a meromorphic function on the Riemann sphere $\widehat{\C}$.
It can be written as $f = P/Q$ for some coprime polynomials $P,Q\in\C[z]$.
We regard two rational functions as equivalent if they differ by post-composition with a M\"obius transformation.
Its degree is
\[
    \deg(f):=\max\{\deg(P),\deg(Q)\}.
\]
Equivalently, $\deg(f)$ is the number of sheets of the branched covering $f\colon\widehat{\C}\to\widehat{\C}$, or the number of preimages of a generic point counted with multiplicity.
The degree is invariant under post-composition with a M\"obius transformation and is therefore well-defined on each equivalence class.

The finite critical points of a rational function are encoded by its Wronskian polynomial
\begin{equation}\label{eqn::rational_Wronskian}
    \mathcal W(f)
    :=\operatorname{Wr}\bigl(Q,P\bigr)
    =P'Q-PQ'.
\end{equation}
Indeed,
\[
    f'(z)=\frac{\mathcal W(f)(z)}{Q(z)^2}.
\]
Changing the representation by post-composing $f$ with a M\"obius transformation multiplies $\operatorname{Wr}(Q,P)$ by a nonzero constant. Hence $\mathcal W(f)$ is well-defined, up to a multiplicative constant, on the equivalence class of $f$. The multiplicity of a finite critical point $\xi\in\C$ is the multiplicity of $\xi$ as a root of $\mathcal W(f)$.

Fix $n\ge 1$ and distinct points $z_1, \ldots, z_n\in\C$. We write $\bs{z}=(z_1, \ldots, z_n)$ and denote by $\Rat(\bs{z})$ the set of equivalence classes of rational functions whose finite critical points are precisely $z_1, \ldots, z_n$, all of which are simple. In terms of the Wronskian~\eqref{eqn::rational_Wronskian}, we have $f\in\Rat(\bs z)$ if and only if
\begin{equation}\label{eqn::rational_prescribed_Wronskian}
    \mathcal W(f)(z)=c\prod_{k=1}^n(z-z_k),\qquad \text{for some }c\in\C\setminus\{0\}.
\end{equation}

\begin{lemma}\label{lem::Schwarz_rational_accessory1}
Fix $n\geq 1$. Fix distinct points $z_1,\ldots,z_n\in\C$ and denote $\bs z=(z_1,\ldots,z_n)$.
Suppose $f\in \Rat(\bs{z})$.
Then the Schwarzian derivative of $f$ can be written as
\begin{equation}\label{eqn::Schwarz_rational_explicit}
    S_f(z) =-\frac32\sum_{k=1}^n\frac{1}{(z-z_k)^2} +\frac{1}{2}\sum_{k=1}^n\frac{u_k}{z-z_k},
\end{equation}
where $(u_1,\ldots,u_n)\in\C^n$ satisfy the following accessory parameter equations:
\begin{align}
    \frac12u_j^2-\sum_{\ell\ne j}\left(\frac{2u_\ell}{z_\ell-z_j}+\frac{6}{(z_\ell-z_j)^2}\right) & =0, \qquad \text{for }j\in\{1,2,\ldots,n\}. \label{eqn::Schwarz_rational_accessory1}
\end{align}
\end{lemma}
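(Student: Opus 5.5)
I will follow the proof of Lemma~\ref{lem::Schwarz_quasi_exp_expansion} almost verbatim, with $a=0$ and with the behavior at infinity handled by the fact that a rational function is a branched cover of the sphere. Write $f=P/Q$ with $P,Q$ coprime. Since $f'=\mathcal W(f)/Q^2$, we get
\[
\frac{f''}{f'}=\frac{\mathcal W(f)'}{\mathcal W(f)}-2\frac{Q'}{Q},
\]
so $S_f$ is rational. Its possible poles in $\C$ are the zeros of $\mathcal W(f)$ and of $Q$. At a pole $\xi$ of $f$ which is not critical, the equivalent function $1/f$ is locally univalent, so $S_f=S_{1/f}$ is holomorphic at $\xi$. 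More generally, the local expansion~\eqref{eqn::Schwarz_local_degree_expansion} applies at every finite point after a M\"obius post-composition. It shows that $S_f$ is holomorphic wherever the local degree is $1$. At each $z_k$, where the local degree is $2$, it shows that $S_f$ has a double pole with leading coefficient $-3/2$ and some residue, which I name $u_k/2$.

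Next I treat infinity. I set $g(\zeta)=f(1/\zeta)$ and use the chain rule~\eqref{eqn::Schwarz_chain_rule} with the M\"obius map $\zeta\mapsto1/\zeta$, which gives $S_f(z)=\zeta^4 S_g(\zeta)$ with $\zeta=1/z$. Whatever the local degree $\tau_\infty$ of $f$ at $\infty$ is, $S_g$ has at most a double pole at $\zeta=0$ by~\eqref{eqn::Schwarz_local_degree_expansion}. Hence $S_f(z)=O(z^{-2})$ as $z\to\infty$. Therefore $S_f$ has no polynomial part and no constant term, and partial fractions give exactly~\eqref{eqn::Schwarz_rational_explicit}. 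I note in passing that the $O(z^{-2})$ decay also forces $\sum_k u_k=0$, although this is not needed for the statement.

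For the accessory equations, I set $\psi_{\bs u}(z)=\tfrac12 S_f(z)$, which is~\eqref{eqn::accessory_ODE_coefficient} with $a=0$. By Lemma~\ref{lem::Schwarz_ODE}, near each $z_j$ the function $f$ is a ratio of two independent solutions of $w''+\psi_{\bs u}w=0$. Since $z_j$ is a simple critical point, after a M\"obius normalization $f=c\,(z-z_j)^2\varphi(z)$ locally with $\varphi(z_j)\neq0$. Taking $w_2=1/\sqrt{f'}$ and $w_1=f/\sqrt{f'}$ as in Lemma~\ref{lem::Schwarz_ODE}, these behave like $(z-z_j)^{-1/2}$ and $(z-z_j)^{3/2}$ times holomorphic nonvanishing functions. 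So the equation admits a log-free Frobenius basis at $z_j$, and Lemma~\ref{lem::Schwarz_ODE_critical} yields $q_1^2+q_2=0$. Here $q_1=u_j/4$ and $q_2=q_2^{(j)}$ is given by~\eqref{eqn::accessory_ODE_coefficient2} with $a=0$. Multiplying by $-8$ turns $u_j^2/16+q_2^{(j)}=0$ into~\eqref{eqn::Schwarz_rational_accessory1}.

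The main point requiring care is the behavior at infinity: unlike the quasi-exponential case, there is no explicit expansion, and infinity may be a critical point or a pole of $f$. The chain-rule argument above handles all these cases uniformly, since only the bound $S_g=O(\zeta^{-2})$ is used. Everything else is a direct specialization of the computations already done for $a>0$.
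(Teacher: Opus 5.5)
Your proof is correct and is essentially the paper's argument, which runs the proof of Lemma~\ref{lem::Schwarz_quasi_exp_expansion} with $a=0$; you also usefully make explicit why a simple critical point gives a log-free Frobenius basis, namely via $1/\sqrt{f'}$ and $f/\sqrt{f'}$ on a slit disk. The only variation is at infinity: you get $S_f=O(z^{-2})$ from the chain rule with $\zeta=1/z$, whereas the paper reads it off your own first display, since $\deg\mathcal W(f)=n$ gives $f''/f'=(n-2\deg Q)/z+O(z^{-2})$, so an explicit expansion at infinity is in fact available here too.
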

\begin{proof}
The argument in the proof of Lemma~\ref{lem::Schwarz_quasi_exp_expansion} applies with $a=0$.
\end{proof}

\begin{lemma}\label{lem::rational_Schwarz_accessory_sum}
Fix $n\ge 1$. Fix distinct points $z_1, \ldots, z_n\in \C$.
Suppose that $(u_1,\ldots,u_n)\in\C^n$ satisfy~\eqref{eqn::Schwarz_rational_accessory1}.
Then 
\begin{align}\label{eqn::Schwarz_rational_accessory2}
\sum_{k=1}^n u_k=0.
\end{align}
\end{lemma}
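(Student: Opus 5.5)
The plan is to reuse the Lax matrix argument from Lemma~\ref{lem::quasi_exp_Schwarz_accessory_sum} with $a=0$. The eigenvalue analysis will then force $L$ to be nilpotent, so $\operatorname{tr}(L)=0$, which is exactly~\eqref{eqn::Schwarz_rational_accessory2}.

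First, set $A_j=\sum_{\ell\ne j}(z_\ell-z_j)^{-1}$ and $v_j=u_j-2A_j$ as in~\eqref{eqn::quasi_exp_accessory_parameter_v}. The algebraic identity used there does not involve $a$. So~\eqref{eqn::Schwarz_rational_accessory1} is equivalent to
\[
    v_j^2=4\sum_{\ell\ne j}\frac{v_\ell-v_j}{z_\ell-z_j},\qquad j\in\{1,\ldots,n\}.
\]
Next, define the Lax matrix $L$ exactly as before: off-diagonal entries $4(z_j-z_k)^{-1}$ and diagonal entries $v_j+4A_j$. The same computations give $L\boldsymbol 1=\bs v$, and the displayed equation gives $L\bs v=0$, hence $L^2\boldsymbol 1=0$. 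The commutator identity $[Z,L]=4(J-I)$ holds independently of $a$. It yields $\operatorname{tr}(L^r)=\boldsymbol 1^{\mathsf T}L^r\boldsymbol 1$ for all $r\ge0$, and therefore $\operatorname{tr}(L^{r+2})=0$ for all $r\ge 0$.

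From the vanishing of all power traces of order at least $2$, the Vandermonde argument gives the conclusion. Writing the distinct eigenvalues $\chi_k$ with multiplicities $m_k$, we get $\sum_k m_k\chi_k^2\chi_k^r=0$ for $r\in\{0,\ldots,s-1\}$, so $\chi_k^2=0$ for every $k$. Hence every eigenvalue vanishes and $\operatorname{tr}(L)=0$. Finally, $\operatorname{tr}(L)=\sum_k v_k+4\sum_k A_k=\sum_k u_k$, because $\sum_k A_k=0$ by antisymmetry. This proves~\eqref{eqn::Schwarz_rational_accessory2}.

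The only point needing care is checking that no step of the earlier proof used $a\neq0$. The Vandermonde step uses only $m_k\ge1$ and distinctness of the $\chi_k$, so it is unaffected. As an alternative route, one could instead derive $\sum u_k=0$ from the behavior of $S_f$ at infinity for a rational $f$ normalized so that $\infty$ is not critical. That route, however, would require first constructing $f$ from $\bs u$, which is exactly what this lemma is meant to enable. The purely algebraic Lax argument avoids this circularity.
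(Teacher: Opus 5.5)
Your proposal is correct and is the same argument as the paper's, which simply notes that the Lax-matrix proof of Lemma~\ref{lem::quasi_exp_Schwarz_accessory_sum} carries over with $a=0$. Your check that $L^2\boldsymbol 1=0$ forces every eigenvalue of $L$ to vanish, so that $\sum_k u_k=\operatorname{tr}(L)=0$, is exactly that specialization.
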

\begin{proof}
The argument in the proof of Lemma~\ref{lem::quasi_exp_Schwarz_accessory_sum} applies with $a=0$.
\end{proof}

\paragraph*{Degree at infinity.}
Fix $n\ge 1$. Fix distinct points $z_1, \ldots, z_n\in\C$ and denote $\bs{z}=(z_1, \ldots, z_n)$. We further classify the rational functions in $\Rat(\bs{z})$ according to their behavior at infinity.
Let $f\in\Rat(\bs z)$, and choose a rational representative of this equivalence class, still denoted by $f$. Choose a M\"obius transformation $L$ such that $L(f(\infty))=0$. There exist a unique integer $p\ge0$ and a constant $c\in\C\setminus\{0\}$ such that
\begin{equation}\label{eqn::rational_infinity_local_degree}
    (L\circ f)(z)=\frac{c}{z^{p+1}}+O\left(\frac{1}{z^{p+2}}\right),
    \qquad \text{as }z\to\infty.
\end{equation}
We call $p$ the ramification multiplicity of $f$ at infinity. The number $p$ is well-defined on the equivalence class.

We determine the possible values of $p$ directly from the Wronskian. Write $L\circ f=P/Q$ with $P$ and $Q$ coprime, and $\deg(P)<\deg(Q)$. The expansion~\eqref{eqn::rational_infinity_local_degree} gives
\[
    \deg(Q)-\deg(P)=p+1.
\]
Also, Eqs.~\eqref{eqn::rational_Wronskian} and~\eqref{eqn::rational_prescribed_Wronskian} give
\[
    \deg(\mathcal W(f))=\deg(P)+\deg(Q)-1=n.
\]
Consequently,
\[
    \deg(P)=\frac{n-p}{2},
    \qquad
    \deg(Q)=\frac{n+p+2}{2}.
\]
In particular, $0\le p\le n$ and $n-p$ is even. Denote
\[
    \mathsf P_n:=\{p\in\Z:0\le p\le n\text{ and }n-p\text{ is even}\}.
\]
For $p\in\mathsf P_n$, let $\Rat_p(\bs z)$ be the subset of $\Rat(\bs z)$ given by
\begin{equation}\label{eqn::Ratp_def}
    \Rat_p(\bs z)
    :=\{f\in\Rat(\bs z):
          \text{the ramification multiplicity of $f$ at infinity is $p$}\}.
\end{equation}
In particular, we have
\[
    \Rat(\bs z)=\bigsqcup_{p\in\mathsf P_n}\Rat_p(\bs z).
\]

\begin{lemma}\label{lem::Schwarz_rational_accessory3}
Fix $n\geq 1$ and $p\in\mathsf{P}_n$. 
Fix distinct points $z_1,\ldots,z_n\in\C$ and denote $\bs z=(z_1,\ldots,z_n)$.
Suppose $f\in \Rat_p(\bs{z})$.
Then the Schwarzian derivative of $f$ can be written as in~\eqref{eqn::Schwarz_rational_explicit} 
where $(u_1,\ldots,u_n)\in\C^n$ satisfy~\eqref{eqn::Schwarz_rational_accessory1} and 
\begin{align}\label{eqn::Schwarz_rational_accessory3}
\sum_{k=1}^n u_kz_k-3n=-p(p+2).
\end{align}
\end{lemma}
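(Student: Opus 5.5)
The approach is to read off the $z^{-2}$ coefficient of $S_f$ at infinity in two ways, exactly as Lemma~\ref{lem::Schwarz_quasi_exp_expansion_refined} used the $z^{-1}$ coefficient. The form~\eqref{eqn::Schwarz_rational_explicit} and the accessory equations~\eqref{eqn::Schwarz_rational_accessory1} already come from Lemma~\ref{lem::Schwarz_rational_accessory1}, and Lemma~\ref{lem::rational_Schwarz_accessory_sum} gives $\sum_k u_k=0$. So the only new content is~\eqref{eqn::Schwarz_rational_accessory3}.

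First, expand the explicit form~\eqref{eqn::Schwarz_rational_explicit} at infinity. Using
\[
\frac{1}{z-z_k}=\frac1z+\frac{z_k}{z^2}+O(z^{-3})\qquad\text{and}\qquad \frac{1}{(z-z_k)^2}=\frac1{z^2}+O(z^{-3}),
\]
together with $\sum_k u_k=0$, we get
\[
S_f(z)=\frac{\tfrac12\sum_k u_kz_k-\tfrac32 n}{z^2}+O(z^{-3}).
\]

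Second, compute the same coefficient intrinsically from the local behaviour at infinity. Since $S_{L\circ f}=S_f$, we may replace $f$ by $L\circ f$ with $L$ as in~\eqref{eqn::rational_infinity_local_degree}. Set $\zeta=1/z$ and $g(\zeta)=(L\circ f)(1/\zeta)=c\zeta^{p+1}(1+O(\zeta))$. By the chain rule~\eqref{eqn::Schwarz_chain_rule} with $S_{1/z}=0$, we have $S_f(z)=S_g(\zeta)\,\zeta^4$. The local expansion~\eqref{eqn::Schwarz_local_degree_expansion} with $\tau=p+1$ gives
\[
S_g(\zeta)=\frac{1-(p+1)^2}{2\zeta^2}+\frac{b}{\zeta}+O(1).
\]
Hence $S_f(z)=\tfrac{-p(p+2)}{2}z^{-2}+bz^{-3}+O(z^{-4})$. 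Comparing the two $z^{-2}$ coefficients gives $\tfrac12\sum_ku_kz_k-\tfrac32n=-\tfrac12p(p+2)$, which is~\eqref{eqn::Schwarz_rational_accessory3} after multiplying by $2$.

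The only delicate point is the bookkeeping in this second step. When $p=0$, infinity is not a critical point, so $S_g$ is holomorphic at $0$ and $S_f=O(z^{-4})$; this agrees with the formula, since $p(p+2)=0$. One must also make sure the $O(1)$ term of $S_g$ only contributes at order $z^{-4}$, which it does because of the factor $\zeta^4$. Finally, the sign and normalisation of the chain rule under $z\mapsto1/z$ must be checked: $S_{g\circ h}=(S_g\circ h)(h')^2+S_h$ with $h(z)=1/z$, $h'=-z^{-2}$ and $S_h=0$. This gives the factor $z^{-4}=\zeta^4$, so everything is consistent.
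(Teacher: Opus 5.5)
Your proposal is correct and is essentially the paper's own proof. You compare the $z^{-2}$ coefficient of $S_f$ at infinity obtained from the explicit form~\eqref{eqn::Schwarz_rational_explicit}, using $\sum_k u_k=0$ from Lemma~\ref{lem::rational_Schwarz_accessory_sum}, with the one obtained from the local degree $p+1$ of $(L\circ f)(1/\zeta)$ at $\zeta=0$ via~\eqref{eqn::Schwarz_local_degree_expansion} and the chain rule~\eqref{eqn::Schwarz_chain_rule}; your extra bookkeeping (the factor $\zeta^4$ and the $p=0$ case) is accurate and only makes explicit what the paper leaves implicit.
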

\begin{proof}
Eqs.~\eqref{eqn::Schwarz_rational_explicit} and~\eqref{eqn::Schwarz_rational_accessory1} are proved in Lemma~\ref{lem::Schwarz_rational_accessory1}; it remains to prove~\eqref{eqn::Schwarz_rational_accessory3}.
\begin{itemize}
    \item On the one hand, choose a representative of $f$ and a M\"obius transformation $L$ as in~\eqref{eqn::rational_infinity_local_degree}, and set
\[
    h(\zeta):=(L\circ f)(1/\zeta).
\]
The local degree of $h$ at zero is $p+1$. Hence the local Schwarzian expansion~\eqref{eqn::Schwarz_local_degree_expansion} gives
\[
    S_h(\zeta)
    =\frac{1-(p+1)^2}{2\zeta^2}+O(\zeta^{-1}).
\]
The Schwarzian chain rule~\eqref{eqn::Schwarz_chain_rule} gives
\begin{equation}\label{eqn::Schwarz_rational_expansion_at_infinity}
    S_f(z)
    =-\frac{p(p+2)}{2z^2}+O(z^{-3}),
    \qquad \text{as }z\to\infty.
\end{equation}
\item On the other hand, expanding~\eqref{eqn::Schwarz_rational_explicit} at infinity and using Lemma~\ref{lem::rational_Schwarz_accessory_sum}, we obtain
\[
    S_f(z)
    =\frac{1}{2z^2}
      \left(\sum_{k=1}^n u_kz_k-3n\right)
      +O(z^{-3}).
\]
\end{itemize}
Comparing these two expansions proves~\eqref{eqn::Schwarz_rational_accessory3}.
\end{proof}

\begin{lemma}\label{lem::Schwarz_rational_accessory_refined_reverse}
Fix $n\geq 1$. Fix distinct points $z_1,\ldots,z_n\in\C$ and denote $\bs z=(z_1,\ldots,z_n)$.
Suppose that $(u_1,\ldots,u_n)\in\C^n$ satisfy~\eqref{eqn::Schwarz_rational_accessory1}.
Then there exists a unique equivalence class $f\in\Rat(\bs z)$ whose Schwarzian derivative is given by~\eqref{eqn::Schwarz_rational_explicit}.
\end{lemma}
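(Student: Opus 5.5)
We mirror the proof of Lemma~\ref{lem::Schwarz_quasi_exp_expansion_refined_reverse} with $a=0$, where the irregular point at infinity becomes a regular singular point. Let $\psi_{\bs u}$ be given by \eqref{eqn::accessory_ODE_coefficient} with $a=0$, and consider $w''+\psi_{\bs u}w=0$.

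\emph{Local analysis at the finite points.} By \eqref{eqn::Schwarz_rational_accessory1} and Lemma~\ref{lem::Schwarz_ODE_critical}, each $z_j$ carries a log-free Frobenius basis with exponents $-1/2$ and $3/2$. Gauging by $y=\sqrt{R}\,w$ with $R$ as in \eqref{eqn::quasi_exp_R_def}, we obtain \eqref{eqn::quasi_exp_gauged_ODE} with $a=0$. That equation has exponents $0$ and $2$ at each $z_j$ and no logarithms, so all its solutions are entire.

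\emph{Analysis at infinity.} By Lemma~\ref{lem::rational_Schwarz_accessory_sum} we have $\sum_k u_k=0$. Hence $\sum_k v_k=0$, because $\sum_k A_k=0$. The coefficient of $y$ is then $\frac14\sum_k v_k/(z-z_k)=\frac{c}{4z^2}+O(z^{-3})$, where $c=\sum_k v_kz_k$. The coefficient of $y'$ is $-n/z+O(z^{-2})$. Thus $\infty$ is a regular singular point of \eqref{eqn::quasi_exp_gauged_ODE}.

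In the variable $\zeta=1/z$, the indicial equation for $y\sim z^{\rho}$ is $\rho(\rho-1)-n\rho+c/4=0$. The two roots sum to $n+1$ and are not yet known to be integers. To see that they are, take any nonzero entire solution $y$. The monodromy around $\infty$ equals the product of the monodromies around the finite points, which is trivial. So every solution is single-valued near $\infty$. A single-valued solution near a regular singular point has at most a pole in $\zeta$, which means it grows polynomially in $z$. By Liouville it is therefore a polynomial. Consequently the whole solution space consists of polynomials, say spanned by $P$ and $Q$.

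\emph{Reconstructing $f$.} Set $w_1=P/\sqrt{R}$ and $w_2=Q/\sqrt{R}$ and put $f=P/Q$. These are independent solutions of $w''+\psi_{\bs u}w=0$, so their Wronskian is a nonzero constant, which gives $\operatorname{Wr}(Q,P)=cR$. Because $R$ has simple zeros, $P$ and $Q$ are coprime, so $\mathcal W(f)=cR$ and $f\in\Rat(\bs z)$. By Lemma~\ref{lem::Schwarz_ODE}, $S_f=2\psi_{\bs u}$, which is \eqref{eqn::Schwarz_rational_explicit}.

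\emph{Uniqueness.} If $g$ has the same Schwarzian derivative, then $g=L\circ f$ for some Möbius transformation $L$, by the converse noted after \eqref{eqn::Schwarz_chain_rule}. Hence $g$ lies in the same equivalence class.

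\emph{Main obstacle.} The delicate step is the behaviour at infinity: showing that the solutions are single-valued there and have only polynomial growth. The monodromy-product argument combined with regularity of the singular point should handle it. Alternatively, the indicial exponents can be computed directly from Lemma~\ref{lem::Schwarz_rational_accessory3}. A resonance could a priori produce a logarithm, but trivial global monodromy excludes it.
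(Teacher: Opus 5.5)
Your proposal is correct and follows the paper's proof essentially step for step: you gauge by $\sqrt{R}$ to get entire solutions, use $\sum_k u_k=0$ to make $\infty$ a regular singular point, and conclude that single-valued solutions of moderate growth there are polynomials; you then reconstruct $f=P/Q$ via the Wronskian and Lemma~\ref{lem::Schwarz_ODE}, with uniqueness from M\"obius rigidity of the Schwarzian. One caveat: your parenthetical alternative of reading off the exponents at $\infty$ from Lemma~\ref{lem::Schwarz_rational_accessory3} would be circular, since that lemma presupposes a class $f\in\Rat_p(\bs z)$, but your main argument does not rely on it.
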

\begin{proof}
    We follow the proof of Lemma~\ref{lem::Schwarz_quasi_exp_expansion_refined_reverse}. Set
\[
    R(z):=\prod_{k=1}^n(z-z_k),
    \qquad
    A_j:=\sum_{\ell\ne j}\frac{1}{z_\ell-z_j},
    \qquad
    v_j:=u_j-2A_j,
\]
and
\begin{equation}\label{eqn::rational_accessory_ODE_coefficient}
    \psi_{\bs u}(z)
    :=-\frac34\sum_{k=1}^n\frac{1}{(z-z_k)^2}
      +\frac{1}{4}\sum_{k=1}^n\frac{u_k}{z-z_k}.
\end{equation}
The same finite-singularity argument, with $a=0$, shows that the gauge transformation $y=\sqrt R\,w$ turns
\[
    w''(z)+\psi_{\bs u}(z)w(z)=0
\]
into
\begin{equation}\label{eqn::rational_gauged_ODE}
    y''(z)-\frac{R'(z)}{R(z)}y'(z)
    +\frac14\sum_{k=1}^n\frac{v_k}{z-z_k}y(z)=0,
\end{equation}
and every solution of~\eqref{eqn::rational_gauged_ODE} extends to a single-valued entire function, with local exponents $0$ and $2$ at each $z_j$.
\medbreak
Next we analyze the singularity of $\infty$ for~\eqref{eqn::rational_gauged_ODE}. By Lemma~\ref{lem::rational_Schwarz_accessory_sum} and the identity $\sum_{k=1}^n A_k=0$, we have $\sum_{k=1}^n v_k=0$. Therefore,
\[
    -\frac{R'(z)}{R(z)}=-\frac nz+O(z^{-2}),
    \qquad
    \frac14\sum_{k=1}^n\frac{v_k}{z-z_k}=O(z^{-2}),
    \qquad \text{as }z\to\infty.
\]
It follows from~\eqref{eqn::regular_singular_at_infinity} that $\infty$ is a regular singular point of~\eqref{eqn::rational_gauged_ODE}. Since all solutions are single-valued, their monodromy at $\infty$ is trivial, and thus every solution is meromorphic at $\infty$. Being entire, every solution is thus a polynomial.

Choose a polynomial basis $P,Q\in\C[z]$ of solutions of~\eqref{eqn::rational_gauged_ODE}, and set
\[
w_+(z):= \frac{P(z)}{\sqrt{R(z)}}, \qquad w_-(z):= \frac{Q(z)}{\sqrt{R(z)}}.
\]
They are two linearly independent solutions of~\eqref{eqn::ODE_general} with $\psi=\psi_{\bs u}$. Following the same argument as in Lemma~\ref{lem::Schwarz_quasi_exp_expansion_refined_reverse}, the function $f:=P/Q$ belongs to $\Rat(\bs z)$, and Lemma~\ref{lem::Schwarz_ODE} gives $S_f=2\psi_{\bs u}$, which is~\eqref{eqn::Schwarz_rational_explicit}. The same Schwarzian--ODE correspondence shows that any other function with this Schwarzian is the ratio of another basis of solutions, and hence differs from $f$ by a M\"obius transformation. This proves uniqueness of the equivalence class.
\end{proof}

\subsection{Classification of rational functions}
In the following lemma, we collect results on the classification of rational functions from~\cite{MukhinTarasovVarchenkoSchubertCalculus}.
As in the quasi-exponential case, we need both an exact count and smooth real continuation as the critical points vary.

\begin{lemma}\label{lem::rational_accessory_counting}
Fix $n\ge 1$ and $\bs{x}\in\LX_n$.
For $p\in\mathsf{P}_n$, the number of equivalence classes of rational functions in $\Rat_p(\bs{x})$ is
\[\#\Rat_p(\bs{x})=\binom{n}{\frac{1}{2}(n-p)}-\binom{n}{\frac{1}{2}(n-p)-1},\]
where $\binom{n}{-1}:=0$. Consequently, the number of equivalence classes of rational functions in $\Rat(\bs{x})$ is
\[\#\Rat(\bs{x})=\sum_{p\in\mathsf{P}_n}\left[\binom{n}{\frac{1}{2}(n-p)}-\binom{n}{\frac{1}{2}(n-p)-1}\right]=\binom{n}{\lfloor n/2\rfloor}.\]
Moreover, as $\bs x$ varies over $\LX_n$, these equivalence classes of rational functions form $\binom{n}{\lfloor n/2\rfloor}$ global branches. Along each branch, every equivalence class in $\Rat(\bs{x})$ has a representative $f=P/Q$ with real coefficients, and the coefficients of $P$ and $Q$ are smooth functions of $\bs{x}$.
\end{lemma}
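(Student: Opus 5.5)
The plan mirrors the proof of Lemma~\ref{lem::QExprax_enumeration}, with the quasi-exponential Schubert calculus replaced by the classical (polynomial) Schubert calculus of~\cite{MukhinTarasovVarchenkoSchubertCalculus}.

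\textbf{Step 1: Translation into Schubert calculus.} To $f=P/Q\in\Rat_p(\bs x)$ I associate the two-dimensional space of polynomials $X_f:=\operatorname{span}_{\C}\{P,Q\}$. This is a point of the Grassmannian $\mathrm{Gr}(2,\C_n[z])$ of planes in polynomials of degree at most $n+1$; indeed $\deg(P)+\deg(Q)-1=n$ forces the larger degree to be $(n+p+2)/2\le n+1$. Changing the basis of $X_f$ is exactly post-composition by a M\"obius map. The ramification data translate as follows.
\begin{itemize}
\item The simple critical point $x_j$ means that $X_f$ lies in the Schubert cycle $\Omega_{(1,0)}(x_j)$ at $x_j$.
\item The ramification multiplicity $p$ at infinity means that the degree set of $X_f$ is $\{(n-p)/2,(n+p+2)/2\}$, i.e.\ $X_f$ lies in the open Schubert cell at $\infty$ indexed by the partition $(\tfrac{n+p}{2},\tfrac{n-p}{2})$ relative to the flag at infinity.
\end{itemize}
Hence $\Rat_p(\bs x)$ is identified with an intersection of $n$ codimension-one Schubert cycles at the distinct real points $x_1,\ldots,x_n$ with one Schubert cell at $\infty$.

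\textbf{Step 2: Counting and reality.} The main theorem of~\cite{MukhinTarasovVarchenkoSchubertCalculus} (the Shapiro conjecture, with transversality for real osculation points) says this intersection is transversal, consists of real points, and has cardinality equal to the Littlewood--Richardson/Kostka multiplicity. This multiplicity equals the multiplicity of the $\mathfrak{sl}_2$ irreducible of highest weight $p$ in $(\C^2)^{\otimes n}$, namely $\binom{n}{(n-p)/2}-\binom{n}{(n-p)/2-1}$. The step I expect to require the most care is matching the cited result's indexing (cell at infinity versus cycle, weight conventions) to our $p$. As a check, summing over $p\in\mathsf P_n$ telescopes to the number of weight-$0$ or weight-$1$ vectors, namely $\binom{n}{\lfloor n/2\rfloor}$.

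Reality of the point $X_f$ means $\sigma(X_f)=X_f$ with $\sigma g(z)=\overline{g(\overline z)}$. Then $X_f$ has a real basis: the subspace of minimal degree is $\sigma$-stable, so a monic generator of it is real, and a complement can be chosen real as well. This gives a representative $f=P/Q$ with real coefficients.

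\textbf{Step 3: Smooth dependence.} I would repeat the inverse-function argument. On the cell (coordinatized by the non-leading coefficients of normalized $P,Q$, where normalization means reduced echelon form, so that the representative is unique), the Wronski map to monic degree-$n$ polynomials has invertible differential at every preimage of $R_{\bs x}$, by transversality and the invertibility of evaluation at distinct points.

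\textbf{Step 4: Globalization.} The incidence variety over $\LX_n$ for each fixed $p$ is then a finite covering with constant fiber cardinality. It is a covering because the projection is a local diffeomorphism with constant finite fibers. Since $\LX_n$ is contractible, the covering is trivial, which yields $\binom{n}{\lfloor n/2\rfloor}$ global smooth sections. The normalized coefficients are real on each of them.
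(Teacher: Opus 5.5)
Your proposal is correct and follows the paper's proof essentially step for step. Both arguments identify $\Rat_p(\bs x)$ with a Schubert intersection at the real points $x_k$ and at $\infty$, take transversality, reality and the ballot-number count from~\cite{MukhinTarasovVarchenkoSchubertCalculus}, obtain local smoothness from the inverse function theorem via evaluation at distinct points, and globalize using a constant-cardinality covering of the contractible chamber $\LX_n$. The only difference is cosmetic: you index the condition at infinity by $(\tfrac{n+p}{2},\tfrac{n-p}{2})$ in the Grassmannian of planes of polynomials of degree at most $n+1$ (your $\mathrm{Gr}(2,\C_n[z])$ should carry that degree bound), while the paper uses the index $\bar\chi=(p,0)$; both describe the same Schubert problem with the same count.
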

\begin{proof}
We follow the proof of Lemma~\ref{lem::QExprax_enumeration} and Proposition~\ref{prop::quasi_exp_enumeration}. This is essentially a restatement of~\cite[Corollary~6.3(ii)]{MukhinTarasovVarchenkoSchubertCalculus} in our notation. Let
\[
    \chi=\left(\frac{n+p}{2},\frac{n-p}{2}\right),
    \qquad \bar\chi=(p,0).
\]
The Wronskian condition identifies $\Rat_p(\bs{x})$ with the $N=2$ Schubert intersection
\[
    \Rat_p(\bs{x})\simeq\Omega_{\Lambda,\bar\chi,\bs{x}}
    :=\Omega_{\bar\chi}(\infty)
      \cap\bigcap_{k=1}^n\Omega_{(1,0)}(x_k),
    \qquad
    \Lambda=((1,0),\ldots,(1,0)),
\]
in the notation of~\cite[Sections~3.1, 3.4, and~4.1]{MukhinTarasovVarchenkoSchubertCalculus}. Since $x_1,\ldots,x_n$ are distinct and real,~\cite[Corollary~6.3(ii)]{MukhinTarasovVarchenkoSchubertCalculus} states that this intersection is transversal and consists of
\[
    \dim\left(V^{\otimes n}\right)^{\mathrm{sing}}_{\chi} = \binom{n}{\frac{1}{2}(n-p)}-\binom{n}{\frac{1}{2}(n-p)-1}
\]
distinct real points.
Here $V=\C^2$ is the standard representation of $\mathfrak{gl}_2$, and $(V^{\otimes n})^{\mathrm{sing}}_{\chi}$ is the subspace of highest-weight vectors of weight $\chi$.
This proves the counting and reality statements. Local smoothness follows from transversality and the inverse function theorem as in Lemma~\ref{lem::QExprax_enumeration}, and the global branches follow from simple connectedness of $\LX_n$ as in Proposition~\ref{prop::quasi_exp_enumeration}.
To choose smooth polynomial representatives, take the unique monic polynomial $P$ of degree $(n-p)/2$ in each space and the unique monic polynomial $Q$ of degree $(n+p+2)/2$ whose coefficient of $z^{(n-p)/2}$ is zero. These are the normalized basis coordinates of the Schubert cell.
\end{proof}

\subsection{Exactness and proof of Theorem~\ref{thm::BPZ_solutions_zero}}
\label{subsec::rational_poles}
In the proof of Proposition~\ref{prop::quasi_exp_accessory_exact}, the $2^n$ distinct solutions of the accessory equations attain the B\'ezout bound. Thus, all these solutions are simple and the Jacobian with respect to the accessory parameters is invertible.
When $\lambda=0$, solutions of the accessory equations need not be simple, so this Jacobian argument is not available in general.
Instead, following Proposition~\ref{prop::quasi_exp_LU_poles}, we construct an explicit potential from the corresponding rational function. Identifying its derivatives with the accessory parameters proves exactness and completes the proof of Theorem~\ref{thm::BPZ_solutions_zero}.

\begin{proposition}\label{prop::rational_LU_poles}
Fix $\lambda=0$ and $n\ge1$. Fix $p\in\mathsf P_n$.
For $\bs{x}=(x_1,\ldots,x_n)\in\LX_n$, let $f=P/Q$ represent a class in $\Rat_p(\bs{x})$, where $P,Q$ are real monic coprime polynomials with
\begin{equation}\label{eqn::rational_accessory_exact_degrees}
    \deg P=\mathfrak{p}:=\frac{n-p}{2},
    \qquad
    \deg Q=\mathfrak{q}:=\frac{n+p+2}{2}.
\end{equation}
Denote the zeros of $Q$ by $\zeta_1,\ldots,\zeta_{\mathfrak{q}}$, counted with multiplicity. Define
\begin{equation}\label{eqn::LU_f_rational_def}
    \LU_f(\bs{x})
    :=-2\sum_{1\le j<k\le n}\log(x_k-x_j)
       +4\sum_{m=1}^{\mathfrak{q}}\log|P(\zeta_m)|.
\end{equation}
Then $\LU_f$ is independent of the normalized representative, it is smooth and real-valued on $\LX_n$, and satisfies
\begin{equation}\label{eqn::rational_potential_gradient}
    \partial_j\LU_f(\bs{x})=u_j(\bs{x}),
    \qquad j\in\{1,\ldots,n\},
\end{equation}
where $(u_1,\ldots,u_n)$ are the accessory parameters of $f$ in~\eqref{eqn::Schwarz_rational_explicit}.
Consequently, $[\LU_f]\in\LSchord{n}^{(0)}$, and the accessory parameters are exact:
\begin{equation}\label{eqn::rational_accessory_exact}
    \partial_j u_i(\bs{x})=\partial_i u_j(\bs{x}),\qquad i\ne j.
\end{equation}
\end{proposition}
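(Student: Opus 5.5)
We follow the proof of Proposition~\ref{prop::quasi_exp_LU_poles} with $a=0$. First, well-definedness and smoothness: by Lemma~\ref{lem::rational_accessory_counting}, along each global branch the normalized representative $f=P/Q$ has real coefficients depending smoothly on $\bs x$. The normalization (monic $P$ of degree $\mathfrak p$, monic $Q$ of degree $\mathfrak q$) is the only freedom left once $\deg P<\deg Q$, apart from replacing $Q$ by $Q+cP$. But since $\mathfrak q>\mathfrak p$, this replacement keeps $Q$ monic and does not change $\prod_m P(\zeta_m)=\pm\operatorname{Res}(P,Q)$. Indeed, $\operatorname{Res}(Q+cP,P)=\operatorname{Res}(Q,P)$ by the standard invariance of resultants under adding multiples of the lower-degree polynomial. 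Hence $\LU_f$ is independent of the normalized representative. The product $\prod_m P(\zeta_m)$ is symmetric in the zeros of $Q$, so it is a polynomial in the coefficients; it is real and nonzero by coprimality. Therefore $\LU_f$ is smooth and real on $\LX_n$.

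Second, the generic computation. We work on the locus where $Q$ has simple zeros distinct from the $x_k$. There $f'=cR/Q^2$ with $c\neq0$, and $\operatorname{Res}_{\zeta_j}f'=0$ yields the stationary relations
\[
\sum_{k=1}^n\frac{1}{\zeta_j-x_k}-2\sum_{m\ne j}\frac{1}{\zeta_j-\zeta_m}=0 .
\]
Expanding $f''/f'=\sum_k (z-x_k)^{-1}-2\sum_m(z-\zeta_m)^{-1}$ at $x_j$ and using~\eqref{eqn::Schwarz_local_degree_expansion} gives
\[
u_j=2\sum_{\ell\ne j}\frac{1}{x_\ell-x_j}-4\sum_m\frac{1}{\zeta_m-x_j}.
\]
Now define
\[
\widehat\LU_f=-2\sum_{j<k}\log(x_k-x_j)+4\sum_{k,m}\log|x_k-\zeta_m|-8\sum_{\ell<m}\log|\zeta_\ell-\zeta_m|.
\]
Differentiating, the $\partial_j\zeta_m$ terms vanish by the stationary relations, so $\partial_j\widehat\LU_f=u_j$. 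Evaluating the Wronskian $P'Q-PQ'=cR$ at $\zeta_j$ gives $R(\zeta_j)=-P(\zeta_j)Q'(\zeta_j)/c$. Taking the product over $j$ shows $\LU_f-\widehat\LU_f=4\mathfrak q\log|c|$. Since $P,Q$ are monic of fixed degrees, $c$ is fixed by leading coefficients: $c=\mathfrak p-\mathfrak q=-(p+1)$, which is constant. Hence $\partial_j\LU_f=u_j$ on this locus.

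Third, density and conclusion. The bad locus is where $Q$ has a repeated zero; a zero shared with $R$ forces one, since if $Q(\zeta)=R(\zeta)=0$ then $P(\zeta)Q'(\zeta)=0$ and $P(\zeta)\ne0$. Having a repeated zero is a proper algebraic condition (vanishing discriminant) on the coefficients of $Q$. By the local diffeomorphism from transversality in Lemma~\ref{lem::rational_accessory_counting}, its preimage has empty interior in $\LX_n$. I expect this step to be the main obstacle: one must check that the discriminant is not identically zero on the relevant Schubert cell. I would argue this by exhibiting, or invoking as in \cite[Lemma~8.1]{MukhinTarasovVarchenkoSpacesQuasiExponentials}, a point of the cell with simple poles. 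Alternatively, one can note that the discriminant is a nonzero real-analytic function of $\bs x$ along each branch, because it is nonzero when $\bs x$ is perturbed so that the poles separate. By continuity, $\partial_j\LU_f=u_j$ then holds on all of $\LX_n$. Exactness~\eqref{eqn::rational_accessory_exact} follows from the symmetry of second derivatives, and~\eqref{eqn::Schwarz_rational_accessory1} shows $[\LU_f]\in\LSchord{n}^{(0)}$.
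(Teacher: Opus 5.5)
Your proof is correct. The first two steps match the paper's: resultant invariance under $Q\mapsto Q+cP$, the stationary relations, the formula for $u_j$, and $\LU_f=\widehat{\LU}_f+4\mathfrak q\log(p+1)$ with $c=-(p+1)$. The difference is in the last step.

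You carry over the density-plus-continuity argument from Proposition~\ref{prop::quasi_exp_LU_poles}. This requires the discriminant of the normalized $Q$ not to vanish identically on the real Schubert cell. That check is immediate with the normalization of Lemma~\ref{lem::rational_accessory_counting} (coefficient of $z^{\mathfrak p}$ equal to zero): take $Q=z^{\mathfrak q}-1$ if $\mathfrak p\ge1$, and $Q=z^{\mathfrak q}-z$ if $\mathfrak p=0$. Your ``alternatively'' argument, that the discriminant is nonzero once the poles separate, is circular and should be dropped.

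The paper avoids density altogether by using the representative-independence you proved in your first step. At a given $\bs x_0$, replace $Q$ by $Q+tP$ with a suitable real $t$, for instance one for which $-1/t$ is not a critical value of $f$. Then $Q+tP$ has simple zeros on a neighborhood of $\bs x_0$, and none of them equals any $x_k$, by the Wronskian identity. The lemma therefore applies on that neighborhood to the representative $P/(Q+tP)$, which has the same $\LU_f$ and the same accessory parameters $u_j$. This gives $\partial_j\LU_f=u_j$ locally at every point, with no density, no continuity extension, and no local diffeomorphism onto the real cell.

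Your route is uniform with the quasi-exponential case, where this M\"obius freedom is unavailable: $\ee^{-az}Q+t\ee^{az}P$ is not of the form $\ee^{-az}$ times a polynomial. The paper's route is more elementary and uses exactly the freedom you had already shown leaves $\LU_f$ unchanged.
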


As in Section~\ref{subsec::pole}, we first derive the stationary relations and an explicit potential when $Q$ has only simple zeros.
Here, however, we can always arrange this condition locally by replacing $Q$ with $Q+tP$ to split the multiple zeros into simple zeros, without changing the equivalence class or the potential.

\begin{lemma}\label{lem::rational_Schwarz_accessory_determination}
Assume the same setup as in Proposition~\ref{prop::rational_LU_poles}.
Assume in addition that $Q$ has only simple zeros $\zeta_1,\ldots,\zeta_{\mathfrak{q}}$, none of which belongs to $\{x_1,\ldots,x_n\}$.
\begin{itemize}
\item The poles satisfy the stationary relations
\begin{equation}\label{eqn::rational_pole_equations}
    \sum_{k=1}^n\frac{1}{\zeta_j-x_k}
    -2\sum_{m\ne j}\frac{1}{\zeta_j-\zeta_m}=0,
    \qquad \text{for }j\in\{1,\ldots,\mathfrak{q}\}.
\end{equation}
\item The accessory parameters in~\eqref{eqn::Schwarz_rational_explicit} are given by
\begin{equation}\label{eqn::rational_accessory_pole_expansion}
    u_j=2\sum_{\ell\ne j}\frac{1}{x_\ell-x_j}
         -4\sum_{m=1}^{\mathfrak{q}}\frac{1}{\zeta_m-x_j},
    \qquad \text{for }j\in\{1,\ldots,n\}.
\end{equation}
\item Define
\begin{equation}\label{eqn::rational_pole_potential}
    \begin{aligned}
        \widehat{\LU}_f(\bs{x})
        ={}&-2\sum_{1\le j<k\le n}\log(x_k-x_j)
            +4\sum_{k=1}^n\sum_{m=1}^{\mathfrak{q}}\log|x_k-\zeta_m|-8\sum_{1\le\ell<m\le\mathfrak{q}}\log|\zeta_\ell-\zeta_m|.
    \end{aligned}
\end{equation}
Then $\partial_j\widehat{\LU}_f(\bs{x})=u_j$ for all $j\in\{1,\ldots,n\}$.
\end{itemize}
\end{lemma}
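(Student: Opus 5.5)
I will mirror the proof of Lemma~\ref{lem::quasi_exp_Schwarz_accessory_determination} with $a=0$ throughout. There are three steps, and only the last needs care.

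For the stationary relations~\eqref{eqn::rational_pole_equations}, I will use the Wronskian identity $f'=\mathcal W(f)/Q^2$. Because $P,Q$ are monic, the constant $c$ in~\eqref{eqn::rational_prescribed_Wronskian} is fixed, namely $c=\mathfrak p-\mathfrak q=-(p+1)\neq0$. This gives
\[
f'(z)=c\,\frac{\prod_{k=1}^n(z-x_k)}{\prod_{m=1}^{\mathfrak q}(z-\zeta_m)^2}.
\]
Next I compute $\operatorname{Res}_{z=\zeta_j}f'$. It is the derivative at $\zeta_j$ of the function $c\prod_k(z-x_k)/\prod_{m\ne j}(z-\zeta_m)^2$. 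That function is nonzero at $\zeta_j$, because $\zeta_j\notin\{x_k\}$ and the zeros of $Q$ are simple. So the residue is that nonzero value times the logarithmic derivative
\[
\sum_k\frac{1}{\zeta_j-x_k}-2\sum_{m\ne j}\frac{1}{\zeta_j-\zeta_m}.
\]
A derivative has zero residues, so this bracket vanishes.

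For the accessory parameters~\eqref{eqn::rational_accessory_pole_expansion}, I write
\[
\frac{f''}{f'}=\sum_k\frac{1}{z-x_k}-2\sum_m\frac{1}{z-\zeta_m}.
\]
Near $x_j$ this is $(z-x_j)^{-1}+H_j(z)$, with
\[
H_j(x_j)=-\sum_{\ell\ne j}\frac{1}{x_\ell-x_j}+2\sum_m\frac{1}{\zeta_m-x_j}.
\]
Expansion~\eqref{eqn::Schwarz_local_degree_expansion} with $\tau=2$ shows that the residue of $S_f$ at $x_j$ is $-H_j(x_j)$. Comparing with~\eqref{eqn::Schwarz_rational_explicit} gives $u_j/2=-H_j(x_j)$, which is exactly~\eqref{eqn::rational_accessory_pole_expansion}.

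For the gradient, I first need differentiability of the $\zeta_m$. By Lemma~\ref{lem::rational_accessory_counting}, along the branch the coefficients of $P,Q$ depend smoothly on $\bs x$. Simple zeros of a polynomial depend smoothly on its coefficients (implicit function theorem), so each $\zeta_m$ is locally smooth in $\bs x$. Non-real zeros come in conjugate pairs because $Q$ is real. Differentiating~\eqref{eqn::rational_pole_potential} with $\ud\log|w|=\operatorname{Re}(\ud w/w)$ gives
\[
\partial_j\widehat{\LU}_f=2\sum_{\ell\neq j}\frac{1}{x_\ell-x_j}-4\operatorname{Re}\sum_m\frac{1}{\zeta_m-x_j}+4\operatorname{Re}\sum_m E_m\,\partial_j\zeta_m .
\]
Here $E_m$ is the left side of~\eqref{eqn::rational_pole_equations} at $\zeta_m$. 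The bookkeeping behind the coefficient $E_m$ is the main point to check. The term $4\log|x_k-\zeta_m|$ contributes $4\operatorname{Re}\,\partial_j\zeta_m/(\zeta_m-x_k)$. The double sum $-8\sum_{\ell<m}\log|\zeta_\ell-\zeta_m|$ contributes $-8\operatorname{Re}\sum_{\ell\ne m}\partial_j\zeta_m/(\zeta_m-\zeta_\ell)$, after symmetrizing over the pair. Both the coefficient $4$ and the factor $-2$ inside $E_m$ must come out matching~\eqref{eqn::rational_pole_equations}.

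With that, the stationary relations kill the last sum. The conjugation symmetry of the $\zeta_m$ makes $\sum_m(\zeta_m-x_j)^{-1}$ real, so the $\operatorname{Re}$ can be dropped. The remaining expression equals~\eqref{eqn::rational_accessory_pole_expansion}, so $\partial_j\widehat{\LU}_f=u_j$.
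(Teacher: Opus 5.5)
Your proposal is correct and follows the paper's proof essentially step for step. Both arguments use the Wronskian formula $f'=-(p+1)\prod_k(z-x_k)/\prod_m(z-\zeta_m)^2$ and the vanishing residues of $f'$ at the double poles $\zeta_j$. Both read off $u_j=-2H_j(x_j)$ from the local expansion of $S_f$ at $x_j$, and both differentiate $\widehat{\LU}_f$ so that the stationary relations cancel the $\partial_j\zeta_m$ terms. Your extra details make explicit what the paper leaves implicit: the computation $c=\mathfrak p-\mathfrak q$ from monicity, and the smoothness of the simple zeros $\zeta_m$ via the implicit function theorem, for a representative $(P,Q)$ chosen smoothly in $\bs x$.
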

\begin{proof}
By~\eqref{eqn::rational_prescribed_Wronskian} and the prescribed degrees of the monic polynomials $P,Q$, we have
\[
    f'(z)=-(p+1)\frac{\prod_{k=1}^n(z-x_k)}{\prod_{m=1}^{\mathfrak{q}}(z-\zeta_m)^2}.
\]
The residue and Schwarzian calculations in the proof of Lemma~\ref{lem::quasi_exp_Schwarz_accessory_determination} apply with $2a\ee^{2az}$ replaced by the nonzero constant $-(p+1)$, giving~\eqref{eqn::rational_pole_equations} and~\eqref{eqn::rational_accessory_pole_expansion}.
The differentiation of $\widehat{\LU}_f$ is the same as in that proof with $a=0$: the stationary relations cancel all terms involving derivatives of the poles, leaving $\partial_j\widehat{\LU}_f=u_j$.
\end{proof}

\begin{proof}[Proof of Proposition~\ref{prop::rational_LU_poles}]
Choose smooth real monic representatives $P,Q$ along the branch through $f$, with degrees given by~\eqref{eqn::rational_accessory_exact_degrees}, and let $\LU_f$ be defined by~\eqref{eqn::LU_f_rational_def}. The degrees $\mathfrak{p},\mathfrak{q}$ are constant along this branch.
\begin{itemize}
\item The function $\LU_f$ is well-defined, smooth, and real-valued on all of $\LX_n$.
Any other real monic pair representing the same class and having the prescribed degrees is of the form $(P,Q+tP)$ for some $t\in\R$.
If $\xi_1,\ldots,\xi_{\mathfrak{p}}$ are the zeros of $P$, counted with multiplicity, then
\[
    \prod_{m=1}^{\mathfrak{q}} P(\zeta_m)
    =(-1)^{\mathfrak{p}\mathfrak{q}}\prod_{\ell=1}^{\mathfrak{p}} Q(\xi_\ell).
\]
Since $(Q+tP)(\xi_\ell)=Q(\xi_\ell)$, this product, and hence~\eqref{eqn::LU_f_rational_def}, is unchanged by the choice of normalized representative.
As in the proof of Proposition~\ref{prop::quasi_exp_LU_poles}, the product is a polynomial in the coefficients of $P,Q$ and never vanishes because $P,Q$ are coprime. Thus
\[
    \sum_{m=1}^{\mathfrak{q}}\log|P(\zeta_m)|
    =\log\left|\prod_{m=1}^{\mathfrak{q}} P(\zeta_m)\right|
\]
is smooth even when $Q$ has repeated zeros, proving the first assertion.

\item Under the assumptions of Lemma~\ref{lem::rational_Schwarz_accessory_determination}, the functions $\LU_f$ and $\widehat{\LU}_f$ differ by an additive constant.
Write $\zeta_1, \ldots, \zeta_{\mathfrak{q}}$ for the simple zeros of $Q$, all distinct from the critical points $x_1, \ldots, x_n$.
The Wronskian identity~\eqref{eqn::rational_prescribed_Wronskian} gives
\begin{equation}\label{eqn::rational_normalized_Wronskian}
    R_{\bs{x}}(\zeta_j)
    :=\prod_{k=1}^n(\zeta_j - x_k)=\frac{P(\zeta_j)Q'(\zeta_j)}{p+1}\ne 0, \quad j\in \{1,\ldots,\mathfrak{q}\}.
\end{equation}
Taking absolute values and multiplying over $j$, we obtain
\begin{equation}\label{eqn::rational_Wronskian_at_poles}
    \prod_{k=1}^n\prod_{m=1}^{\mathfrak{q}}|x_k-\zeta_m|
    =(p+1)^{-\mathfrak{q}}\prod_{m=1}^{\mathfrak{q}}|P(\zeta_m)|
      \prod_{1\leq\ell<m\leq\mathfrak{q}}|\zeta_\ell-\zeta_m|^2.
\end{equation}
The same product calculation as in the proof of Proposition~\ref{prop::quasi_exp_LU_poles} yields
\begin{equation}\label{eqn::BPZ_solutions_zero_pole_expansion}
    \LU_f(\bs{x})=\widehat{\LU}_f(\bs{x})+4\mathfrak{q}\log(p+1).
\end{equation}
Thus Lemma~\ref{lem::rational_Schwarz_accessory_determination} gives 
\[
\partial_j\LU_f=\partial_j\widehat{\LU}_f=u_j, \qquad j\in\{1,\ldots,n\}.
\]

\item For each $\bs{x}\in \LX$, we can always choose an appropriate representative $f=P/Q\in \Rat(\bs{x})$ to ensure that the assumptions of Lemma~\ref{lem::rational_Schwarz_accessory_determination} are satisfied. Indeed, the identity~\eqref{eqn::rational_Wronskian_at_poles} shows that the assumption of Lemma~\ref{lem::rational_Schwarz_accessory_determination} fails if and only if $Q$ has multiple zeros. We can always avoid this by replacing $Q$ with $\widetilde Q=Q+tP$ for some real constant $t$.
\end{itemize}
Thus $\partial_j \LU_f = u_j$ for all $\bs{x}\in \LX_n$ and $j\in\{1,\ldots,n\}$. The accessory equations~\eqref{eqn::Schwarz_rational_accessory1} imply that $\LU_f$ satisfies~\eqref{eqn::BPZ_chordal_sclimits} with $\lambda=0$.
\end{proof}

\begin{proof}[Proof of Theorem~\ref{thm::BPZ_solutions_zero}]
Fix $n\ge2$. Lemmas~\ref{lem::Schwarz_rational_accessory1}, \ref{lem::Schwarz_rational_accessory_refined_reverse}, and~\ref{lem::rational_accessory_counting} give $\binom{n}{\lfloor n/2\rfloor}$ distinct global smooth real-valued branches $\bs{u}=(u_1,\ldots,u_n)$ satisfying~\eqref{eqn::Schwarz_rational_accessory1} with $\bs{z}=\bs{x}$.
For each such branch, Proposition~\ref{prop::rational_LU_poles} gives a smooth real-valued potential $\LU_f$ with $\partial_j\LU_f=u_j$ for all $j\in\{1,\ldots,n\}$.
Substituting these identities into~\eqref{eqn::Schwarz_rational_accessory1}, we obtain
\begin{equation*}
    \frac12(\partial_j\LU_f)^2
    -\sum_{\ell\ne j}\left(
        \frac{2\partial_\ell\LU_f}{x_\ell-x_j}
        +\frac{6}{(x_\ell-x_j)^2}
    \right)=0,
    \qquad j\in\{1,\ldots,n\},
\end{equation*}
which is the BPZ system~\eqref{eqn::BPZ_chordal_sclimits} with $\lambda=0$. Thus each branch gives a solution $[\LU_f]\in\LSchord{n}^{(0)}$.
\medbreak
Conversely, let $[\LU]\in\LSchord{n}^{(0)}$ and set
\[
    \bs{u}:=(\partial_1\LU,\ldots,\partial_n\LU).
\]
Since $\LU$ is $C^1$, the map $\bs{u}$ is continuous, and~\eqref{eqn::BPZ_chordal_sclimits} with $\lambda=0$ implies~\eqref{eqn::Schwarz_rational_accessory1}.
By the Schwarzian correspondence and Lemma~\ref{lem::rational_accessory_counting}, this continuous map lies on one of the above branches.
Proposition~\ref{prop::rational_LU_poles} then gives $\partial_j\LU=u_j=\partial_j\LU_f$.
The connectedness of $\LX_n$ implies that $\LU-\LU_f$ is constant. Different branches have different gradients and hence yield distinct solution classes.
Thus these branches are in one-to-one correspondence with $\LSchord{n}^{(0)}$, and
\[
    \#\LSchord{n}^{(0)}=\binom{n}{\lfloor n/2\rfloor}.
\]
Every solution $\LU$ is smooth, since it differs from the corresponding smooth potential $\LU_f$ by a constant.
\end{proof}

\begin{remark}\label{rem::Rat0_PW}
Fix $n=2N$ even and $p=0$. Fix $\bs{x}=(x_1, \ldots, x_{2N})\in\LX_{2N}$. The number of equivalence classes in $\Rat_0(\bs{x})$ is the $N$-th Catalan number:
\begin{align*}
\#\Rat_0(\bs{x})=\binom{2N}{N}-\binom{2N}{N-1}=\frac{1}{N+1}\binom{2N}{N}. 
\end{align*} 
This conclusion was originally proved in~\cite{EremenkoGabrielov2002Shapiro}. This is equivalent to the renowned Shapiro conjecture in real enumerative geometry, concerning two-dimensional spaces of polynomials.
The connection between $\Rat_0(\bs{x})$ and the BPZ system~\eqref{eqn::BPZ_chordal_sclimits} with $\lambda=0$ was analyzed in~\cite[Section~4]{PeltolaWangSLELDP} and \cite[Section~4]{AlbertsKangMakarovPoleDynamicsMultipleSLE0}.
\end{remark}


\end{document}